\documentclass[11pt]{article}

\usepackage[T1]{fontenc}
\usepackage[utf8]{inputenc}
\usepackage{lmodern}
\usepackage{amsmath,amssymb,amsfonts,amsthm,mathtools}
\usepackage{aliascnt}
\usepackage{bm}
\usepackage{graphicx}
\usepackage{microtype}
\usepackage{enumitem}
\usepackage[backend=biber,style=numeric-comp,sorting=none,maxbibnames=99]{biblatex}
\usepackage[colorlinks=true,linkcolor=blue,citecolor=blue,urlcolor=blue]{hyperref}
\usepackage[nameinlink,noabbrev]{cleveref}

\crefname{theorem}{theorem}{theorems}
\Crefname{theorem}{Theorem}{Theorems}
\crefname{proposition}{proposition}{propositions}
\Crefname{proposition}{Proposition}{Propositions}
\crefname{lemma}{lemma}{lemmas}
\Crefname{lemma}{Lemma}{Lemmas}
\crefname{corollary}{corollary}{corollaries}
\Crefname{corollary}{Corollary}{Corollaries}
\crefname{definition}{definition}{definitions}
\Crefname{definition}{Definition}{Definitions}
\crefname{example}{example}{examples}
\Crefname{example}{Example}{Examples}
\crefname{remark}{remark}{remarks}
\Crefname{remark}{Remark}{Remarks}

\graphicspath{{figures/}}

\newtheorem{theorem}{Theorem}[section]

\newaliascnt{proposition}{theorem}
\newtheorem{proposition}[proposition]{Proposition}
\aliascntresetthe{proposition}

\newaliascnt{lemma}{theorem}
\newtheorem{lemma}[lemma]{Lemma}
\aliascntresetthe{lemma}

\newaliascnt{corollary}{theorem}
\newtheorem{corollary}[corollary]{Corollary}
\aliascntresetthe{corollary}

\theoremstyle{definition}
\newaliascnt{definition}{theorem}
\newtheorem{definition}[definition]{Definition}
\aliascntresetthe{definition}

\newaliascnt{example}{theorem}
\newtheorem{example}[example]{Example}
\aliascntresetthe{example}

\newaliascnt{remark}{theorem}
\newtheorem{remark}[remark]{Remark}
\aliascntresetthe{remark}

\newcommand{\Hil}{\mathcal H}
\newcommand{\Fock}{\mathcal F}
\newcommand{\Proj}{\mathbb P}
\newcommand{\CP}{\mathbb{CP}}
\newcommand{\C}{\mathbb C}
\newcommand{\Zmap}{\mathcal Z}
\newcommand{\Conf}{\mathrm{Conf}}
\newcommand{\Disc}{\mathrm{Disc}}
\newcommand{\Sr}{\mathcal S_r^{\mathrm{reg}}}

\newcommand{\Vieta}{\mathcal V}
\newcommand{\Div}{\operatorname{Div}}
\newcommand{\ket}[1]{|#1\rangle}

\title{Stellar Braid Monodromy of Finite-Rank Non-Gaussian Photonic States}
\author{
	Arnaud Coatanhay and Angélique Drémeau\\
	Lab-STICC, UMR CNRS 6285, ENSTA, Institut Polytechnique de Paris,\\
	2 rue Fran\c{c}ois Verny, 29806 Brest Cedex 9, France\\
	\small \texttt{arnaud.coatanhay@ensta.fr} and \texttt{angelique.dremeau@ensta.fr}
}
\date{\today}

\begin{document}

\maketitle

\begin{abstract}
Finite-rank non-Gaussian bosonic states admit a holomorphic description in the Bargmann representation: after a zero-free Gaussian factor is separated off, their non-Gaussian structure is encoded by a finite stellar divisor. This article introduces a topological refinement of stellar rank for regular parameterized families of such states. Rather than only counting the zeros of the stellar divisor, we follow their motion under deformations of the state and record the associated braid monodromy. In the finite-Fock chart, a regular degree-$r$ stellar state is represented by a monic polynomial with $r$ simple zeros. The regular stratum is biholomorphic to the unordered configuration space $\Conf_r(\C)/S_r$, and its fundamental group is the Artin braid group $B_r$. Thus braid monodromy is an intrinsic invariant of loops in the regular finite-rank stellar state space. We then extend the construction to admissible finite stellar divisors $E_{\tau,\mu}(z)P(z)$; the zero-free Gaussian parameters form a contractible fiber over the same configuration-space base. Experimentally motivated finite-Fock families, especially the cubic subspace spanned by $\ket{0},\ldots,\ket{3}$, provide concrete laboratories, while trinomial slices yield explicit discriminants and local half-twists. The resulting invariant is post-tomographic and applies to preparation loops and parameterized families; it complements Wigner negativity, stellar rank, approximate stellar rank, and other scalar diagnostics of non-Gaussianity.
\end{abstract}

\section{Introduction}

Gaussian states form the baseline geometry of continuous-variable quantum optics. They are determined by first and second moments, transformed by quadratic Hamiltonians, and described naturally by symplectic methods \cite{BraunsteinVanLoock2005,Weedbrook2012}. This structure is powerful, but it also marks the boundary of the Gaussian regime. Universal continuous-variable quantum computation and many genuinely quantum information tasks require non-Gaussian resources \cite{LloydBraunstein1999,Bartlett2002,Walschaers2021}.

The difficulty is that non-Gaussianity is not a single scalar feature. Wigner negativity is an important diagnostic: for pure continuous-variable states, positivity of the Wigner function characterizes Gaussianity, and the negative volume is a standard nonclassicality indicator \cite{Hudson1974,KenfackZyczkowski2004}. Yet two states may have the same coarse non-Gaussian diagnosis while differing in the geometric organization of their higher-order structure. For this reason, it is useful to ask not only how much non-Gaussianity is present, but also how it is arranged and how it changes under controlled deformations.

This question has become experimentally relevant. Finite superpositions of traveling-field photon-number states can be engineered by conditional operations \cite{Dakna1999}, and arbitrary optical superpositions up to three photons have been generated and reconstructed tomographically \cite{Yukawa2013}. More recent work addresses optical states of higher stellar rank and practical certification requirements \cite{Provaznik2025Optical,Provaznik2026Witnesses}. In another bosonic platform, arbitrary superpositions of nonclassical harmonic-oscillator states, including squeezed, trisqueezed, and quadsqueezed components, have been demonstrated \cite{Saner2026}. Since continuous-variable tomography can reconstruct finite-Fock amplitudes \cite{LvovskyRaymer2009}, invariants that are computable from reconstructed state families are natural objects to consider.

The Bargmann, or stellar, representation provides the entry point. A pure single-mode state is represented by a holomorphic function \cite{Bargmann1961}. Gaussian pure states correspond to zero-free Bargmann representatives, whereas finite-rank non-Gaussian states carry a finite divisor of stellar zeros. The stellar rank counts these zeros and gives a hierarchy of non-Gaussian states \cite{Chabaud2020}. Recent developments show that this hierarchy is becoming operational: approximate stellar rank constrains approximate Gaussian conversion \cite{Hahn2026}, witness-based approaches connect stellar rank with measurable criteria \cite{Chabaud2021Certification,Provaznik2026Witnesses}, and control parameters have been proposed to go beyond stellar rank in scalable optical generation \cite{Hanamura2025Beyond}. Related invariants such as symplectic rank emphasize the importance of Gaussian-compatible structure in non-Gaussian resource theory \cite{Mele2026Symplectic}.

The present article takes a complementary route. It does not introduce another scalar measure of non-Gaussianity. Instead, it assigns a topological invariant to regular families of finite-rank stellar states. When a state depends continuously on preparation parameters, its stellar zeros move in the complex plane. Along a closed parameter loop, the unordered divisor returns to itself, but the individual zeros can undergo a nontrivial exchange. Stellar rank counts the zeros; braid monodromy records their transport inside the fixed-rank stratum.

The basic mathematical mechanism is simple and rigid. In the finite-Fock chart, a degree-$r$ state is represented by a monic polynomial of degree $r$. The regular locus, where all zeros are simple, is biholomorphic to the unordered configuration space $\Conf_r(\C)/S_r$. Its fundamental group is the Artin braid group $B_r$ \cite{Artin1947,FadellNeuwirth1962,FoxNeuwirth1962,Birman1974,KasselTuraev2008}. A regular parameterized family therefore defines a braid-monodromy homomorphism from the fundamental group of the parameter space to $B_r$. This is classical root monodromy, but here the moving roots are stellar zeros of non-Gaussian bosonic states.

The same idea extends beyond strict finite-Fock polynomial representatives. We also consider Bargmann representatives of the form $E_{\tau,\mu}P$, where $E_{\tau,\mu}$ is an admissible zero-free Gaussian factor and $P$ is a monic polynomial. This separates the Gaussian part, which carries no zeros, from the finite stellar divisor. The viewpoint is consistent with recent stellar-decomposition approaches to Gaussian and non-Gaussian photonic constructions \cite{Motamedi2026StellarDecomposition}, and with geometric uses of finite stellar polynomials in multimode settings \cite{Lopetegui2025Geometric}. In the monomode setting considered here, the Gaussian parameters form a contractible fiber over the same configuration-space base. The braid topology therefore belongs to the stellar divisor, not to a particular finite-Fock coordinate chart.

The scope is deliberately narrow. We treat pure, single-mode states of finite stellar rank. Mixed states, channels, multimode zero varieties, and propagation dynamics are left aside. The sheaf language is also kept minimal: it only records local labelings of stellar zeros and the obstruction to choosing them globally. This restricted setting is already sufficient to make the braid group intrinsic while keeping the construction readable.

Three limitations should be kept in view. First, braid monodromy is not a scalar non-Gaussianity measure and is not attached to an isolated state alone; it is an invariant of a regular loop or family. Second, the trinomial slices studied below are not claimed to realize the full braid group by themselves; they are exactly solvable probes of the full regular stratum. Third, the mathematical mechanism is classical root monodromy, but the roots here are not arbitrary algebraic data: they are the stellar zeros of physically meaningful Bargmann representatives, reconstructible from finite-Fock state data.

\paragraph{Main results and organization.}
The paper proceeds from the physical state space to the topological invariant.
\begin{enumerate}[label=(\roman*),leftmargin=2.4em]
  \item We define the regular degree-$r$ finite-Fock stellar stratum and identify it biholomorphically with $\Conf_r(\C)/S_r$. Consequently, its fundamental group is $B_r$, and the ordered stellar cover recovers the pure-braid exact sequence.
  \item We define stellar braid monodromy for regular parameterized families, prove its homotopy stability inside the regular stratum, and show that every braid is realized in the full regular stratum.
  \item We connect the abstract construction to finite-Fock state engineering. The cubic subspace $\mathrm{span}\{\ket{0},\ket{1},\ket{2},\ket{3}\}$ is used as a minimal experimentally motivated laboratory, while trinomial slices give exactly solvable discriminants and local half-twists.
  \item We include zero-trajectory figures for binomial and trinomial loops, so that the passage from a parameter loop to a stellar braid can be read visually.
  \item We extend the finite-Fock chart to Gaussian-polynomial finite stellar divisors $E_{\tau,\mu}P$ and show that the Gaussian parameters form a contractible fiber over the stellar configuration space.
  \item We explain how the invariant can be reconstructed from finite-Fock amplitudes and give a minimal sheaf interpretation in terms of local labelings of zeros.
\end{enumerate}

\paragraph{Standing conventions.}
The integer $r$ always denotes the stellar degree, namely the degree of the polynomial factor whose zeros form the stellar divisor. Thus $\Hil_r=\mathrm{span}\{\ket{0},\ldots,\ket{r}\}$ has dimension $r+1$. A family is regular when the stellar degree remains equal to $r$ and the stellar divisor is reduced, i.e. all $r$ zeros are simple. In the finite-Fock chart this excludes both the hyperplane $c_r=0$ and the discriminant locus. All braid groups are based at a chosen regular stellar configuration; changing the initial labeling conjugates the corresponding braid representative, so unlabelled loops naturally define conjugacy classes.

\paragraph{Notation guide.}
The paper uses two closely related regular strata. The finite-Fock chart is denoted by $\Sr$ and consists of monic degree-$r$ Bargmann polynomials with simple zeros. The Gaussian-extended finite-divisor stratum is denoted by $\mathcal G_r^{\mathrm{reg}}$ and consists of admissible representatives $E_{\tau,\mu}P$ with $P$ monic of degree $r$ and simple zeros. The map $\Zmap$ records the unordered zero configuration of a finite-Fock polynomial, while $\Pi_\star$ records the same stellar divisor after forgetting the zero-free Gaussian factor. Both maps land in $\Conf_r(\C)/S_r$.
All numerical figures are generated from the roots of the displayed Bargmann polynomials; the accompanying Python script reproduces the figure files included with the manuscript.

\section{Finite-rank stellar states in Bargmann representation}

This section fixes the finite-Fock chart in which the first part of the construction is exact: projective states of stellar degree $r$ are identified with monic polynomials of degree $r$.

Let $\Fock$ be the single-mode bosonic Fock space with number basis $\{\ket{n}\}_{n\geq 0}$. We use the Bargmann convention in which a pure state
\begin{equation}
  \ket{\psi}=\sum_{n\geq 0} c_n \ket{n}
\end{equation}
is represented by the holomorphic function
\begin{equation}
  F_\psi(z)=\sum_{n\geq 0}\frac{c_n}{\sqrt{n!}}z^n.
  \label{eq:bargmann-function}
\end{equation}
Up to a nonvanishing Gaussian factor and convention-dependent conjugations, the zeros of $F_\psi$ coincide with the stellar zeros of the state. Since nonvanishing holomorphic factors do not change zeros, the polynomial part carries the finite stellar divisor.

For a fixed integer $r\geq 1$, define the truncated Fock subspace
\begin{equation}
  \Hil_r=\mathrm{span}\{\ket{0},\ldots,\ket{r}\}.
\end{equation}
A nonzero vector in $\Hil_r$ defines a point in the projective space
\begin{equation}
  \Proj(\Hil_r)\simeq \CP^r.
\end{equation}
We restrict to the affine chart
\begin{equation}
  U_r=\{[\psi]\in \Proj(\Hil_r): c_r\neq 0\},
\end{equation}
so that $F_\psi$ has degree exactly $r$. The restriction to $c_r\neq 0$ avoids zeros at infinity and lets us work with configurations in the affine complex plane $\C$.

Write
\begin{equation}
  F_\psi(z)=\sum_{n=0}^{r} a_n z^n,
  \qquad
  a_n=\frac{c_n}{\sqrt{n!}},
  \qquad
  a_r\neq 0.
\end{equation}
The projective class $[\psi]$ determines the monic polynomial
\begin{equation}
  P_\psi(z)=\frac{1}{a_r}F_\psi(z)
  =z^r+b_{r-1}z^{r-1}+\cdots+b_0,
  \label{eq:monic-polynomial}
\end{equation}
where
\begin{equation}
  b_n=\frac{a_n}{a_r}
  =\frac{c_n/\sqrt{n!}}{c_r/\sqrt{r!}},
  \qquad 0\leq n\leq r-1.
  \label{eq:monic-coefficients}
\end{equation}
The zeros of $P_\psi$ are exactly the finite stellar zeros of $[\psi]$ in the degree-$r$ chart.

\begin{lemma}[Projective monic chart]
\label{lem:projective-monic-chart}
The map
\begin{equation}
  \Theta_r:U_r\longrightarrow \C^r,
  \qquad
  [\psi]\longmapsto (b_0,\ldots,b_{r-1}),
  \label{eq:theta-chart}
\end{equation}
where the $b_n$ are defined by \eqref{eq:monic-coefficients}, is a biholomorphic affine chart. Its inverse sends $(b_0,\ldots,b_{r-1})$ to the projective state whose Bargmann polynomial is
\begin{equation}
  z^r+b_{r-1}z^{r-1}+\cdots+b_0.
\end{equation}
\end{lemma}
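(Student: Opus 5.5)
The plan is to show that $\Theta_r$ is a composite of two elementary biholomorphisms: the standard affine chart of $\CP^r$ and a diagonal linear rescaling. I would first check that $\Theta_r$ is well defined on projective classes. If $\ket{\psi}$ is replaced by $\lambda\ket{\psi}$ with $\lambda\in\C^\times$, every $c_n$, and hence every $a_n=c_n/\sqrt{n!}$, is multiplied by $\lambda$. The ratios $b_n=a_n/a_r$ are therefore unchanged, and the condition $c_r\neq 0$ (equivalently $a_r\neq 0$) is invariant, so $U_r$ is well defined.

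Next I would factor $\Theta_r=D\circ\phi_r$. Here
\[
\phi_r:U_r\to\C^r,\qquad [c_0:\cdots:c_r]\mapsto (c_0/c_r,\ldots,c_{r-1}/c_r)
\]
is the standard affine chart of $\Proj(\Hil_r)\simeq\CP^r$ on the open set $\{c_r\neq 0\}$. It is biholomorphic by the very definition of the complex structure on projective space, with inverse $(w_0,\ldots,w_{r-1})\mapsto[w_0:\cdots:w_{r-1}:1]$. The second factor is
\[
D(w_0,\ldots,w_{r-1})=\bigl(\sqrt{r!/0!}\,w_0,\ldots,\sqrt{r!/(r-1)!}\,w_{r-1}\bigr),
\]
an invertible diagonal linear map with nonzero real entries. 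Indeed, \eqref{eq:monic-coefficients} gives $b_n=(c_n/c_r)\sqrt{r!/n!}$. A composite of biholomorphisms is biholomorphic, and since $D$ is linear, $\Theta_r$ is an affine chart in the stated sense.

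Finally I would identify the inverse explicitly. Given $(b_0,\ldots,b_{r-1})$, set $b_r=1$ and $c_n=\sqrt{n!}\,b_n$ for $0\leq n\leq r$; note that $c_r=\sqrt{r!}\neq 0$. The state $[\sum_n c_n\ket{n}]$ then lies in $U_r$. Its Bargmann function from \eqref{eq:bargmann-function} has coefficients $a_n=b_n$, so $a_r=1$ and $P_\psi=F_\psi=z^r+b_{r-1}z^{r-1}+\cdots+b_0$. Composing this map with $\Theta_r$ in either order gives the identity: one direction is immediate, and the other uses projective invariance, after rescaling a representative so that $a_r=1$.

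There is no real obstacle in this argument. The only point that needs care is making sure the normalization $\sqrt{n!}$ enters as a fixed nonzero constant per coordinate, so that $D$ is invertible. I would also state clearly that biholomorphic is meant relative to the standard complex manifold structure on $\CP^r$.
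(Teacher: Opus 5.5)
Your proof is correct and follows the paper's argument: projective invariance of the ratios $b_n=a_n/a_r$, the explicit inverse $a_r=1$, $a_n=b_n$, $c_n=\sqrt{n!}\,a_n$, and holomorphy in the affine coordinates. Factoring $\Theta_r$ as the standard chart $[c_0:\cdots:c_r]\mapsto(c_n/c_r)_n$ followed by the diagonal rescaling by $\sqrt{r!/n!}$ only spells out the holomorphy that the paper states in one line.
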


\begin{proof}
The ratios $b_n=a_n/a_r$ are unchanged if the representative vector $\ket{\psi}$ is multiplied by a nonzero scalar. Thus $\Theta_r$ is well defined on projective classes. Since $a_r\neq 0$ on $U_r$, every class has a unique representative at the level of Bargmann polynomials for which the leading coefficient is one. This representative is precisely the monic polynomial \eqref{eq:monic-polynomial}. Conversely, any point $(b_0,\ldots,b_{r-1})\in\C^r$ defines the coefficients $a_r=1$, $a_n=b_n$ for $0\leq n\leq r-1$, hence the Fock coefficients $c_n=\sqrt{n!}\,a_n$. These two constructions are inverse. Both are given by holomorphic coordinate functions on the affine chart, proving the claim.
\end{proof}

\begin{definition}[Regular stellar stratum]
The regular stellar stratum of degree $r$ is
\begin{equation}
  \Sr=
  \left\{[\psi]\in U_r:
  F_\psi \text{ has } r \text{ simple zeros in }\C
  \right\}.
  \label{eq:regular-stratum}
\end{equation}
The excluded locus is the stellar discriminant
\begin{equation}
  \Delta_{\star,r}
  =
  \left\{[\psi]\in U_r:
  \Disc(P_\psi)=0
  \right\}.
\end{equation}
Thus
\begin{equation}
  \Sr=U_r\setminus\Delta_{\star,r}.
\end{equation}
\end{definition}

\begin{lemma}[Discriminant hypersurface]
\label{lem:discriminant-hypersurface}
Under the chart $\Theta_r$, the stellar discriminant is the zero locus in $\C^r$ of the classical discriminant polynomial of the monic polynomial \eqref{eq:monic-polynomial}. Therefore $\Sr$ is the complement of an algebraic hypersurface in $U_r\simeq\C^r$.
\end{lemma}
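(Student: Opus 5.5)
The plan is to transport the problem through the chart $\Theta_r$ of \Cref{lem:projective-monic-chart} and then quote the classical theory of the discriminant of a monic polynomial. Write $P_b(z)=z^r+b_{r-1}z^{r-1}+\cdots+b_0$ for $b=(b_0,\ldots,b_{r-1})\in\C^r$. By \Cref{lem:projective-monic-chart}, $P_{\Theta_r^{-1}(b)}=P_b$. Hence
\[
\Theta_r(\Delta_{\star,r})=\{b\in\C^r:\Disc(P_b)=0\}.
\]
I will define the discriminant in the standard way, as $\Disc(P_b)=(-1)^{r(r-1)/2}\operatorname{Res}(P_b,P_b')$. This is the Sylvester determinant, whose entries are the $b_n$, the constant $1$ and integer multiples of these. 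It is therefore a polynomial $D_r\in\mathbb{Z}[b_0,\ldots,b_{r-1}]$. This settles the first claim: the discriminant locus is the zero set of $D_r$.

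Next I check that this zero set is exactly the complement of $\Sr$. Over $\C$, factor $P_b=\prod_{i}(z-\zeta_i)$. Then
\[
\Disc(P_b)=\prod_{i<j}(\zeta_i-\zeta_j)^2 .
\]
This expression vanishes if and only if two roots coincide, that is, if and only if $P_b$ has a multiple zero. Equivalently, $P_b$ and $P_b'$ share a root, which is exactly when the resultant vanishes. Since $F_\psi=a_rP_\psi$ with $a_r\neq 0$, ``$F_\psi$ has $r$ simple zeros'' is the same as $\Disc(P_\psi)\neq 0$. So $\Theta_r(\Sr)=\C^r\setminus\{D_r=0\}$.

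The remaining point, and the only one that needs care, is that $\{D_r=0\}$ is a genuine hypersurface. This requires $D_r$ to be nonconstant; for $r=1$ it is the empty set, which I will note separately. For $r\geq 2$ two observations suffice:
\begin{itemize}
\item $D_r$ is not identically zero, since $P=z^r-1$ has $r$ distinct roots of unity and hence nonzero discriminant.
\item $D_r$ has zeros, since $P=z^r$ (that is, $b=0$) gives $D_r(0)=0$.
\end{itemize}
So $D_r$ is a nonconstant polynomial. Its zero locus is therefore an algebraic hypersurface of pure codimension one, and $\Sr$ is its complement in $U_r\simeq\C^r$.
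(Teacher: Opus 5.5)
Your proof is correct and follows the paper's argument: you pass through the chart $\Theta_r$ and use $\Disc(P)=\prod_{i<j}(\zeta_i-\zeta_j)^2$ to see that the discriminant vanishes exactly when $P_\psi$ has a multiple zero. The differences are minor. You get polynomiality in $(b_0,\ldots,b_{r-1})$ from the Sylvester resultant rather than from the fundamental theorem of symmetric polynomials, and you also check that $D_r$ is nonconstant for $r\geq 2$, with $r=1$ giving an empty locus. The paper leaves that check implicit, although it is needed to justify the word ``hypersurface''.
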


\begin{proof}
For a degree-$r$ polynomial
\begin{equation}
  P(z)=\prod_{j=1}^{r}(z-z_j),
\end{equation}
the discriminant is
\begin{equation}
  \Disc(P)=\prod_{1\leq i<j\leq r}(z_i-z_j)^2.
\end{equation}
It is symmetric in the roots and hence, by the fundamental theorem of symmetric polynomials, can be expressed as a polynomial in the monic coefficients $(b_0,\ldots,b_{r-1})$. It vanishes exactly when two roots coincide. This is precisely the complement of the regularity condition in \eqref{eq:regular-stratum}.
\end{proof}

The distinction between $U_r$ and $\Sr$ is important. The chart $U_r$ fixes the degree. The complement of the discriminant fixes the regularity of the zero configuration. Only on $\Sr$ can the zeros be transported without collisions.

\begin{remark}[Standing regularity convention]
\label{rem:standing-regularity}
Throughout the paper, a regular family of degree-$r$ stellar states means a continuous map into $\Sr$. Thus the leading Bargmann coefficient never vanishes and no two stellar zeros collide along the family. This convention is essential: crossing $c_r=0$ would move a zero to infinity in the affine chart, while crossing $\Delta_{\star,r}$ would leave the configuration space and allow the braid type to change.
\end{remark}

\section{Experimentally motivated finite-Fock families}

The preceding construction starts from monic polynomials, but its physical input is a finite-Fock state. Finite Fock superpositions are standard targets in non-Gaussian state engineering, and they provide the natural bridge between the abstract configuration-space geometry and experimentally accessible state families. We use the full finite-rank stratum for the intrinsic topology, finite-dimensional Fock subspaces for physical motivation, and sparse trinomial slices for exactly solvable discriminants and local braid generators.

The smallest experimentally motivated setting with genuinely nontrivial braid structure is the cubic subspace
\begin{equation}
  \Hil_3=\mathrm{span}\{\ket{0},\ket{1},\ket{2},\ket{3}\}.
\end{equation}
This subspace is natural in view of optical schemes generating arbitrary superpositions of photon-number states up to three photons \cite{Yukawa2013}. More generally, the same finite-Fock language applies to engineered nonclassical oscillator superpositions in which the bosonic mode is controlled through a finite effective basis \cite{Saner2026}.

Consider a state
\begin{equation}
  \ket{\psi}=c_0\ket{0}+c_1\ket{1}+c_2\ket{2}+c_3\ket{3},
  \qquad c_3\neq0.
\end{equation}
Its Bargmann polynomial is
\begin{equation}
  F_\psi(z)=c_0+c_1z+\frac{c_2}{\sqrt{2}}z^2+\frac{c_3}{\sqrt{6}}z^3.
\end{equation}
After projective normalization to a monic cubic, it takes the form
\begin{equation}
  P_\psi(z)=z^3+\alpha z^2+\beta z+\gamma,
  \label{eq:monic-cubic-stellar-polynomial}
\end{equation}
with
\begin{equation}
  \alpha=\frac{c_2/\sqrt{2}}{c_3/\sqrt{6}}
  =\sqrt{3}\,\frac{c_2}{c_3},
  \qquad
  \beta=\frac{c_1}{c_3/\sqrt{6}}
  =\sqrt{6}\,\frac{c_1}{c_3},
  \qquad
  \gamma=\frac{c_0}{c_3/\sqrt{6}}
  =\sqrt{6}\,\frac{c_0}{c_3}.
  \label{eq:cubic-normalized-coefficients}
\end{equation}
Thus the degree-three regular stellar stratum is the complement of the cubic discriminant in the affine space of coefficients $(\alpha,\beta,\gamma)$.

\begin{proposition}[Cubic stellar discriminant]
\label{prop:cubic-stellar-discriminant}
For the normalized cubic stellar polynomial
\begin{equation}
  P(z)=z^3+\alpha z^2+\beta z+\gamma,
\end{equation}
the multiple-root locus is the hypersurface
\begin{equation}
  \Delta_3(\alpha,\beta,\gamma)
  =
  \alpha^2\beta^2
  -4\beta^3
  -4\alpha^3\gamma
  -27\gamma^2
  +18\alpha\beta\gamma
  =0.
  \label{eq:cubic-discriminant}
\end{equation}
Consequently, the regular cubic finite-Fock family is
\begin{equation}
  \mathcal E_3^{\mathrm{reg}}
  =
  \{(\alpha,\beta,\gamma)\in\C^3:
  \Delta_3(\alpha,\beta,\gamma)\neq0\},
\end{equation}
and its fundamental group is naturally the three-strand braid group $B_3$.
\end{proposition}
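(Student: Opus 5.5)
The proof has three parts: compute the discriminant, identify the regular stratum, and identify its fundamental group.

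\textbf{Step 1: the discriminant formula.} Factor $P(z)=\prod_{j=1}^3(z-z_j)$, so that $\alpha=-e_1$, $\beta=e_2$, $\gamma=-e_3$, where $e_k$ are the elementary symmetric polynomials in the roots. By \Cref{lem:discriminant-hypersurface}, $\Disc(P)=\prod_{i<j}(z_i-z_j)^2$ is a polynomial in $(\alpha,\beta,\gamma)$. I would identify this polynomial without a full symmetric-function expansion, as follows.

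\begin{itemize}
\item \textbf{Weighted degree.} Assign $\alpha,\beta,\gamma$ the weights $1,2,3$. Then $\Disc$ is weighted-homogeneous of degree $6$.
\item \textbf{Admissible monomials.} The monomials of weighted degree $6$ are $\alpha^6$, $\alpha^4\beta$, $\alpha^3\gamma$, $\alpha^2\beta^2$, $\alpha\beta\gamma$, $\beta^3$ and $\gamma^2$. So at most seven coefficients are unknown.
\item \textbf{Depressed cubics.} Evaluate on $\alpha=0$. The classical formula $-4\beta^3-27\gamma^2$ for $z^3+\beta z+\gamma$ fixes the $\beta^3$ and $\gamma^2$ coefficients.
\item \textbf{Remaining coefficients.} Use translation invariance: $\Disc$ is unchanged under $z\mapsto z+t$. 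Alternatively, evaluate on explicit root triples such as $\{0,1,-1\}$, $\{0,0,1\}$ and $\{1,2,3\}$. This kills $\alpha^6$ and $\alpha^4\beta$ and yields $\alpha^2\beta^2-4\alpha^3\gamma+18\alpha\beta\gamma$.
\end{itemize}

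Since $\Disc(P)$ vanishes exactly when two roots coincide, the multiple-root locus is $\{\Delta_3=0\}$. The description of $\mathcal E_3^{\mathrm{reg}}$ then follows immediately from \Cref{lem:projective-monic-chart}, using the chart $\Theta_3$ together with \eqref{eq:cubic-normalized-coefficients}.

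\textbf{Step 2: the root map is a biholomorphism.} Consider the Vieta map $\Conf_3(\C)/S_3\to\mathcal E_3^{\mathrm{reg}}$ sending an unordered triple of distinct points to $\prod_j(z-z_j)$.

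\begin{itemize}
\item \textbf{Bijectivity.} It is bijective by the fundamental theorem of algebra and uniqueness of factorization.
\item \textbf{Holomorphy.} It is holomorphic because its components are the symmetric functions $e_k$.
\item \textbf{Local invertibility.} Its derivative in ordered root coordinates is the Vandermonde matrix, up to signs. The Vandermonde determinant is nonzero precisely when the roots are distinct.
\end{itemize}

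By the holomorphic inverse function theorem, the map $\Conf_3(\C)\to\mathcal E_3^{\mathrm{reg}}$ is a local biholomorphism. It is invariant under $S_3$, and $S_3$ acts freely on $\Conf_3(\C)$, so it descends to a bijective local biholomorphism on the quotient, hence a biholomorphism.

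\textbf{Step 3: the fundamental group.} Invoke the classical Fox--Neuwirth/Artin result $\pi_1(\Conf_r(\C)/S_r)\cong B_r$ \cite{FoxNeuwirth1962,Artin1947}. Taking $r=3$ gives $\pi_1(\mathcal E_3^{\mathrm{reg}})\cong B_3$.

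Here ``naturally'' means the isomorphism is canonical once a base regular state is fixed. Changing the base point or the labeling acts by conjugation, consistent with the standing conventions.

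\textbf{Main obstacle.} The only delicate step is the coefficient bookkeeping in Step 1, which is error-prone. To double-check it, I would compare with the resultant $\mathrm{Res}(P,P')$: for monic cubics $\Disc(P)=-\mathrm{Res}(P,P')$. I would then verify the resulting formula on a test polynomial with known roots.
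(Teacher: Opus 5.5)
Your proposal is correct, but it is far more self-contained than the paper's proof. The paper's proof is three sentences. It cites the formula as the classical discriminant of a monic cubic. It then obtains the $B_3$ statement by specializing its general results, \cref{thm:stellar-configuration} and \cref{cor:braid-group}.

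You instead derive the formula from weighted homogeneity, the depressed case and translation invariance. You also prove the biholomorphism directly, using the Jacobian of $(z_1,z_2,z_3)\mapsto(e_1,e_2,e_3)$ and the inverse function theorem. This buys something real. The paper's \cref{lem:vieta-identification} gets a homeomorphism from continuity of roots and then simply asserts holomorphy on locally labeled branches. Your inverse-function-theorem argument actually constructs those holomorphic local branches. Your Step 2 works verbatim for every $r$; only Step 1 is specific to the cubic. The paper's route buys brevity and uniformity in $r$, at the price of outsourcing both the formula and the holomorphy.

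Two corrections.

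First, the root-triple alternative in Step 1 fails as listed:
\begin{itemize}
\item $\{0,1,-1\}$ has $\alpha=0$, so it only re-checks the depressed formula.
\item $\{0,0,1\}$ gives $(\alpha,\beta,\gamma)=(-1,0,0)$, so it only kills the $\alpha^6$ coefficient.
\item $\{1,2,3\}$ gives a single linear relation among the four remaining unknowns.
\end{itemize}
So nothing in these evaluations kills $\alpha^4\beta$ or separates the coefficients of $\alpha^2\beta^2$, $\alpha^3\gamma$ and $\alpha\beta\gamma$.

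Rely on translation invariance instead, made explicit. Substituting $z=w-\alpha/3$ gives $w^3+pw+q$ with $p=\beta-\alpha^2/3$ and $q=\gamma-\alpha\beta/3+2\alpha^3/27$. Expanding $-4p^3-27q^2$ gives exactly $\Delta_3$. Equivalently, the infinitesimal condition $3\partial_\alpha D+2\alpha\partial_\beta D+\beta\partial_\gamma D=0$ is a triangular linear system. Starting from the depressed coefficients $-4$ and $-27$, it successively forces $18$, $-4$, $1$, $0$, $0$ for $\alpha\beta\gamma$, $\alpha^3\gamma$, $\alpha^2\beta^2$, $\alpha^4\beta$, $\alpha^6$.

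Second, the Jacobian matrix of the Vieta map is not itself the Vandermonde matrix. Only its determinant equals $\pm\prod_{i<j}(z_i-z_j)$, which is all your argument needs.
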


\begin{proof}
The formula \eqref{eq:cubic-discriminant} is the classical discriminant of a monic cubic. It vanishes exactly when the polynomial has a multiple root. Through the Bargmann normalization \eqref{eq:cubic-normalized-coefficients}, this is precisely the condition that two stellar zeros coalesce. The last statement follows from the general identification of the regular degree-three stellar stratum with $\Conf_3(\C)/S_3$.
\end{proof}

\begin{remark}[Why keep trinomial slices?]
The cubic family \eqref{eq:monic-cubic-stellar-polynomial} is the natural experimental finite-Fock setting. However, its discriminant complement is already three-dimensional and its braid monodromy is not the easiest place to display explicit generators. Trinomial families such as
\begin{equation}
  1+A z^p+B z^q
\end{equation}
should therefore be understood as controlled algebraic slices of the finite-Fock state space. They keep a direct Fock-state interpretation while allowing closed-form discriminants and local half-twist computations.
\end{remark}

\begin{example}[Trinomial slices inside the cubic family]
The elementary cubic slices used later are obtained from \eqref{eq:monic-cubic-stellar-polynomial} by setting one normalized coefficient to zero. For example,
\begin{equation}
  1+A z+B z^3
\end{equation}
becomes, after division by $B$,
\begin{equation}
  z^3+\frac{A}{B}z+\frac{1}{B},
\end{equation}
so it corresponds to the slice $\alpha=0$ in the cubic coefficient space. Similarly,
\begin{equation}
  1+A z^2+B z^3
\end{equation}
corresponds to the slice $\beta=0$. These slices retain the physical interpretation as sparse Fock superpositions, while making the discriminant geometry transparent. The quadratic case $(p,q)=(1,2)$, used below as the simplest two-strand illustration, is the analogous lower-rank slice in $\Hil_2$.
\end{example}

\subsection{From root monodromy to stellar monodromy}
\label{subsec:physical-novelty}

At the technical level, the construction uses the classical monodromy of roots of a polynomial. The physical point, however, is that the polynomial is not an auxiliary algebraic object chosen independently of the state. It is the Bargmann representative of a bosonic quantum state, and its coefficients are fixed by Fock amplitudes through the normalization factors $1/\sqrt{n!}$. Thus a path in coefficient space is, in the present setting, a path of projective finite-Fock states.

This distinction will be important throughout the examples. A loop of preparation amplitudes
\begin{equation}
  [\psi_t]\in\Proj(\Hil_r),
  \qquad 0\leq t\leq 1,
\end{equation}
induces a loop of stellar divisors only after passing through the Bargmann map. The resulting braid is therefore attached to a family of non-Gaussian states, not merely to an abstract family of monic polynomials. Conversely, the monic chart shows that every regular finite stellar divisor can be reconstructed as a finite-Fock state. This is why root monodromy becomes a state-space invariant in the stellar representation.

The sparse families used below should be read in this sense. For example,
\begin{equation}
  1+A z^p+B z^q
\end{equation}
corresponds to the physical superposition
\begin{equation}
  \ket{0}+u\ket{p}+v\ket{q},
  \qquad
  A=\frac{u}{\sqrt{p!}},
  \qquad
  B=\frac{v}{\sqrt{q!}}.
\end{equation}
The braid is generated by varying the amplitudes $u$ and $v$ while remaining away from the stellar discriminant. In this sense, braid monodromy refines the stellar-rank classification by detecting how non-Gaussian stellar features are transported within a fixed finite-rank stratum.

\section{Stellar configurations and braid groups}

Let
\begin{equation}
  \Conf_r(\C)=
  \{(z_1,\ldots,z_r)\in \C^r:
  z_i\neq z_j \text{ for } i\neq j\}
\end{equation}
be the ordered configuration space of $r$ points in the complex plane. The symmetric group $S_r$ acts freely on $\Conf_r(\C)$ by permuting labels. The quotient
\begin{equation}
  \Conf_r(\C)/S_r
\end{equation}
is the unordered configuration space. A point of this quotient is an unordered set $\{z_1,\ldots,z_r\}$ of pairwise distinct complex numbers.

The elementary symmetric functions define the Vieta map
\begin{equation}
  \Vieta:\Conf_r(\C)\longrightarrow \C^r,
  \qquad
  (z_1,\ldots,z_r)\longmapsto (b_0,\ldots,b_{r-1}),
  \label{eq:vieta-map}
\end{equation}
where
\begin{equation}
  \prod_{j=1}^r(z-z_j)
  =z^r+b_{r-1}z^{r-1}+\cdots+b_0.
\end{equation}
Since the coefficients are symmetric functions of the roots, $\Vieta$ is constant on $S_r$-orbits and therefore induces a map
\begin{equation}
  \overline{\Vieta}:\Conf_r(\C)/S_r
  \longrightarrow
  \C^r\setminus\{\Disc=0\}.
  \label{eq:quotient-vieta}
\end{equation}

\begin{lemma}[Vieta identification]
\label{lem:vieta-identification}
The induced Vieta map \eqref{eq:quotient-vieta} is a biholomorphism onto the complement of the discriminant hypersurface.
\end{lemma}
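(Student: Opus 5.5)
The plan is to check, in order, that $\overline{\Vieta}$ lands in the right place, is a bijection onto it, is holomorphic, and has a holomorphic inverse. For the target, the discriminant satisfies $\Disc(\Vieta(z_1,\ldots,z_r))=\prod_{i<j}(z_i-z_j)^2$ (as in the proof of \Cref{lem:discriminant-hypersurface}). This is nonzero on $\Conf_r(\C)$, so $\overline{\Vieta}$ indeed takes values in $\C^r\setminus\{\Disc=0\}$.

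\emph{Bijectivity.} For surjectivity, take $(b_0,\ldots,b_{r-1})$ with $\Disc\neq0$. By the fundamental theorem of algebra the monic polynomial $z^r+b_{r-1}z^{r-1}+\cdots+b_0$ factors as $\prod_j(z-z_j)$. Since $\Disc\neq0$, the roots are pairwise distinct, so $(z_1,\ldots,z_r)\in\Conf_r(\C)$. For injectivity, if two unordered configurations give the same coefficients, they are the root multisets of the same polynomial. Unique factorization in $\C[z]$ then forces them to coincide as unordered sets, so $\overline{\Vieta}$ is a bijection.

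\emph{Holomorphy of $\overline{\Vieta}$.} The quotient $\Conf_r(\C)/S_r$ carries the complex manifold structure for which the projection $\pi:\Conf_r(\C)\to\Conf_r(\C)/S_r$ is a holomorphic covering. This structure exists because $S_r$ acts freely and properly discontinuously (finite group, free action). The map $\Vieta$ has polynomial components, hence is holomorphic, and it is $S_r$-invariant. Since $\overline{\Vieta}\circ\pi=\Vieta$ and $\pi$ is a local biholomorphism, $\overline{\Vieta}$ is holomorphic.

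\emph{Holomorphic inverse (main step).} It suffices to show that $\Vieta$ is a local biholomorphism on $\Conf_r(\C)$. The Jacobian of the elementary symmetric functions with respect to the roots has determinant $\pm\prod_{i<j}(z_i-z_j)$, the Vandermonde determinant, which is nonzero on $\Conf_r(\C)$. I would prove this identity by a standard argument. The determinant is a polynomial that vanishes whenever $z_i=z_j$, so it is divisible by each $z_i-z_j$. Comparing degrees, both sides have degree $r(r-1)/2$, so they agree up to a constant, which is checked on a leading monomial.

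The holomorphic inverse function theorem then gives local holomorphic inverses of $\Vieta$, and hence of $\overline{\Vieta}=\Vieta\circ\pi^{-1}$ on local sections of $\pi$. Thus $\overline{\Vieta}$ is a bijective local biholomorphism, i.e.\ a biholomorphism. An alternative route avoiding the Jacobian computation is the holomorphic implicit function theorem applied to simple roots. A simple root $z_j$ of $P_b$ has $P_b'(z_j)\neq0$, so it depends holomorphically on $b$ locally. This directly yields local holomorphic inverses; I would present whichever is shorter, probably the simple-root argument.
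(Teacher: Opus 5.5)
Your proof is correct, but it handles the inverse differently from the paper. The bijectivity argument is the same as the paper's. For the inverse, the paper first obtains continuity from continuous dependence of the root set on the coefficients, via small contours and the argument principle. It then only asserts that the inverse is holomorphic on local branches where the roots are labeled. You instead show that $\Vieta$ is a local biholomorphism on $\Conf_r(\C)$, either through the Vandermonde formula for the Jacobian of the elementary symmetric functions or through the implicit function theorem at simple roots ($P_b'(z_j)\neq 0$). You then use the fact that a bijective local biholomorphism is a biholomorphism. This gives continuity and holomorphy of the inverse in one step, so the separate argument-principle continuity result is not needed. Your route also makes explicit two things the paper leaves implicit. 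First, $\overline{\Vieta}$ is holomorphic for the quotient complex structure; the paper only records its continuity explicitly. Second, local invertibility of $\Vieta$ is controlled by $\prod_{i<j}(z_i-z_j)$, whose square is $\Disc$, so local invertibility fails exactly over the discriminant. The paper's argument-principle route has one advantage: it gives continuity of the root multiset even at multiple roots, which is not needed here but is the relevant tool near the discriminant. Your implicit-function alternative is essentially the rigorous version of the paper's remark about labeled local branches, and it is the shorter of your two options.
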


\begin{proof}
The map is well defined because the coefficients of a monic polynomial are symmetric polynomial functions of its roots. It is injective because two unordered configurations with the same elementary symmetric functions define the same monic polynomial, hence the same unordered set of roots. It is surjective onto $\C^r\setminus\{\Disc=0\}$ because any monic polynomial with nonzero discriminant has $r$ distinct roots in $\C$ by the fundamental theorem of algebra.

Continuity of $\overline{\Vieta}$ follows from its polynomial expression in the roots. Its inverse sends a square-free monic polynomial to its unordered set of roots. This inverse is continuous because the roots of a polynomial depend continuously on its coefficients as an unordered set; equivalently, one may use small disjoint contours around the roots and the argument principle to continue each local root cluster under small coefficient perturbations. In local branches where the roots are labeled, the inverse is holomorphic. Therefore the quotient map is a homeomorphism and, with the standard quotient complex structure on $\Conf_r(\C)/S_r$, an isomorphism of complex manifolds.
\end{proof}

\begin{theorem}[Stellar configuration theorem]
\label{thm:stellar-configuration}
The map sending a regular degree-$r$ stellar state to the unordered set of zeros of its Bargmann polynomial defines a natural biholomorphism
\begin{equation}
  \Zmap:\Sr\longrightarrow \Conf_r(\C)/S_r.
  \label{eq:stellar-configuration-map}
\end{equation}
More precisely, in the monic chart $\Theta_r$, this map is the inverse of the Vieta identification.
\end{theorem}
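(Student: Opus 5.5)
The plan is to obtain $\Zmap$ as a composition of maps already shown to be biholomorphic, and then check that this composition agrees with the zero-set map in the statement. By \Cref{lem:projective-monic-chart}, $\Theta_r:U_r\to\C^r$ is a biholomorphism. By \Cref{lem:discriminant-hypersurface}, it carries the stellar discriminant $\Delta_{\star,r}$ onto the hypersurface $\{\Disc=0\}$. Hence its restriction
\begin{equation*}
  \Theta_r|_{\Sr}:\Sr\longrightarrow \C^r\setminus\{\Disc=0\}
\end{equation*}
is a biholomorphism between open subsets. \Cref{lem:vieta-identification} gives the biholomorphism $\overline{\Vieta}:\Conf_r(\C)/S_r\to \C^r\setminus\{\Disc=0\}$. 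I would therefore set
\begin{equation*}
  \widetilde{\Zmap}=\overline{\Vieta}^{-1}\circ\Theta_r|_{\Sr},
\end{equation*}
which is a biholomorphism as a composite of biholomorphisms.

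The second step identifies $\widetilde{\Zmap}$ with $\Zmap$. For $[\psi]\in\Sr$, the point $\Theta_r([\psi])=(b_0,\ldots,b_{r-1})$ is the coefficient vector of the monic polynomial $P_\psi$. The map $\overline{\Vieta}^{-1}$ sends a square-free monic polynomial to its unordered set of roots, as noted in the proof of \Cref{lem:vieta-identification}. So $\widetilde{\Zmap}([\psi])$ is the zero set of $P_\psi=F_\psi/a_r$. Since $a_r\neq0$, this equals the zero set of $F_\psi$. Regularity of $[\psi]$ means these $r$ zeros are simple, so the set is a genuine point of $\Conf_r(\C)/S_r$. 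This proves $\Zmap=\widetilde{\Zmap}$ and also gives the "more precisely" clause.

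The third step is naturality, meaning independence from choices. The zero set of $F_\psi$ does not change when the representative vector $\ket{\psi}$ is rescaled, since rescaling multiplies $F_\psi$ by a nonzero constant. The map also does not depend on a labeling of the roots, because it lands in the unordered quotient.

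The argument is short, so the only delicate point is analytic. One must confirm that the complex structure on $\Conf_r(\C)/S_r$ used in \Cref{lem:vieta-identification} is the quotient structure, for which local root labelings are holomorphic charts. Then "biholomorphic" in the theorem means the same thing as in that lemma. I would handle this by citing the lemma as stated, since it already establishes $\overline{\Vieta}$ as an isomorphism of complex manifolds. If more detail is wanted, I would add that near a regular configuration, $S_r$ acts freely, so the quotient map is a local biholomorphism. The inverse function theorem then applies to $\Vieta$, whose Jacobian is the Vandermonde determinant and is nonzero off the diagonal.
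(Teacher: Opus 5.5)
Your proposal is correct and follows the paper's own proof. Both restrict the monic chart to the regular stratum using the discriminant lemma, compose with the inverse of the Vieta identification, and read the composite as the zero-set map. Your explicit check that the composite equals the zero-set map, and your Vandermonde--Jacobian remark, fill in details the paper leaves implicit without changing the route.
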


\begin{proof}
By \cref{lem:projective-monic-chart}, a point of $U_r$ is equivalently a monic degree-$r$ Bargmann polynomial. By \cref{lem:discriminant-hypersurface}, the regular stratum $\Sr$ corresponds exactly to the square-free monic polynomials, i.e. to $\C^r\setminus\{\Disc=0\}$ in coefficient coordinates. By \cref{lem:vieta-identification}, this space of square-free monic polynomials is naturally homeomorphic to $\Conf_r(\C)/S_r$ by taking roots. Composing these biholomorphic identifications gives \eqref{eq:stellar-configuration-map}.
\end{proof}

\begin{corollary}[Intrinsic braid group]
\label{cor:braid-group}
The fundamental group of the regular stellar stratum is the Artin braid group:
\begin{equation}
  \pi_1(\Sr)\simeq B_r.
  \label{eq:fundamental-group-braid}
\end{equation}
\end{corollary}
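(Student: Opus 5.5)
The plan is to transport the question from $\Sr$ to the configuration space and then use the classical identification of the braid group. By \cref{thm:stellar-configuration}, $\Zmap:\Sr\to\Conf_r(\C)/S_r$ is a biholomorphism, in particular a homeomorphism. It therefore induces an isomorphism $\Zmap_*:\pi_1(\Sr,[\psi_0])\to\pi_1(\Conf_r(\C)/S_r,\Zmap([\psi_0]))$ for any regular base point $[\psi_0]$. Since $\Sr$ is the complement of an algebraic hypersurface in $\C^r$ (\cref{lem:discriminant-hypersurface}), it is connected, so the group is independent of the base point up to isomorphism. It thus suffices to show $\pi_1(\Conf_r(\C)/S_r)\simeq B_r$.

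For that step I would invoke the classical theorem of Artin and Fox--Neuwirth, in the form that takes the Artin braid group to be defined, or characterized, as the fundamental group of the unordered configuration space, with its standard presentation by generators $\sigma_1,\ldots,\sigma_{r-1}$ subject to the braid relations. To make the identification self-contained, I would recall the following mechanism. The quotient $\Conf_r(\C)\to\Conf_r(\C)/S_r$ is a regular $r!$-sheeted covering, because $S_r$ acts freely and properly discontinuously. This gives the short exact sequence $1\to\pi_1(\Conf_r(\C))\to\pi_1(\Conf_r(\C)/S_r)\to S_r\to1$. Here $\pi_1(\Conf_r(\C))$ is the pure braid group $P_r$, computed by the Fadell--Neuwirth fibrations, and the quotient records the permutation of the stellar zeros.

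Explicitly, a loop in $\Conf_r(\C)/S_r$ lifts uniquely, once the initial labeling of the zeros is fixed, to a path of $r$ non-colliding strands $t\mapsto(z_1(t),\ldots,z_r(t))$. These strands form a geometric braid in $\C\times[0,1]$. Homotopies of loops in the configuration space correspond exactly to isotopies of geometric braids. The half-twist exchanging two adjacent zeros realizes $\sigma_i$.

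The only genuine content lies in this classical identification, which I would cite rather than reprove, together with the standard base-point caveat already made in the standing conventions. Changing the base labeling or the base point conjugates the identification, so the isomorphism \eqref{eq:fundamental-group-braid} is canonical only up to inner automorphism.
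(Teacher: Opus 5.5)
Your proposal is correct and follows essentially the same route as the paper: transfer $\pi_1$ along the homeomorphism $\Zmap$ of \cref{thm:stellar-configuration}, then cite the classical identification $\pi_1(\Conf_r(\C)/S_r)\simeq B_r$. Your added remarks on connectedness, the regular $S_r$-covering, and conjugacy under change of base point are accurate supplementary detail that the paper does not need for this step.
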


\begin{proof}
By \cref{thm:stellar-configuration}, $\Sr$ is homeomorphic, indeed biholomorphic, to $\Conf_r(\C)/S_r$. The fundamental group of the unordered configuration space of $r$ points in the complex plane is the Artin braid group $B_r$ \cite{Artin1947,Birman1974,KasselTuraev2008}. Hence $\pi_1(\Sr)\simeq B_r$.
\end{proof}

\begin{remark}[Physical meaning]
\Cref{cor:braid-group} is elementary from the point of view of polynomial roots, but it is structural for the stellar representation. It says that the braid group is not an external decoration added to the classification of non-Gaussian states. It is the fundamental group of the regular finite-rank stellar stratum itself. Stellar rank selects the number of zeros; braid topology describes the possible nontrivial transports of these zeros at fixed rank.
\end{remark}

\subsection{The ordered stellar cover and pure braids}

The unordered configuration space is the natural space of zeros of a projective state, because a state does not come with labeled stellar zeros. Nevertheless, many constructions become clearer after introducing the ordered cover.

\begin{definition}[Ordered stellar cover]
The ordered regular stellar cover is
\begin{equation}
  \widetilde{\mathcal S}_r^{\mathrm{reg}}
  =
  \left\{
  ([\psi],z_1,\ldots,z_r)\in \Sr\times\Conf_r(\C):
  P_\psi(z_j)=0\text{ for all }j
  \right\}.
  \label{eq:ordered-stellar-cover}
\end{equation}
The projection
\begin{equation}
  \pi_\star:\widetilde{\mathcal S}_r^{\mathrm{reg}}
  \longrightarrow \Sr,
  \qquad
  ([\psi],z_1,\ldots,z_r)\longmapsto [\psi]
  \label{eq:ordered-stellar-projection}
\end{equation}
forgets the labels of the zeros.
\end{definition}

\begin{proposition}[Ordered cover]
\label{prop:ordered-cover}
The ordered regular stellar cover is naturally isomorphic to the ordered configuration space:
\begin{equation}
  \widetilde{\mathcal S}_r^{\mathrm{reg}}
  \simeq
  \Conf_r(\C).
\end{equation}
Under this identification, the projection \eqref{eq:ordered-stellar-projection} is the quotient map
\begin{equation}
  \Conf_r(\C)\longrightarrow \Conf_r(\C)/S_r.
\end{equation}
It is a regular covering with deck group $S_r$.
\end{proposition}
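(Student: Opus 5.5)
The plan is to show that the projection onto the label-free data is just the forgetful map from ordered to unordered configurations, once $\Sr$ is identified with $\Conf_r(\C)/S_r$ by \cref{thm:stellar-configuration}. First I will write down the candidate isomorphism
\begin{equation}
  \Phi:\widetilde{\mathcal S}_r^{\mathrm{reg}}\longrightarrow \Conf_r(\C),
  \qquad
  ([\psi],z_1,\ldots,z_r)\longmapsto (z_1,\ldots,z_r),
\end{equation}
which is the restriction of a coordinate projection and hence continuous (and holomorphic). For the inverse I take $\Psi(z_1,\ldots,z_r)=\bigl(\Theta_r^{-1}(\Vieta(z_1,\ldots,z_r)),z_1,\ldots,z_r\bigr)$. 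By \cref{lem:projective-monic-chart} and \cref{lem:vieta-identification}, $\Theta_r^{-1}\circ\Vieta$ gives the state in $\Sr$ whose monic Bargmann polynomial is $\prod_j(z-z_j)$, so $\Psi$ lands in the cover and is holomorphic.

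Next I check that $\Phi$ and $\Psi$ are mutually inverse. The relation $\Phi\circ\Psi=\mathrm{id}$ is immediate. For $\Psi\circ\Phi=\mathrm{id}$, take a point $([\psi],z_1,\ldots,z_r)$ of the cover. The $z_j$ are pairwise distinct and all are zeros of the degree-$r$ monic polynomial $P_\psi$. Hence $P_\psi=\prod_j(z-z_j)$, and therefore $[\psi]=\Theta_r^{-1}(\Vieta(z))$. This settles the first claim.

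Under this identification, $\pi_\star\circ\Psi=\Theta_r^{-1}\circ\Vieta$. By \cref{thm:stellar-configuration}, $\Zmap\circ\Theta_r^{-1}$ restricted to the complement of the discriminant is $\overline{\Vieta}^{-1}$. So $\Zmap\circ\pi_\star\circ\Psi$ is $\overline{\Vieta}^{-1}\circ\Vieta$, which is exactly the quotient map $q:\Conf_r(\C)\to\Conf_r(\C)/S_r$.

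It remains to see that $q$ is a regular covering with deck group $S_r$. I expect this to be the only step needing any care, and it rests on a standard fact. The action of $S_r$ on $\Conf_r(\C)$ is free, since $\sigma\neq\mathrm{id}$ moves some label and the coordinates are distinct. The action is by homeomorphisms of a Hausdorff space by a finite group, so it is properly discontinuous. Concretely, I would take $\varepsilon$ smaller than half the minimal distance $\min_{i\neq j}|z_i-z_j|$. Then the $\varepsilon$-polydisc $U$ around $z$ has translates $\sigma U$ that are pairwise disjoint, and $q$ maps each of them homeomorphically onto $q(U)$, which is open because $q$ is an open map. Thus $q$ is a covering map, and its deck group contains $S_r$. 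The group $S_r$ acts transitively on the fibers, and a deck transformation of a connected cover is determined by the image of one point; $\Conf_r(\C)$ is connected because it is the complement of a complex hypersurface in $\C^r$. Together these give that the deck group is exactly $S_r$ and that the covering is regular (Galois). Transporting everything through $\Phi$ and $\Zmap$ yields the statement for $\pi_\star$.
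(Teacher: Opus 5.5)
Your proof is correct and follows the same route as the paper. The paper also pairs the forgetful map with the inverse $(z_1,\ldots,z_r)\mapsto([\psi],z_1,\ldots,z_r)$ built from $\prod_j(z-z_j)$ and the monic chart, then uses freeness of the $S_r$ action to conclude; you additionally make explicit why $\Psi\circ\Phi=\mathrm{id}$, the local triviality via the $\varepsilon$-polydisc, and why the deck group is all of $S_r$ (connectedness of $\Conf_r(\C)$), which the paper leaves implicit.
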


\begin{proof}
Given an ordered configuration $(z_1,\ldots,z_r)\in\Conf_r(\C)$, form the monic polynomial
\begin{equation}
  P(z)=\prod_{j=1}^{r}(z-z_j).
\end{equation}
By the projective monic chart, this polynomial determines a unique point $[\psi]\in\Sr$. This gives a map
\begin{equation}
  \Conf_r(\C)\longrightarrow \widetilde{\mathcal S}_r^{\mathrm{reg}},
  \qquad
  (z_1,\ldots,z_r)\longmapsto ([\psi],z_1,\ldots,z_r).
\end{equation}
The inverse simply forgets the state and keeps the ordered zero configuration. These maps are inverse to each other. They are continuous, and locally holomorphic after choosing local root branches. The action of $S_r$ permutes the ordered zeros, is free, and leaves the underlying projective state unchanged. Therefore $\pi_\star$ is identified with the standard quotient covering of ordered by unordered configurations.
\end{proof}

\begin{corollary}[Pure-braid exact sequence]
\label{cor:pure-braid-exact-sequence}
The ordered stellar cover yields the standard exact sequence
\begin{equation}
  1\longrightarrow P_r\longrightarrow B_r\longrightarrow S_r\longrightarrow 1,
  \label{eq:pure-braid-exact-sequence}
\end{equation}
where
\begin{equation}
  P_r\simeq \pi_1(\widetilde{\mathcal S}_r^{\mathrm{reg}})
\end{equation}
is the pure braid group.
\end{corollary}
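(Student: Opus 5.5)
The exact sequence will come from the long exact homotopy sequence of the regular covering in \cref{prop:ordered-cover}. By that proposition, $\pi_\star:\widetilde{\mathcal S}_r^{\mathrm{reg}}\to\Sr$ is identified with the quotient map $\Conf_r(\C)\to\Conf_r(\C)/S_r$, a regular covering with free deck action of $S_r$. Fix a base point $[\psi_0]\in\Sr$ and a labeling $(z_1^0,\ldots,z_r^0)$ of its zeros, which is a base point of the cover.

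First I will record connectedness. $\Conf_r(\C)$ is the complement in $\C^r$ of the finitely many complex hyperplanes $\{z_i=z_j\}$. Each hyperplane has real codimension two, so the complement is path-connected. Hence the total space of the cover is connected, and the fiber $S_r$ is a discrete set on which $\pi_1$ of the base acts transitively.

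Next I will apply covering space theory to $\pi_\star$. The induced map $\pi_{\star*}:\pi_1(\widetilde{\mathcal S}_r^{\mathrm{reg}})\to\pi_1(\Sr)$ is injective, by the homotopy lifting property. Since the cover is regular and connected, its image is a normal subgroup and the quotient is isomorphic to the deck group $S_r$. Concretely, the map $\pi_1(\Sr)\to S_r$ sends a loop $\gamma$ to the permutation $\sigma$ with the following property: the lift of $\gamma$ starting at $(z_1^0,\ldots,z_r^0)$ ends at $(z_{\sigma(1)}^0,\ldots,z_{\sigma(r)}^0)$. Physically, $\sigma$ is the permutation of stellar zeros induced by transport around the loop. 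Surjectivity of this map follows because the total space is path-connected: join the base point to any permuted labeling by a path and project it down to get a loop realizing that permutation. The kernel consists exactly of the loops whose lifts close up, which is the image of $\pi_{\star*}$. This gives
\[
1\to\pi_1(\widetilde{\mathcal S}_r^{\mathrm{reg}})\to\pi_1(\Sr)\to S_r\to 1.
\]

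Finally I will identify the terms. The middle group is $B_r$ by \cref{cor:braid-group}. The left group is $\pi_1(\Conf_r(\C))$ by \cref{prop:ordered-cover}, and this is by definition (Artin, Fadell--Neuwirth) the pure braid group $P_r$. It remains to check that the resulting map $B_r\to S_r$ is the standard one, sending each generator $\sigma_i$ to the transposition $(i\;i{+}1)$. The step needing the most care is confirming that the braid/pure-braid identifications used in \cref{cor:braid-group} are compatible with the covering, so that the sequence obtained really is the standard one rather than merely an abstract extension with the right terms. I would settle this by noting that the cited identification $\pi_1(\Conf_r(\C)/S_r)\cong B_r$ is itself defined via lifts to $\Conf_r(\C)$: a half-twist exchanging two adjacent points lifts to a path swapping those two labels. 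The two constructions therefore coincide by construction.
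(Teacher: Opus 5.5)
Your proposal is correct and follows the same route as the paper, namely the exact sequence of the regular covering $\Conf_r(\C)\to\Conf_r(\C)/S_r$ from \cref{prop:ordered-cover}. It also adds a check the paper leaves implicit: $\Conf_r(\C)$ is path-connected, which is what makes the map onto $S_r$ surjective. One cosmetic convention point: with the usual composition $(\sigma\tau)(i)=\sigma(\tau(i))$, your rule ``the lift ends at $(z^0_{\sigma(1)},\ldots,z^0_{\sigma(r)})$'' gives an anti-homomorphism, so send $\gamma$ to $\sigma^{-1}$ instead (or compose permutations left to right); this changes neither the kernel nor surjectivity.
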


\begin{proof}
By \cref{prop:ordered-cover}, $\widetilde{\mathcal S}_r^{\mathrm{reg}}\simeq\Conf_r(\C)$, hence its fundamental group is the pure braid group $P_r$. The covering has deck group $S_r$, and the quotient has fundamental group $B_r$. The associated exact sequence of a regular covering gives \eqref{eq:pure-braid-exact-sequence}. Concretely, the map $B_r\to S_r$ records only the final permutation of the labeled zeros, while its kernel consists of braids in which each zero returns to its initial label.
\end{proof}

\begin{remark}[Why this cover matters]
The full braid contains more information than the final permutation of zeros. The covering monodromy records the map $B_r\to S_r$, whereas the isotopy class of the strands records the element of $B_r$ itself. This distinction will be important when comparing families with the same permutation monodromy but different pure-braid components.
\end{remark}

\section{Stellar braid monodromy of parameterized families}

Let $\mathcal M$ be a connected parameter space and let
\begin{equation}
  \Phi:\mathcal M\longrightarrow \Sr
\end{equation}
be a continuous family of regular finite-rank stellar states. The stellar configuration map gives
\begin{equation}
  \Zmap\circ\Phi:
  \mathcal M\longrightarrow \Conf_r(\C)/S_r.
\end{equation}
Thus a path in parameter space transports an unordered configuration of stellar zeros.

\begin{definition}[Stellar braid of a loop]
Let $\gamma:[0,1]\to\mathcal M$ be a loop based at $m_0$. The associated stellar braid is the braid represented by the loop
\begin{equation}
  (\Zmap\circ\Phi\circ\gamma):[0,1]
  \longrightarrow \Conf_r(\C)/S_r.
\end{equation}
Equivalently, after choosing an initial labeling of the zeros over $m_0$, one lifts the motion locally to ordered configurations and records the isotopy class of the resulting $r$ strands in $\C\times[0,1]$.
\end{definition}

\begin{proposition}[Path lifting and stellar strands]
\label{prop:path-lifting-strands}
Let $\eta:[0,1]\to\mathcal M$ be a path and choose an ordering of the zeros of $\Phi(\eta(0))$. Then there exists a unique lift of the zero motion to the ordered stellar cover,
\begin{equation}
  (z_1(t),\ldots,z_r(t))\in\Conf_r(\C),
  \qquad 0\leq t\leq 1,
\end{equation}
starting from the chosen ordering. The curves
\begin{equation}
  t\longmapsto (z_j(t),t)\in\C\times[0,1]
\end{equation}
form $r$ disjoint strands. If $\eta$ is a loop, the isotopy class of these strands is the stellar braid of the loop.
\end{proposition}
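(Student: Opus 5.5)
The plan is to reduce everything to the covering-space structure established in \cref{prop:ordered-cover}. The composite
\begin{equation*}
  \Zmap\circ\Phi\circ\eta:[0,1]\longrightarrow \Conf_r(\C)/S_r
\end{equation*}
is a continuous path, since $\Phi$ is continuous by the standing regularity convention (\cref{rem:standing-regularity}) and $\Zmap$ is a homeomorphism by \cref{thm:stellar-configuration}. The quotient map $q:\Conf_r(\C)\to\Conf_r(\C)/S_r$ is a regular covering with deck group $S_r$, and under the identification of \cref{prop:ordered-cover} it coincides with $\pi_\star$. A chosen ordering $(z_1(0),\ldots,z_r(0))$ of the zeros of $\Phi(\eta(0))$ is exactly a point of the fiber $q^{-1}$ over the initial unordered configuration.

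\textbf{Existence and uniqueness.} The unique path-lifting property of coverings then gives a unique continuous lift $t\mapsto (z_1(t),\ldots,z_r(t))\in\Conf_r(\C)$ starting at the chosen ordering. Explicitly, one would also note that $P_{\Phi(\eta(t))}(z)=\prod_j (z-z_j(t))$ for every $t$, because the lift projects onto $\Zmap\circ\Phi\circ\eta$. It follows that the lifted curve lies in the ordered stellar cover \eqref{eq:ordered-stellar-cover}. To make the argument less abstract, I would remark that local continuity of the labeled roots can equally be obtained from the argument-principle contours mentioned in \cref{lem:vieta-identification}: on small time intervals each simple root stays in its own disc. Uniqueness follows because the fiber is discrete and any two lifts agreeing at one time agree on a connected open-closed set.

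\textbf{Disjointness of strands.} The strands $t\mapsto (z_j(t),t)$ are pairwise disjoint because the lift lies in $\Conf_r(\C)$, so $z_i(t)\neq z_j(t)$ for $i\neq j$ at every $t$. Strands with different $t$-coordinates are trivially distinct points of $\C\times[0,1]$. Each strand is monotone in $t$, so the collection is a geometric braid.

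\textbf{Identification with the stellar braid.} If $\eta$ is a loop, then $q$ of the endpoint equals $q$ of the start, so $(z_j(1))$ is a permutation of $(z_j(0))$ and the strands close up to a geometric braid. The final step, which I expect to be the only real point, is to check that the isotopy class of this geometric braid is the element of $\pi_1(\Conf_r(\C)/S_r)\simeq B_r$ represented by $\Zmap\circ\Phi\circ\gamma$. This is the standard Artin/Fox--Neuwirth identification of geometric braids with loops in unordered configuration space, and I would cite it rather than reprove it. I would add two remarks about well-definedness. First, homotopic loops rel basepoint lift, by the homotopy lifting property, to isotopic strand systems. Second, changing the initial ordering conjugates the representative, consistent with the standing conventions.
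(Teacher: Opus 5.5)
Your proposal is correct and follows essentially the same route as the paper's proof. Both arguments identify the ordered stellar cover with the regular $S_r$-covering $\Conf_r(\C)\to\Conf_r(\C)/S_r$ via \cref{prop:ordered-cover}, use unique path lifting from the chosen labeling, obtain disjoint strands because the lift stays in $\Conf_r(\C)$, and appeal to the standard identification of geometric braids with loops in unordered configuration space. Your extra remarks (the check that the lift lies in \eqref{eq:ordered-stellar-cover}, homotopy lifting, and the dependence on the labeling) only fill in what the paper leaves implicit; note the small slip that $\Zmap\circ\Phi\circ\gamma$ should read $\Zmap\circ\Phi\circ\eta$.
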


\begin{proof}
The ordered stellar cover is a covering of the regular stratum by \cref{prop:ordered-cover}. The path-lifting property of coverings gives a unique lift once an initial point in the fiber, i.e. an initial labeling of the zeros, is chosen. Since the path stays inside the regular stratum, the lifted roots remain pairwise distinct for every $t$. Therefore the graphs of the lifted roots in $\C\times[0,1]$ are disjoint strands. For a loop in the unordered configuration space, the endpoint of the lifted ordered configuration may differ from the starting one by a permutation; the resulting strand diagram is exactly the usual geometric representative of a braid.
\end{proof}

\begin{definition}[Stellar braid monodromy]
Choose a base point $m_0\in\mathcal M$. The stellar braid monodromy of the based family $(\Phi,m_0)$ is the induced homomorphism
\begin{equation}
  \rho_\Phi
  =
  (\Zmap\circ\Phi)_*:
  \pi_1(\mathcal M,m_0)
  \longrightarrow
  \pi_1(\Conf_r(\C)/S_r,\Zmap(\Phi(m_0)))
  \simeq B_r.
  \label{eq:stellar-monodromy}
\end{equation}
\end{definition}

\begin{proposition}[Well-defined monodromy]
\label{prop:well-defined-monodromy}
The map $\rho_\Phi$ is a group homomorphism. It is invariant under based homotopy of loops in $\mathcal M$. If the initial labeling of the zeros or the base configuration is changed, the corresponding explicit braid words are conjugated. Hence an unbased closed family naturally determines a conjugacy class in $B_r$.
\end{proposition}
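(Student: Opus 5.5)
The plan is to obtain every claim in \cref{prop:well-defined-monodromy} from the functoriality of the fundamental group, combined with the identification of \cref{cor:braid-group}. By definition, $\rho_\Phi=(\Zmap\circ\Phi)_*$ is the map induced on $\pi_1$ by a continuous based map
\[
\Zmap\circ\Phi:(\mathcal M,m_0)\to\bigl(\Conf_r(\C)/S_r,\Zmap(\Phi(m_0))\bigr).
\]
So the first step is to recall that any continuous based map $f$ induces $f_*[\gamma]=[f\circ\gamma]$. This is well defined on based homotopy classes: if $H$ is a based homotopy $\gamma\simeq\gamma'$, then $f\circ H$ is a based homotopy $f\circ\gamma\simeq f\circ\gamma'$. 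It is a homomorphism because $f\circ(\gamma\ast\gamma')=(f\circ\gamma)\ast(f\circ\gamma')$ holds on the nose. Continuity of $\Zmap$ comes from \cref{thm:stellar-configuration}, and $\Phi$ is continuous by the standing convention of \cref{rem:standing-regularity}. Composing with the isomorphism $\pi_1(\Conf_r(\C)/S_r)\simeq B_r$ of \cref{cor:braid-group} preserves these properties, which gives homomorphy and homotopy invariance.

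The second step treats a change of labeling. I will make the identification with $B_r$ concrete through the strand picture of \cref{prop:path-lifting-strands}. A based configuration $Q_0$, together with an ordering $(z_1,\dots,z_r)$ of its points, determines an isomorphism $\iota:\pi_1(\Conf_r(\C)/S_r,Q_0)\to B_r$ by reading strands against a fixed standard configuration. To do this, I choose a path $\delta$ in $\Conf_r(\C)$ from the standard ordered configuration to $(z_1,\dots,z_r)$ and set $\iota([\ell])=[\bar\delta\ast\ell\ast\delta^{-1}]$, where $\bar\delta$ is the image of $\delta$ in the quotient. A different ordering or a different choice of $\delta$ replaces $\delta$ by $\delta'$. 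Then the two isomorphisms differ by conjugation by the class of the loop $\bar\delta'\ast\bar\delta^{-1}$ at the standard point. Hence $\iota'=c_\beta\circ\iota$ for some $\beta\in B_r$, and the explicit braid words are conjugated.

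The third step treats a change of base point. Let $m_1$ be another base point and $\sigma$ a path from $m_0$ to $m_1$. The change-of-basepoint isomorphism $[\gamma]\mapsto[\sigma^{-1}\ast\gamma\ast\sigma]$ commutes with $(\Zmap\circ\Phi)_*$, up to the corresponding change-of-basepoint isomorphism along $\Zmap\circ\Phi\circ\sigma$ downstairs. Combining this with step two, the braid assigned to a loop changes only by conjugation.

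For an unbased closed family, that is, a free loop $S^1\to\mathcal M$, any choice of base point on the loop and any labeling give braids that differ by conjugation. Free homotopy classes of loops correspond to conjugacy classes in $\pi_1$, so the family determines a well-defined conjugacy class in $B_r$.

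I expect the only real obstacle to be making precise what ``explicit braid words'' means, that is, fixing the identification $\iota$ through a reference configuration and connecting paths. Once $\iota$ is fixed this way, every claim reduces to standard facts about $\pi_1$ and change of base point.
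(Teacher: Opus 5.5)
Your proposal is correct and follows the same route as the paper. Functoriality of $\pi_1$ under the continuous map $\Zmap\circ\Phi$ gives the homomorphism property and based-homotopy invariance, and changing the labeling or the base point acts by conjugation, so a free loop yields a conjugacy class in $B_r$. Your explicit isomorphism $\iota$, built from a reference ordered configuration and a connecting path $\delta$, spells out the identification with $B_r$ that the paper leaves implicit. One small typo: in the definition of $\iota$, $\delta^{-1}$ should read $\bar\delta^{-1}$.
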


\begin{proof}
The first statement follows from functoriality of the fundamental group: the continuous map $\Zmap\circ\Phi$ induces a group homomorphism on based fundamental groups. Based homotopic loops have the same image in $\pi_1(\Conf_r(\C)/S_r)$, proving homotopy invariance.

To obtain explicit braid words, one chooses a base configuration and an ordering of its $r$ points. A different ordering acts by changing the identification of the fiber with a labeled configuration, which conjugates the braid representative. Similarly, changing the base point along a path changes the induced fundamental-group isomorphism by conjugation. Therefore the based monodromy is canonical after choices, while the unbased invariant is naturally a conjugacy class.
\end{proof}

\begin{proposition}[Stability under regular homotopy]
\label{prop:regular-homotopy-stability}
Let $\gamma_0$ and $\gamma_1$ be two loops in $\mathcal M$ based at $m_0$. If they are homotopic through loops whose images under $\Phi$ remain in $\Sr$, then they have the same stellar braid monodromy. Conversely, the braid associated with a loop can change only if a deformation crosses the stellar discriminant or leaves the degree-$r$ chart.
\end{proposition}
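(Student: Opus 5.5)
The forward direction reduces to the functoriality already used in \cref{prop:well-defined-monodromy}. Let $H:[0,1]\times[0,1]\to\mathcal M$ be a based homotopy from $\gamma_0$ to $\gamma_1$, so $H(0,t)=\gamma_0(t)$, $H(1,t)=\gamma_1(t)$ and $H(s,0)=H(s,1)=m_0$. By hypothesis every intermediate loop $\gamma_s=H(s,\cdot)$ is mapped by $\Phi$ into $\Sr$. Hence $\Zmap\circ\Phi\circ H$ is a continuous map into $\Conf_r(\C)/S_r$, using \cref{thm:stellar-configuration}. It fixes the base point $\Zmap(\Phi(m_0))$ along the two boundary edges. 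It is therefore a based homotopy between the loops $\Zmap\circ\Phi\circ\gamma_0$ and $\Zmap\circ\Phi\circ\gamma_1$, and these define the same element of $\pi_1(\Conf_r(\C)/S_r)\simeq B_r$ by \cref{cor:braid-group}.

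To make the statement concrete at the level of strands, I would also lift. Fix a labeling of the zeros over $m_0$. By the homotopy lifting property of the covering in \cref{prop:ordered-cover}, the square $[0,1]^2$ lifts uniquely to $\Conf_r(\C)$, starting from this labeling along the edge $t=0$. This gives a continuous family of strand diagrams $t\mapsto(z_1(s,t),\ldots,z_r(s,t))$ in which no two strands meet, for every $s$. Along the edge $t=1$ the endpoint lies in the discrete fiber over the base configuration, so it is constant in $s$. This gives a strand isotopy with fixed endpoints between the diagrams of \cref{prop:path-lifting-strands} for $\gamma_0$ and $\gamma_1$.

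The converse is the contrapositive of the same argument. Suppose two based-homotopic loops have different stellar braids. Then no based homotopy between them can have $\Phi\circ H$ contained in $\Sr$. So for every such homotopy, some point $(s,t)$ has $\Phi(H(s,t))\notin\Sr$. Because $\Sr=U_r\setminus\Delta_{\star,r}$, at that point either $c_r=0$ (the state leaves the degree-$r$ chart) or $\Disc(P_\psi)=0$ (it meets the stellar discriminant), by \cref{lem:discriminant-hypersurface}.

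The main obstacle is interpretive. If $\Phi$ is only defined as a map into $\Sr$, leaving $\Sr$ is meaningless. So I would read the converse in the setting where $\Phi$ extends to a map into $\Proj(\Hil_r)$, or where the homotopy is taken in an ambient family. I would state explicitly that the braid is an invariant of the homotopy class inside $\Phi^{-1}(\Sr)$, and that any change of braid forces the deformation to leave this open set.
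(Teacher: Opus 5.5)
Your proof is correct and follows the paper's argument. The forward direction is homotopy invariance of $(\Zmap\circ\Phi)_*$, and the converse is the observation that leaving $\Sr=U_r\setminus\Delta_{\star,r}$ means either $c_r=0$ or a collision of stellar zeros; your homotopy-lifting strand picture and your remark that the converse only has content when $\Phi$ extends to an ambient family are useful clarifications the paper leaves implicit.
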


\begin{proof}
The first statement is the usual homotopy invariance of the induced map on fundamental groups applied to $\Zmap\circ\Phi$. If a deformation crosses $\Delta_r$, the zero configuration ceases to be a point of $\Conf_r(\C)/S_r$ at the crossing because at least two zeros coincide. If the leading coefficient vanishes, the affine degree-$r$ chart is left. These are precisely the two ways in which the loop can fail to remain in the regular configuration space used to define the braid.
\end{proof}

\begin{proposition}[Realization of braids]
\label{prop:realization}
Every braid $\beta\in B_r$ is realized as the stellar braid monodromy of some loop in the full regular stratum $\Sr$.
\end{proposition}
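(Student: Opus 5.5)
The plan is to take the tautological family: $\mathcal M=\Sr$ itself with $\Phi=\mathrm{id}$. Then $\rho_\Phi=\Zmap_*:\pi_1(\Sr,m_0)\to\pi_1(\Conf_r(\C)/S_r,\Zmap(m_0))$. By \cref{thm:stellar-configuration}, $\Zmap$ is a biholomorphism, in particular a homeomorphism, so $\Zmap_*$ is an isomorphism of fundamental groups. By \cref{cor:braid-group} the target is $B_r$. Hence every $\beta\in B_r$ equals $\Zmap_*[\gamma]$ for some based loop $\gamma$ in $\Sr$, and that loop realizes $\beta$ as its stellar braid monodromy. Here the base point is $m_0$ with a fixed labeling of its zeros, and the identification with $B_r$ uses that labeling, consistent with \cref{prop:well-defined-monodromy}.

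To make this concrete, and to avoid any dependence on abstract identifications, I would also give an explicit construction.

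\begin{enumerate}[label=\arabic*.,leftmargin=2em]
\item Fix the base configuration $z_j^0=j$ for $j=1,\ldots,r$, which is collinear on the real axis.
\item For each standard Artin generator $\sigma_i$, write down a loop in $\Conf_r(\C)/S_r$. The points $z_i,z_{i+1}$ rotate by half a turn about their midpoint, $z_{i,i+1}(t)=i+\tfrac12\mp\tfrac12 e^{i\pi t}$, and all other points stay fixed. These loops stay in $\Conf_r(\C)$ since the rotating pair stays inside the disk of radius $\tfrac12$ about $i+\tfrac12$, which meets no other point.
\item Push each loop to a loop in $\Sr$ by applying the Vieta map $\prod_j(z-z_j(t))$. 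By \cref{lem:projective-monic-chart}, this gives an explicit loop of projective finite-Fock states with $c_n(t)=\sqrt{n!}\,b_n(t)$.
\item Realize an arbitrary braid word $\sigma_{i_1}^{\epsilon_1}\cdots\sigma_{i_k}^{\epsilon_k}$ by concatenating these loops and their reverses. Because $\rho_\Phi$ is a homomorphism (\cref{prop:well-defined-monodromy}), the concatenation has monodromy equal to the product.
\end{enumerate}

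Since the $\sigma_i$ generate $B_r$ (Artin), every braid is realized.

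The only point needing care is checking that the explicit half-twist loop really represents $\sigma_i$ under the chosen identification $\pi_1(\Conf_r(\C)/S_r)\simeq B_r$. That identification is the standard geometric one, in which strands are graphs $t\mapsto(z_j(t),t)$, as in \cref{prop:path-lifting-strands}. Under it, the half-twist's strand diagram is exactly the elementary crossing, possibly with the sign convention fixed by the orientation choice. If the opposite convention is used, one swaps $\sigma_i$ with $\sigma_i^{-1}$, which does not affect surjectivity. The abstract argument via $\Zmap_*$ already suffices, so this is only a matter of making the realization explicit.
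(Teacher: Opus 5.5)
Your proposal is correct and follows the paper's argument. The paper also takes a loop $C(t)$ in $\Conf_r(\C)/S_r$ representing $\beta$ and pulls it back into $\Sr$ through the monic polynomial $\prod_j(z-z_j(t))$ and the monic chart; this is the inverse of $\Zmap$, the same map whose induced isomorphism $\Zmap_*$ you invoke. Your explicit half-twist-and-concatenation construction is not needed for the proof, but it is correct and parallels the paper's realization of $\sigma_1,\sigma_2$ in the cubic laboratory (\cref{cor:cubic-generators}).
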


\begin{proof}
Represent $\beta$ by a geometric braid, equivalently by a loop
\begin{equation}
  C(t)=\{z_1(t),\ldots,z_r(t)\}
  \in \Conf_r(\C)/S_r,
  \qquad C(0)=C(1),
\end{equation}
whose points remain pairwise distinct for all $t$. For each $t$, define the monic polynomial
\begin{equation}
  P_t(z)=\prod_{j=1}^{r}(z-z_j(t)).
\end{equation}
Although the labeling used to write the product is auxiliary, the polynomial is symmetric in the roots and therefore depends only on the unordered configuration $C(t)$. Its coefficients vary continuously, and smoothly whenever the braid is smooth. By the projective monic chart, $P_t$ determines a unique projective state $[\psi_t]\in\Sr$. The loop $t\mapsto[\psi_t]$ has stellar zero motion $C(t)$, hence its stellar braid is precisely $\beta$.
\end{proof}

\begin{remark}
\Cref{prop:realization} is a statement about the full regular stratum. A restricted physical subfamily, such as a two-parameter trinomial family, need not realize all braids in $B_r$. Its image may probe only a distinguished subgroup or a distinguished set of conjugacy classes.
\end{remark}

\section{Stellar discriminant and local braid generators}

The stellar discriminant is the locus where the regular transport of zeros can fail. In the degree-$r$ chart it is
\begin{align}
  \Delta_{\star,r}
  & =
  \{[\psi]\in U_r: F_\psi \text{ has at least one multiple zero}\} \\
  & =
  \{[\psi]\in U_r: \Disc(F_\psi)=0\}.
\end{align}
The regular stratum is the complement $U_r\setminus\Delta_{\star,r}$.

\begin{lemma}[Local square-root normal form]
\label{lem:local-square-root-normal-form}
Let $P_0$ be a monic degree-$r$ polynomial with exactly one double root $z_0$ and all other roots simple. Near $P_0$, the discriminant is locally a smooth hypersurface away from higher multiplicity strata, and a one-parameter deformation transverse to it has the local two-root normal form
\begin{equation}
  w^2-\lambda=0,
  \label{eq:local-square-root-model}
\end{equation}
up to multiplication by a nonvanishing analytic factor and analytic changes of coordinates.
\end{lemma}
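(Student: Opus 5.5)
The plan is to localize near the double root, using a holomorphic factorization of the polynomial. Write $P_0(z)=(z-z_0)^2Q_0(z)$ with $Q_0$ monic of degree $r-2$ and $Q_0(z_0)\neq0$. Since the roots of $Q_0$ are simple and different from $z_0$, the polynomials $(z-z_0)^2$ and $Q_0$ are coprime. The tool is the local analytic factorization theorem, which I will prove by the implicit function theorem applied to the multiplication map
\begin{equation}
  m:\{\text{monic quadratics}\}\times\{\text{monic degree-}(r-2)\}\longrightarrow\{\text{monic degree-}r\},
  \qquad (A,Q)\longmapsto AQ.
\end{equation}
Its differential at $(A_0,Q_0)=((z-z_0)^2,Q_0)$ is $(\dot A,\dot Q)\mapsto \dot A Q_0+A_0\dot Q$, with $\deg\dot A\le1$ and $\deg\dot Q\le r-3$. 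Because $A_0$ and $Q_0$ are coprime, the resultant (Sylvester) argument shows this map is an isomorphism onto polynomials of degree $\le r-1$. Hence every monic $P$ near $P_0$ factors uniquely as
\begin{equation}
  P=A_PQ_P,\qquad A_P(z)=z^2+p(P)z+q(P),
\end{equation}
where $p,q,Q_P$ are holomorphic in the coefficients $(b_0,\ldots,b_{r-1})$ and $Q_P$ has simple roots away from the two roots of $A_P$.

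The next step is to split the discriminant. From $\Disc(AQ)=\Disc(A)\Disc(Q)\operatorname{Res}(A,Q)^2$, the last two factors are nonvanishing near $P_0$. So, locally, $\Disc(P)=u(P)\,(p^2-4q)$ with $u$ holomorphic and nowhere zero. After completing the square, $w=z+p/2$ and $\lambda=p^2/4-q$, the quadratic factor becomes $w^2-\lambda$, and $Q_P$ is a nonvanishing analytic factor near $z_0$. This gives the model \eqref{eq:local-square-root-model} up to a unit and analytic coordinate changes.

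It remains to show that $\lambda$ is a submersion at $P_0$. Then $\{\lambda=0\}$ is a smooth hypersurface, which matches the discriminant locally because of the unit factor, and a transverse one-parameter deformation $s\mapsto P_s$ has $\lambda(P_s)=cs+O(s^2)$ with $c\neq0$. Reparametrizing by $\lambda$ itself then yields $w^2-\lambda$. I expect this submersion claim to be the main, though mild, obstacle. The map $P\mapsto(A_P,Q_P)$ is a local biholomorphism by the implicit function theorem above, and $(p,q)\mapsto p^2/4-q$ has nonzero $\partial_q$. So $d\lambda\neq0$, and "transverse" can be read simply as $\frac{d}{ds}\lambda(P_s)|_{s=0}\neq0$.

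Finally, the phrase "away from higher multiplicity strata" is automatic here. Near $P_0$ the only possible collision is between the two roots of $A_P$, since the roots of $Q_P$ stay simple and separated. Triple roots and double pairs therefore lie outside a small enough neighborhood of $P_0$.
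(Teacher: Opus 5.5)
Your proof is correct and has the same skeleton as the paper's: split off the quadratic factor carrying the two colliding zeros, complete the square, and take $\lambda=p^2/4-q$, which is the paper's $a_1^2/4-a_0$. By uniqueness of the local factor, your $A_P$ is the same quadratic the paper obtains, so the two $\lambda$'s agree. The key tool differs, though.

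\textbf{The paper's route.} It gets the local quadratic factor from the Weierstrass preparation theorem in a small disc around $z_0$, with a nonvanishing analytic cofactor. It then only asserts that $\lambda$ is a transverse coordinate to the discriminant away from higher strata.

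\textbf{Your route.} You apply the implicit function theorem to the multiplication map of monic polynomials. Coprimality of $(z-z_0)^2$ and $Q_0$, equivalently $\operatorname{Res}\neq0$, makes its differential invertible. This gives a factorization $P=A_PQ_P$ that is global in $z$, with a polynomial cofactor. It also makes $(p,q,Q_P)$ local holomorphic coordinates on the coefficient space.

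\textbf{What your route buys.} It proves two things the paper leaves unproved. First, the identity $\Disc(P)=\Disc(Q_P)\operatorname{Res}(A_P,Q_P)^2\,(p^2-4q)$ shows the discriminant is locally a unit times $\lambda$. Second, $\partial_q\lambda=-1$ gives $d\lambda\neq0$. Together these establish that $\Delta_{\star,r}$ is smooth near $P_0$ and give ``transverse'' a precise meaning, namely $\frac{d}{ds}\lambda(P_s)|_{s=0}\neq0$.

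\textbf{What the paper's route buys.} The Weierstrass argument is shorter and applies verbatim to any analytic family of holomorphic functions. Yours relies on polynomial structure (Sylvester matrix, coprimality), which is exactly what the finite-Fock chart provides.
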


\begin{proof}
Translate the double root to the origin. In a small disc containing no other roots, the Weierstrass preparation theorem writes the local factor of a nearby polynomial as
\begin{equation}
  w^2+a_1 w+a_0,
\end{equation}
where $a_0,a_1$ are analytic functions of the polynomial coefficients and vanish at $P_0$. Completing the square gives
\begin{equation}
  \left(w+\frac{a_1}{2}\right)^2-
  \left(\frac{a_1^2}{4}-a_0\right)=0.
\end{equation}
The local discriminant is therefore given by
\begin{equation}
  \lambda:=\frac{a_1^2}{4}-a_0=0.
\end{equation}
Away from higher multiplicity points, this is a transverse coordinate to the discriminant. The remaining roots are simple and lie outside the chosen disc, so they vary analytically and do not affect the local two-root model.
\end{proof}

\begin{proposition}[Local half-twist around a simple collision]
\label{prop:local-half-twist}
Near a generic point of $\Delta_{\star,r}$ where exactly two zeros coalesce and all other zeros remain simple, a small loop transverse to the discriminant produces a local half-twist of the two colliding stellar zeros. Hence its braid monodromy is a conjugate of $\sigma_i$ or $\sigma_i^{-1}$ in $B_r$, depending on orientation and labeling conventions.
\end{proposition}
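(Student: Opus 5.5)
The plan is to reduce to the local square-root model of \cref{lem:local-square-root-normal-form} and then read off the braid of the two colliding zeros directly.

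\textbf{Step 1 (local coordinates).} Fix a generic point $P_0\in\Delta_{\star,r}$ with a single double root $z_0$ and simple roots $z_3^0,\ldots,z_r^0$. Choose a small disc $D$ around $z_0$ and disjoint small discs $D_3,\ldots,D_r$ around the simple roots. By continuity of roots (as used in \cref{lem:vieta-identification}), there is a neighbourhood $N$ of $P_0$ in $U_r$ on which every polynomial has exactly two roots in $D$, counted with multiplicity, and exactly one simple root in each $D_j$. On $N$, \cref{lem:local-square-root-normal-form} supplies an analytic function $\lambda$, vanishing exactly on $\Delta_{\star,r}\cap N$, such that the two roots in $D$ are
\begin{equation}
  w_\pm=-\tfrac{a_1}{2}\pm\sqrt{\lambda}.
\end{equation}
The roots in the $D_j$ are single-valued analytic functions on $N$. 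A ``small loop transverse to the discriminant'' is taken to be a loop $\gamma$ in $N\setminus\Delta_{\star,r}$ along which $\lambda\circ\gamma$ winds once around $0$. Genericity, meaning smoothness of the discriminant at $P_0$, gives $d\lambda\neq0$, so such loops exist and are unique up to homotopy in $N\setminus\Delta$ and orientation. I expect to use the fact that $N\setminus\Delta$ is locally a product of a punctured disc with a ball.

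\textbf{Step 2 (strand motion).} Along $\gamma$, write $\lambda(\gamma(t))$ with winding number $\pm1$. The two roots $w_\pm(t)$ are then obtained by continuing $\sqrt{\lambda}$. After one turn, $\sqrt{\lambda}$ returns to $-\sqrt{\lambda}$, so the two roots exchange. Their difference $2\sqrt{\lambda}$ turns by angle $\pm\pi$, so they rotate around their moving midpoint $-a_1/2$ by a half turn. Meanwhile the other $r-2$ strands stay in their discs $D_j$ and return to their starting points. Contractibility of each disc lets me straighten these strands, and also the midpoint motion inside $D$, by an isotopy through configurations that keeps the points distinct. I would write this isotopy down explicitly by shrinking the midpoint path and the $D_j$-paths to constants.

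\textbf{Step 3 (identification).} The resulting geometric braid is the standard half-twist of two points inside a disc $D$ that contains no other points. This is $\sigma_1^{\pm1}$ when $D$ is conjugated to the standard disc containing the first two punctures. Any disc exchange of two punctures is conjugate to a standard generator by a homeomorphism of the punctured plane, that is, by a braid $\alpha$. Hence the monodromy is $\alpha\sigma_i^{\pm1}\alpha^{-1}$. The sign is fixed by the winding orientation and the labeling, and the base point is handled by \cref{prop:well-defined-monodromy}, which changes the braid only by conjugation.

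\textbf{Main obstacle.} The delicate step is Step 3: justifying that the half-twist in a small disc is conjugate to a standard $\sigma_i$. I would argue this via an arc joining the two colliding zeros that is disjoint from the other zeros. A diffeomorphism of $\C$ carrying the configuration to the standard one and this arc to the segment $[i,i+1]$ exists, because any two arcs in the punctured plane are related by such a map. It induces the conjugating braid, and the disc half-twist is carried to $\sigma_i^{\pm1}$.
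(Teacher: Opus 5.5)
Your proposal is correct and follows essentially the same route as the paper. You reduce to the square-root model of \cref{lem:local-square-root-normal-form}, observe that continuing $\sqrt{\lambda}$ once around the origin exchanges the two colliding zeros as a half-twist, treat the other $r-2$ zeros as vertical strands up to isotopy, and embed the local exchange as a conjugate of $\sigma_i^{\pm1}$, while making explicit what the paper leaves implicit: the moving midpoint $-a_1/2$, transversality defined by the winding of $\lambda$ with the product structure of $N\setminus\Delta_{\star,r}$, and the arc argument identifying a disc half-twist with a conjugate of a standard generator.
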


\begin{proof}
By \cref{lem:local-square-root-normal-form}, the two colliding roots are locally modeled by
\begin{equation}
  w_\pm(\lambda)=\pm\sqrt{\lambda}.
\end{equation}
When $\lambda$ makes one turn around the origin, the two branches are exchanged. Their graphs in the local disc times the interval form a half-twist. The remaining roots stay simple and disjoint from the local disc, so they contribute only vertical strands up to isotopy. Embedding this local two-strand motion among the $r$ strands gives a conjugate of an elementary Artin generator, with inverse corresponding to the opposite orientation.
\end{proof}

\begin{remark}
This result gives the local geometric meaning of the discriminant: crossing the discriminant corresponds to a collision of stellar zeros, while encircling a smooth point of it produces a braid generator. Thus braid monodromy is controlled by the topology of the complement of the stellar discriminant.
\end{remark}

\begin{remark}[Regular and singular parts of the stellar discriminant]
\label{rem:discriminant-strata}
The preceding proposition uses only the generic part of the stellar discriminant. More precisely, let
\begin{equation}
  \Delta_{\star,r}^{\mathrm{gen}}
  \subset \Delta_{\star,r}
\end{equation}
denote the locus where the polynomial has exactly one double zero and all other zeros are simple. This is the smooth codimension-one part relevant for local half-twists. The complement
\begin{equation}
  \Delta_{\star,r}\setminus \Delta_{\star,r}^{\mathrm{gen}}
\end{equation}
contains higher degeneracies, such as triple zeros or several simultaneous double zeros. These higher strata can produce more complicated local monodromy, but they are not needed for the finite-rank construction developed here. Throughout the explicit examples, small loops are taken around smooth points of the discriminant unless otherwise stated.
\end{remark}

\section{Trinomial Fock families and explicit discriminants}

We now introduce a simple class of finite-rank families that is both physically meaningful and algebraically nontrivial. For integers $1\leq p<q$, consider
\begin{equation}
  \ket{\psi_{u,v}^{(p,q)}}
  =
  \ket{0}+u\ket{p}+v\ket{q}.
  \label{eq:trinomial-state}
\end{equation}
Its Bargmann polynomial is
\begin{equation}
  F_{u,v}^{(p,q)}(z)
  =
  1+A z^p+B z^q,
  \qquad
  A=\frac{u}{\sqrt{p!}},
  \qquad
  B=\frac{v}{\sqrt{q!}}.
  \label{eq:trinomial-polynomial}
\end{equation}
We assume $B\neq 0$, so that the degree is exactly $q$. This family is a low-dimensional physical slice of the full degree-$q$ stellar stratum. Its regular parameter space will be denoted by
\begin{equation}
  \mathcal T_{p,q}^{\mathrm{reg}}
  =
  \left\{(A,B)\in\C\times\C^*:
  1+A z^p+B z^q\text{ has }q\text{ simple roots}
  \right\}.
  \label{eq:trinomial-regular-parameter-space}
\end{equation}

\begin{proposition}[Discriminant of trinomial Fock families]
\label{prop:trinomial-discriminant}
Let
\begin{equation}
  F(z)=1+A z^p+B z^q,
  \qquad
  1\leq p<q,
  \qquad
  B\neq 0.
  \label{eq:trinomial-polynomial-general}
\end{equation}
Set
\begin{equation}
  d=\gcd(p,q),
  \qquad
  p=d\bar p,
  \qquad
  q=d\bar q,
  \qquad
  \gcd(\bar p,\bar q)=1.
\end{equation}
Then the multiple-root locus in the parameter space $\C\times\C^*$ is the reduced plane curve
\begin{equation}
  A^{\bar q}
  =
  (-1)^{\bar q}
  \frac{\bar q^{\bar q}}
  {\bar p^{\bar p}(\bar q-\bar p)^{\bar q-\bar p}}
  B^{\bar p}.
  \label{eq:trinomial-discriminant-AB-reduced}
\end{equation}
In particular, when $p$ and $q$ are coprime,
\begin{equation}
  A^q
  =
  (-1)^q\frac{q^q}{p^p(q-p)^{q-p}}B^p.
  \label{eq:trinomial-discriminant-AB}
\end{equation}
Equivalently, in the physical coefficients $u,v$, one substitutes
\begin{equation}
  A=\frac{u}{\sqrt{p!}},
  \qquad
  B=\frac{v}{\sqrt{q!}}.
\end{equation}
\end{proposition}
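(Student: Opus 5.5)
The plan is to use the standard criterion that $F$ has a multiple root exactly when $F$ and $F'$ have a common zero. Then eliminate the root variable by hand, exploiting the sparse shape of the trinomial. Since $B\neq0$, the degree is exactly $q$, so by \cref{lem:discriminant-hypersurface} the multiple-root locus is the vanishing locus of $\Disc$ of the monic normalization $F/B$. I would not compute that discriminant directly. Instead I would characterize its zero set through the system $F(z_0)=F'(z_0)=0$.

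\textbf{Forward direction.} Write
\begin{equation}
  F'(z)=z^{p-1}\bigl(pA+qBz^{q-p}\bigr).
\end{equation}
Since $F(0)=1$, a multiple root $z_0$ is nonzero, so $qBz_0^{q-p}=-pA$. Substituting into $F(z_0)=0$ gives
\begin{equation}
  Az_0^p=-\frac{q}{q-p},\qquad Bz_0^q=\frac{p}{q-p}.
\end{equation}
In particular $A\neq0$. Set $w=z_0^d$, so that
\begin{equation}
  Aw^{\bar p}=-\frac{q}{q-p},\qquad Bw^{\bar q}=\frac{p}{q-p}.
\end{equation}
Raise the first identity to the power $\bar q$ and the second to the power $\bar p$, and divide to eliminate $w^{\bar p\bar q}$. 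After writing $p=d\bar p$, $q=d\bar q$, $q-p=d(\bar q-\bar p)$, the powers of $d$ cancel. This yields exactly \eqref{eq:trinomial-discriminant-AB-reduced}.

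\textbf{Converse direction.} Assume $(A,B)\in\C\times\C^*$ satisfies \eqref{eq:trinomial-discriminant-AB-reduced}. Then $A\neq0$, because otherwise $B=0$. Put
\begin{equation}
  \alpha=-\frac{q}{(q-p)A},\qquad \beta=\frac{p}{(q-p)B}.
\end{equation}
The relation says precisely $\alpha^{\bar q}=\beta^{\bar p}$. Because $\gcd(\bar p,\bar q)=1$, choose integers $s,t$ with $s\bar p+t\bar q=1$ and set $w=\alpha^s\beta^t$. Then
\begin{equation}
  w^{\bar p}=\alpha^{s\bar p}\beta^{t\bar p}=\alpha^{s\bar p}\alpha^{t\bar q}=\alpha,
\end{equation}
and symmetrically $w^{\bar q}=\beta$. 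Taking $z_0$ to be any $d$-th root of $w$ recovers $Az_0^p$ and $Bz_0^q$ as above. Reversing the computation then gives $F(z_0)=F'(z_0)=0$. The coprime case \eqref{eq:trinomial-discriminant-AB} is the special case $d=1$.

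\textbf{Reducedness.} Finally I would check that the locus, written as $A^{\bar q}-cB^{\bar p}=0$ with $c\neq0$, is a reduced plane curve. Its gradient $(\bar qA^{\bar q-1},\,-\bar p cB^{\bar p-1})$ vanishes only at $A=B=0$, which is excluded since $B\neq0$. So the curve is smooth, hence reduced, in $\C\times\C^*$. It is in fact irreducible because $\gcd(\bar p,\bar q)=1$: it is parametrized by $s\mapsto(c's^{\bar p},\,s^{\bar q})$ for a suitable constant $c'$ with $c'^{\bar q}=c$.

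I expect the main obstacle to be the converse direction. There one must produce a common solution $w$ of the two power equations, and the B\'ezout identity on $\bar p,\bar q$ is the key device. The reduction $w=z_0^d$ is what makes the non-coprime case go through.
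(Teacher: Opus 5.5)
Your proof is correct and follows the paper's route: use $F(z_0)=0$ and $z_0F'(z_0)=0$ (with $z_0\neq0$ since $F(0)=1$) to get $Az_0^p=-q/(q-p)$ and $Bz_0^q=p/(q-p)$, eliminate the root by raising these to complementary powers, and use coprimality for the converse. The differences are minor refinements: you handle $d>1$ uniformly via $w=z_0^d$ instead of the paper's reduction $F(z)=G(z^d)$, you construct the common root explicitly as $w=\alpha^s\beta^t$ from a B\'ezout identity where the paper picks one of the $q$ solutions of $z^q=p/((q-p)B)$, and you actually verify smoothness (hence reducedness) of the curve, which the paper only asserts.
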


\begin{proof}
First suppose $\gcd(p,q)=1$. A multiple root $z$ satisfies
\begin{equation}
  F(z)=0,
  \qquad
  F'(z)=0.
\end{equation}
Since $F(0)=1$, such a root cannot occur at $z=0$. The derivative equation gives
\begin{equation}
  pA z^{p-1}+qB z^{q-1}=0,
\end{equation}
or, after multiplication by $z$,
\begin{equation}
  pA z^p+qB z^q=0.
  \label{eq:derivative-relation-v4}
\end{equation}
Solving the linear system formed by \eqref{eq:derivative-relation-v4} and
\begin{equation}
  1+A z^p+B z^q=0
\end{equation}
for $Az^p$ and $Bz^q$ gives
\begin{equation}
  A z^p=-\frac{q}{q-p},
  \qquad
  B z^q=\frac{p}{q-p}.
  \label{eq:trinomial-root-relations-v4}
\end{equation}
Raising the first relation to the power $q$, the second to the power $p$, and dividing eliminates $z^{pq}$:
\begin{equation}
  \frac{A^q}{B^p}
  =
  (-1)^q\frac{q^q}{p^p(q-p)^{q-p}}.
\end{equation}
This proves \eqref{eq:trinomial-discriminant-AB} in the coprime case. Conversely, if this relation holds, choose $z\neq0$ satisfying
\begin{equation}
  z^q=\frac{p}{(q-p)B}.
\end{equation}
Because $p$ is invertible modulo $q$, the coprime relation ensures that one of the $q$ choices of $z$ also satisfies the first equation in \eqref{eq:trinomial-root-relations-v4}. Hence the condition is also sufficient.

If $d=\gcd(p,q)>1$, write $p=d\bar p$ and $q=d\bar q$, and set $y=z^d$. Then
\begin{equation}
  F(z)=G(y),
  \qquad
  G(y)=1+A y^{\bar p}+B y^{\bar q}.
\end{equation}
Since any multiple root of $F$ is nonzero, $F'(z)=d z^{d-1}G'(z^d)$ shows that multiple roots of $F$ are exactly the lifts, under $y=z^d$, of multiple roots of $G$. Applying the coprime result to $G$ gives \eqref{eq:trinomial-discriminant-AB-reduced}.
\end{proof}

\begin{corollary}[Regular trinomial parameter space]
\label{cor:regular-trinomial-parameter-space}
With the notation of \cref{prop:trinomial-discriminant}, the regular parameter space is
\begin{equation}
  \mathcal T_{p,q}^{\mathrm{reg}}
  =
  \left\{(A,B)\in\C\times\C^*:
  A^{\bar q}
  \neq
  (-1)^{\bar q}
  \frac{\bar q^{\bar q}}
  {\bar p^{\bar p}(\bar q-\bar p)^{\bar q-\bar p}}
  B^{\bar p}\right\}.
  \label{eq:regular-trinomial-parameter-space-explicit}
\end{equation}
On this complement, the trinomial family defines a braid monodromy representation
\begin{equation}
  \rho_{p,q}:\pi_1(\mathcal T_{p,q}^{\mathrm{reg}})
  \longrightarrow B_q.
  \label{eq:trinomial-monodromy-representation}
\end{equation}
\end{corollary}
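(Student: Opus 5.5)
The plan is to obtain \cref{cor:regular-trinomial-parameter-space} directly from \cref{prop:trinomial-discriminant} and the general monodromy construction, in two steps. For the set-theoretic description, $\mathcal T_{p,q}^{\mathrm{reg}}$ is by definition the set of $(A,B)\in\C\times\C^*$ for which $1+Az^p+Bz^q$ has $q$ simple roots. Since $B\neq0$, the degree is exactly $q$. By the fundamental theorem of algebra the polynomial has $q$ roots counted with multiplicity, so it has $q$ simple roots exactly when it has no multiple root. \Cref{prop:trinomial-discriminant} identifies the multiple-root locus with the curve \eqref{eq:trinomial-discriminant-AB-reduced}, and taking the complement in $\C\times\C^*$ gives \eqref{eq:regular-trinomial-parameter-space-explicit}. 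Along the way I would note that this complement is open, because it is the complement of a closed algebraic curve in an open set. It is also connected, because the complement of a proper complex hypersurface in a connected complex manifold is connected. Connectedness is needed so that $\mathcal M=\mathcal T_{p,q}^{\mathrm{reg}}$ is an admissible parameter space.

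For the monodromy statement, I would define $\Phi:\mathcal T_{p,q}^{\mathrm{reg}}\to\mathcal S_q^{\mathrm{reg}}$ by sending $(A,B)$ to the projective class of $\ket{0}+\sqrt{p!}\,A\ket{p}+\sqrt{q!}\,B\ket{q}$. In the monic chart $\Theta_q$ of \cref{lem:projective-monic-chart}, this map is $(A,B)\mapsto (b_0,\dots,b_{q-1})$ with $b_0=1/B$, $b_p=A/B$, and all other $b_n=0$. These coordinates are holomorphic on $B\neq0$, so $\Phi$ is continuous and lands in $U_q$. By the first step it lands in the complement of $\Delta_{\star,q}$, hence in $\mathcal S_q^{\mathrm{reg}}$. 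Composing with $\Zmap$ from \cref{thm:stellar-configuration} and applying the stellar braid monodromy definition, with \cref{prop:well-defined-monodromy}, yields the homomorphism $\rho_{p,q}=(\Zmap\circ\Phi)_*$ into $\pi_1(\Conf_q(\C)/S_q)\simeq B_q$ (\cref{cor:braid-group}). This homomorphism is defined once a base point and a labeling are fixed, and changing those choices conjugates it.

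The only real subtlety is the first step, specifically checking that the curve in \cref{prop:trinomial-discriminant} is exactly the multiple-root locus and not merely a locus containing it. The proposition asserts that equality, including the non-coprime reduction through $y=z^d$, so I would simply invoke it. I would add one remark: the value $B=0$ is excluded by hypothesis rather than by the discriminant, which matches the condition $c_r\neq0$ in \cref{rem:standing-regularity}.
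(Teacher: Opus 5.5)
Your proposal is correct and follows the paper's own proof. Both take the complement of the discriminant curve from \cref{prop:trinomial-discriminant} together with $B\neq0$, map it into $\mathcal S_q^{\mathrm{reg}}$, and invoke the general monodromy construction; your explicit chart coordinates $b_0=1/B$, $b_p=A/B$ and your connectedness check are details the paper leaves implicit, and they do not change the argument.
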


\begin{proof}
The first statement is the vanishing condition of \cref{prop:trinomial-discriminant}, together with the degree condition $B\neq0$. On this complement, the trinomial polynomial has degree $q$ and $q$ simple roots, so it defines a map to $\mathcal S_q^{\mathrm{reg}}$. The braid monodromy representation is then the one induced by the general construction.
\end{proof}

\begin{remark}[Scope of trinomial monodromy]
\label{rem:trinomial-monodromy-scope}
The representation \eqref{eq:trinomial-monodromy-representation} is not classified in this article. In particular, we do not claim that a fixed trinomial slice realizes all of $B_q$. The role of these slices is more modest and more explicit: their discriminants are computable by hand, and loops around smooth discriminant points produce local half-twists by \cref{prop:local-half-twist}. The full realization of arbitrary braids belongs to the complete regular stratum of \cref{prop:realization}, not to every lower-dimensional physical slice.
\end{remark}

\begin{corollary}[Parametric form of the coprime trinomial discriminant]
\label{cor:trinomial-parametric-discriminant}
Assume $\gcd(p,q)=1$. Away from $A=0$ and $B=0$, the discriminant of the family $1+A z^p+B z^q$ is the image of
\begin{equation}
  z\in\C^*
  \longmapsto
  (A(z),B(z))
  =
  \left(
  -\frac{q}{q-p}z^{-p},
  \frac{p}{q-p}z^{-q}
  \right).
  \label{eq:trinomial-parametric-discriminant}
\end{equation}
Equivalently, it is the plane curve \eqref{eq:trinomial-discriminant-AB} in $\C\times\C^*$.
\end{corollary}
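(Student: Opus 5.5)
The plan is to read the parametrization directly off the proof of \cref{prop:trinomial-discriminant}. There we showed that a multiple root $z$ of $F(z)=1+Az^p+Bz^q$ is necessarily nonzero, and that solving the linear system $F(z)=0$, $zF'(z)=0$ for the unknowns $Az^p$ and $Bz^q$ gives \eqref{eq:trinomial-root-relations-v4}. Since $z\neq0$, these relations can be solved uniquely for the parameters:
\begin{equation*}
  A=-\frac{q}{q-p}z^{-p},\qquad B=\frac{p}{q-p}z^{-q}.
\end{equation*}
Hence every point $(A,B)$ of the multiple-root locus lies in the image of the map \eqref{eq:trinomial-parametric-discriminant}, with $z$ taken to be a multiple root. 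Note also that $A(z)\neq0$ and $B(z)\neq0$ for all $z\in\C^*$, so the image automatically avoids the axes. This is consistent with the restriction ``away from $A=0$ and $B=0$''; indeed, if $A=0$ the polynomial $1+Bz^q$ has simple roots, so the axis $A=0$ meets no discriminant points anyway.

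For the reverse inclusion, fix $z_0\in\C^*$ and set $(A,B)=(A(z_0),B(z_0))$. A direct substitution gives
\begin{equation*}
  F(z_0)=1-\frac{q}{q-p}+\frac{p}{q-p}=0,\qquad z_0F'(z_0)=pAz_0^p+qBz_0^q=-\frac{pq}{q-p}+\frac{pq}{q-p}=0.
\end{equation*}
Thus $z_0$ is a multiple root, so the image of the map is contained in the discriminant. This argument does not use coprimality.

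The equivalence with the plane curve \eqref{eq:trinomial-discriminant-AB} then follows from \cref{prop:trinomial-discriminant} in the coprime case. As a consistency check, one can compute directly that $A(z)^q/B(z)^p=(-1)^q q^q/(p^p(q-p)^{q-p})$, because the powers $z^{-pq}$ cancel.

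I expect no real obstacle here. The only point deserving care is whether ``image'' means set-theoretic image, which is what the argument gives. If one also wanted injectivity of the parametrization, coprimality would be exactly what is needed: $z^{-p}$ and $z^{-q}$ together determine $z$, because $z=(z^{-p})^{-a}(z^{-q})^{-b}$ whenever $ap+bq=1$. I would mention this as a remark rather than a requirement of the statement.
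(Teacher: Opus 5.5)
Your proposal is correct and follows the paper's route: the parametrization is the root relations \eqref{eq:trinomial-root-relations-v4} solved for $A$ and $B$, and the identification with the plane curve \eqref{eq:trinomial-discriminant-AB} rests on the coprime sufficiency argument of \cref{prop:trinomial-discriminant}. Your direct-substitution check that the image lies in the discriminant (which needs no coprimality) and your injectivity remark are correct small additions that the paper leaves implicit.
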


\begin{proof}
The parametrization is exactly \eqref{eq:trinomial-root-relations-v4} solved for $A$ and $B$. Eliminating $z$ gives \eqref{eq:trinomial-discriminant-AB}. Conversely, because $p$ and $q$ are coprime, the elimination condition is sufficient for the existence of a nonzero $z$ satisfying both equations in \eqref{eq:trinomial-root-relations-v4}.
\end{proof}

\begin{proposition}[Fixed-$B$ slices]
\label{prop:fixed-B-slices}
Assume $\gcd(p,q)=1$ and fix $B_0\neq0$. The one-parameter slice
\begin{equation}
  A\longmapsto1+A z^p+B_0z^q
\end{equation}
has exactly $q$ discriminant points in the $A$-plane,
\begin{equation}
  A^q
  =
  (-1)^q\frac{q^q}{p^p(q-p)^{q-p}}B_0^p.
\end{equation}
Each such point corresponds to a unique double stellar zero. A small positively oriented loop around one of these points gives a conjugate of an elementary half-twist in $B_q$.
\end{proposition}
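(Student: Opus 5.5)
The plan is to specialize \cref{prop:trinomial-discriminant} to the line $B=B_0$, check the root structure over each critical value by hand, and then apply \cref{prop:local-half-twist}.

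\textbf{Counting discriminant points.} Since $\gcd(p,q)=1$, the multiple-root locus is the curve \eqref{eq:trinomial-discriminant-AB}. Fixing $B=B_0\neq0$ turns it into a polynomial equation $A^q=K$ in $A$, with
\begin{equation*}
  K=(-1)^q\frac{q^q}{p^p(q-p)^{q-p}}B_0^p\neq0 .
\end{equation*}
Hence there are exactly $q$ distinct roots $A_1,\ldots,A_q$, namely the $q$-th roots of $K$.

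\textbf{Uniqueness and multiplicity of the double zero.} Fix a discriminant value $A_k$. By \eqref{eq:trinomial-root-relations-v4}, any multiple root $z$ satisfies
\begin{equation*}
  z^q=\frac{p}{(q-p)B_0},\qquad z^p=-\frac{q}{(q-p)A_k}.
\end{equation*}
Choose integers $m,n$ with $mp+nq=1$ (Bézout). Then $z=(z^p)^m(z^q)^n$ is determined uniquely, so there is exactly one multiple root.

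Its multiplicity is exactly two. Indeed,
\begin{equation*}
  F''(z)=z^{p-2}\bigl(p(p-1)A+q(q-1)Bz^{q-p}\bigr).
\end{equation*}
Substituting $Az^p=-q/(q-p)$ and $Bz^q=p/(q-p)$ gives
\begin{equation*}
  z^2F''(z)=\frac{-qp(p-1)+pq(q-1)}{q-p}=pq\neq0 .
\end{equation*}
So $F''(z)\neq0$, and all other roots are simple. Every discriminant point of the slice is therefore a generic point of $\Delta_{\star,q}$ in the sense of \cref{rem:discriminant-strata}.

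\textbf{Transversality and monodromy.} It remains to check that the line $B=B_0$ meets the discriminant transversally, i.e.\ that the local invariant $\lambda$ of \cref{lem:local-square-root-normal-form} vanishes to first order in $A-A_k$. This is the main obstacle, and I would handle it by an implicit-function argument rather than via Weierstrass coefficients.

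Near $(A_k,z_0)$, consider the critical point $w(A)$ of $F_A$ near $z_0$, defined by $F_A'(w)=0$. It exists and is analytic because $F''(z_0)\neq0$. Then $\lambda$ is, up to a unit, the critical value $F_A(w(A))$. Differentiating in $A$ and using $F'_A(w)=0$,
\begin{equation*}
  \frac{d}{dA}F_A(w(A))=w^p .
\end{equation*}
At $A=A_k$ this equals $z_0^p\neq0$, which is exactly the transversality we need.

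Hence $\lambda$ is a local coordinate in $A$. A small positively oriented loop $A=A_k+\epsilon e^{2\pi i t}$ makes $\lambda$ wind once around $0$, and \cref{prop:local-half-twist} yields a conjugate of $\sigma_i^{\pm1}$. The sign is fixed by orientation, as in that proposition.
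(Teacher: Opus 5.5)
Your proposal is correct and follows the paper's route: specialize the trinomial discriminant to $B=B_0$, use coprimality to get a unique multiple zero over each of the $q$ values of $A$, and invoke \cref{prop:local-half-twist}; your B\'ezout identity $z=(z^p)^m(z^q)^n$ plays the role of the paper's one-to-one matching between the $q$ solutions of $B_0z^q=p/(q-p)$ and the $q$ discriminant values. Your proposal is more complete than the paper's proof, which simply asserts that the multiple zero is a double zero and never checks that the line $B=B_0$ is transverse to $\Delta_{\star,q}$, whereas your identities $z_0^2F''(z_0)=pq$ and $\frac{d}{dA}F_A(w(A))=z_0^p\neq0$ settle both points (your unproved claim that $\lambda$ equals the critical value up to a unit is standard, and can be bypassed by expanding $F_A(w(A)+s)$ to second order in $s$).
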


\begin{proof}
The discriminant equation is \eqref{eq:trinomial-discriminant-AB} with $B=B_0$, and hence has $q$ distinct solutions for $A$. For each solution, the corresponding double root is determined by the equations
\begin{equation}
  A z^p=-\frac{q}{q-p},
  \qquad
  B_0z^q=\frac{p}{q-p}.
\end{equation}
The coprimality of $p$ and $q$ makes the assignment between the $q$ roots of the second equation and the $q$ discriminant values in the $A$-plane one-to-one. At each such point, the polynomial has exactly one double root and the other roots are simple. Therefore \cref{prop:local-half-twist} applies and gives a conjugate of a local Artin generator.
\end{proof}

\begin{remark}[Non-coprime exponents]
If $d=\gcd(p,q)>1$, the substitution $y=z^d$ shows that the discriminant is governed by the reduced pair $(\bar p,\bar q)$. A loop around a reduced discriminant point may lift to several simultaneous local exchanges among the $z$-roots. To keep the first examples focused on elementary braid generators, we use coprime pairs in the explicit test cases below.
\end{remark}
\begin{example}[Binomial cyclic braid]
\label{ex:binomial-cyclic-braid}
The binomial family
\begin{equation}
  F_\lambda(z)=1+\lambda z^r,
  \qquad \lambda\in\C^*,
\end{equation}
has roots
\begin{equation}
  z_k(\lambda)=(-\lambda^{-1})^{1/r}\exp\left(\frac{2\pi i k}{r}\right),
  \qquad k=0,\ldots,r-1,
\end{equation}
after a local choice of the $r$th root. When $\lambda$ winds once around the origin, the roots undergo a cyclic relabeling. The associated braid is the standard cyclic braid whose permutation is $(1\,2\,\cdots\,r)$. This example is highly symmetric, so it is pedagogically useful but not sufficient to probe the full algebraic richness of $B_r$.
\end{example}

\subsection{Elementary trinomial test cases}

The following cases will be useful later for numerical and geometric illustrations. The first one realizes the elementary two-strand exchange, while the cubic cases provide the smallest setting where braid words can carry information beyond a final permutation.

\begin{example}[Quadratic family]
For $(p,q)=(1,2)$,
\begin{equation}
  F(z)=1+Az+Bz^2.
\end{equation}
The discriminant condition is
\begin{equation}
  A^2=4B,
\end{equation}
which is the usual quadratic discriminant condition for a double root.
\end{example}

\begin{example}[Cubic family]
For $(p,q)=(1,3)$,
\begin{equation}
  F(z)=1+Az+Bz^3.
\end{equation}
The discriminant condition is
\begin{equation}
  A^3=-\frac{27}{4}B.
\end{equation}
Loops around this discriminant curve generate nontrivial braids of the three stellar zeros.
\end{example}

\begin{example}[Asymmetric cubic family]
For $(p,q)=(2,3)$,
\begin{equation}
  F(z)=1+Az^2+Bz^3.
\end{equation}
The discriminant condition is
\begin{equation}
  A^3=-\frac{27}{4}B^2.
\end{equation}
This family is a minimal test case for braid monodromy beyond the purely binomial situation.
\end{example}

\begin{remark}
The trinomial discriminant \eqref{eq:trinomial-discriminant-AB} has the form of a toric plane curve. Thus even very sparse Fock superpositions can produce nontrivial algebraic discriminants. This is the first indication that stellar braid monodromy is not merely the monodromy of arbitrary polynomials, but can probe structured subfamilies selected by photonic state engineering.
\end{remark}

\section{Concrete stellar braids and zero-trajectory examples}
\label{sec:concrete-stellar-braids}

The previous sections define stellar braid monodromy intrinsically. We now record a few elementary loops that will serve as concrete test cases. The goal is not to exhaust the monodromy of the corresponding parameter spaces, but to make the braid construction visible at the level of stellar-zero trajectories.

\subsection{Binomial loops}

\begin{example}[Two-strand binomial exchange]
\label{ex:two-strand-binomial-exchange}
Consider
\begin{equation}
  F_t(z)=1+e^{2\pi i t}z^2,
  \qquad 0\leq t\leq 1.
  \label{eq:quadratic-binomial-loop}
\end{equation}
This is the Bargmann polynomial of a loop of states proportional to
\begin{equation}
  \ket{0}+\sqrt{2}\,e^{2\pi i t}\ket{2}.
\end{equation}
The two zeros satisfy
\begin{equation}
  z^2=-e^{-2\pi i t}.
\end{equation}
A continuous choice of square root gives two branches which are exchanged after one turn in $t$. Hence the loop realizes the generator of $B_2$, up to the sign convention for positive crossings. In symbols,
\begin{equation}
  \rho([F_t])=\sigma_1^{\pm1}\in B_2.
\end{equation}
The sign depends on the orientation convention for the braid projection.
\end{example}

\begin{figure}[htbp]
  \centering
  \includegraphics[width=0.96\linewidth]{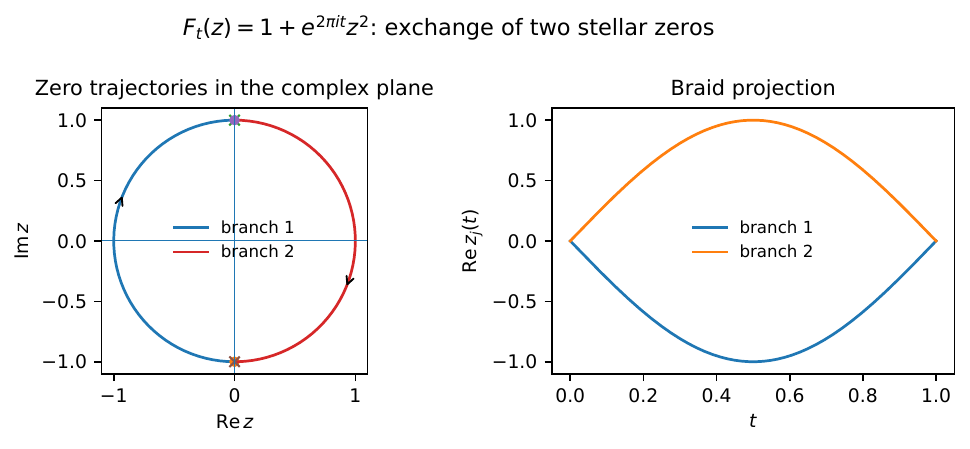}
  \caption{Two-strand stellar braid generated by the binomial family \eqref{eq:quadratic-binomial-loop}. The left panel follows the two stellar zeros in the complex plane as the phase of the finite-Fock state is wound once; the right panel shows a braid projection using the real parts of the transported zeros. The figure illustrates the basic mechanism of the paper: a loop of non-Gaussian states of fixed stellar degree can exchange its stellar zeros and thereby define an element of $B_2$.}
  \label{fig:quadratic-binomial-braid}
\end{figure}

\begin{example}[Cyclic three-strand binomial braid]
\label{ex:cyclic-three-strand-binomial}
Consider
\begin{equation}
  F_t(z)=1+e^{2\pi i t}z^3,
  \qquad 0\leq t\leq 1.
  \label{eq:cubic-binomial-loop}
\end{equation}
This corresponds to a loop of states proportional to
\begin{equation}
  \ket{0}+\sqrt{6}\,e^{2\pi i t}\ket{3}.
\end{equation}
The three zeros form a rotating equilateral configuration. After one turn of the parameter, the unordered set returns to itself but the lifted labeling is cyclically permuted. The associated braid is the standard cyclic braid
\begin{equation}
  \delta_3=\sigma_1\sigma_2
\end{equation}
or its inverse, depending on orientation and labeling conventions. Its image in $S_3$ is a three-cycle.
\end{example}

\begin{figure}[htbp]
  \centering
  \includegraphics[width=0.96\linewidth]{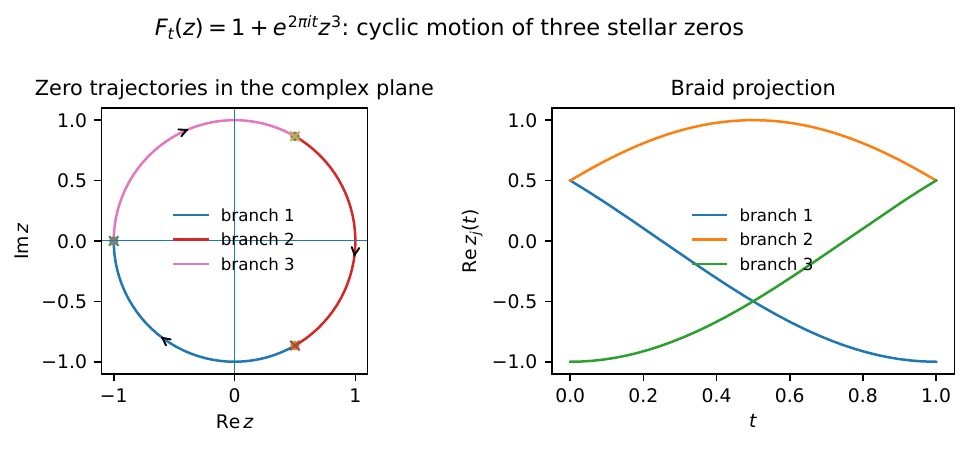}
  \caption{Cyclic three-strand stellar braid for the cubic binomial loop \eqref{eq:cubic-binomial-loop}. The zeros form an equilateral configuration and return to the same unordered divisor after a cyclic relabeling. This is the simplest visual example in which the cubic finite-Fock family realizes a nontrivial braid in $B_3$.}
  \label{fig:cubic-binomial-braid}
\end{figure}

The binomial examples are useful because they are explicit and symmetric. However, they probe only cyclic motions. The next examples show how local half-twists arise from loops around discriminant points in trinomial slices.

\subsection{Loops around trinomial discriminants}

\begin{example}[Quadratic discriminant loop]
\label{ex:quadratic-discriminant-loop}
Fix $B=1$ in the quadratic trinomial family
\begin{equation}
  F_A(z)=1+Az+z^2.
\end{equation}
The discriminant condition is
\begin{equation}
  A^2=4,
\end{equation}
so the slice has two discriminant points, $A=2$ and $A=-2$. At $A=2$, one has
\begin{equation}
  F_2(z)=(z+1)^2,
\end{equation}
so the two stellar zeros collide at $z=-1$. A small loop
\begin{equation}
  A(t)=2+\varepsilon e^{2\pi i t},
  \qquad 0<\varepsilon\ll1,
  \label{eq:quadratic-discriminant-loop}
\end{equation}
encircles a smooth point of the discriminant and gives a local half-twist of the two zeros. Therefore the braid is a conjugate of $\sigma_1^{\pm1}$; since $B_2$ is generated by $\sigma_1$, this is simply the elementary two-strand braid up to orientation.
\end{example}

\begin{figure}[htbp]
  \centering
  \includegraphics[width=0.96\linewidth]{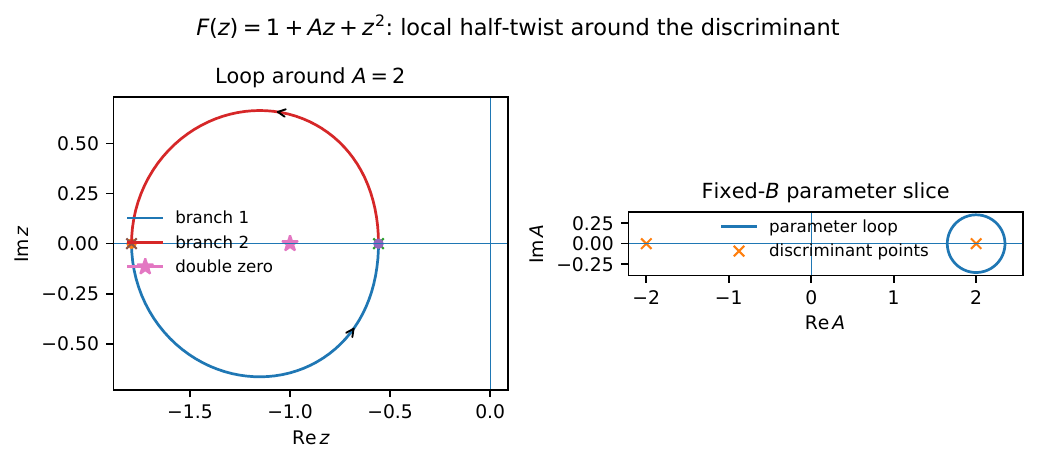}
  \caption{Quadratic discriminant loop and local half-twist. For $F_A(z)=1+Az+z^2$, the fixed-$B$ slice has discriminant points $A=\pm2$. A small loop around $A=2$ produces a collision-avoiding exchange of the two stellar zeros near the double root. This is the local model behind the statement that a smooth point of the stellar discriminant produces an elementary half-twist.}
  \label{fig:quadratic-discriminant-loop}
\end{figure}

\begin{example}[Asymmetric cubic discriminant loop]
\label{ex:asymmetric-cubic-discriminant-loop}
Consider the asymmetric cubic slice
\begin{equation}
  F_A(z)=1+Az^2+z^3,
  \label{eq:asymmetric-cubic-slice}
\end{equation}
which is the case $(p,q)=(2,3)$ with $B=1$. By \cref{prop:trinomial-discriminant}, the discriminant condition is
\begin{equation}
  A^3=-\frac{27}{4}.
\end{equation}
Choose the real discriminant point
\begin{equation}
  A_c=-\frac{3}{2^{2/3}}.
\end{equation}
The corresponding double stellar zero is
\begin{equation}
  z_c=2^{1/3},
\end{equation}
which follows from the equations
\begin{equation}
  A_c z_c^2=-3,
  \qquad
  z_c^3=2.
\end{equation}
A small loop
\begin{equation}
  A(t)=A_c+\varepsilon e^{2\pi i t}
  \label{eq:asymmetric-cubic-loop}
\end{equation}
produces a half-twist of the two roots colliding at $z_c$, while the third root remains simple. In a labeling adapted to the colliding pair, the local braid is $\sigma_i^{\pm1}$; in a global labeling fixed at the base point, it appears as a conjugate of such a generator in $B_3$.
\end{example}

\begin{figure}[htbp]
  \centering
  \includegraphics[width=0.96\linewidth]{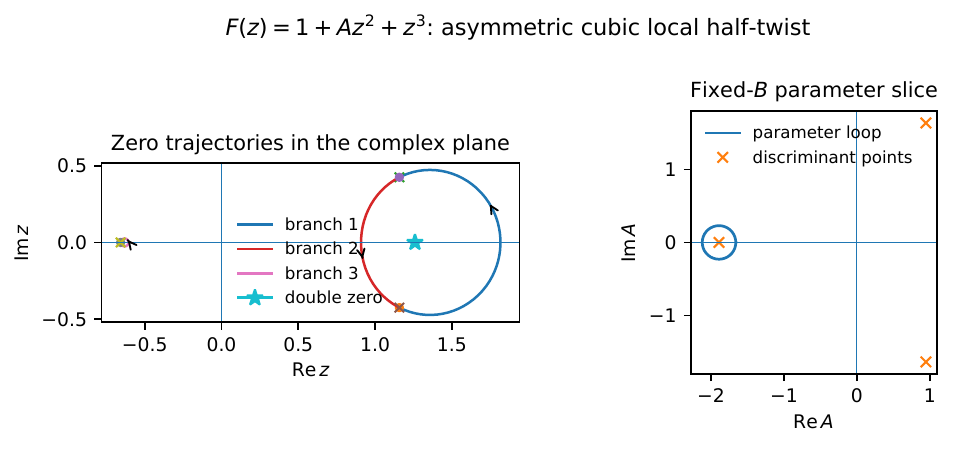}
  \caption{Asymmetric cubic half-twist for the trinomial slice $F_A(z)=1+Az^2+z^3$. The loop in the $A$-plane surrounds one point of the discriminant $A^3=-27/4$. Along this loop two stellar zeros perform a local exchange while the third remains separated. The example shows that sparse finite-Fock slices realize local braid generators without relying on the high symmetry of the cubic binomial family.}
  \label{fig:asymmetric-cubic-discriminant-loop}
\end{figure}

These examples illustrate the two complementary roles of the families introduced above. The full regular stratum contains all braid types by \cref{prop:realization}. Sparse Fock slices, by contrast, select explicit discriminant complements and produce concrete local braid generators that can be followed directly at the level of stellar zeros. This is the mechanism by which algebraic root monodromy becomes a geometric invariant of finite-Fock non-Gaussian state families.

\section{Cubic finite-Fock laboratory}
\label{sec:cubic-laboratory}

The general realization result of \cref{prop:realization} is useful, but it is important to see that it already has a concrete meaning in the experimentally motivated cubic family. In degree three, the regular stellar stratum is the complement of the discriminant hypersurface \eqref{eq:cubic-discriminant}, and its fundamental group is $B_3$. The following construction gives explicit loops of cubic finite-Fock states realizing the standard local exchanges of two stellar zeros.

Fix three distinct points $a,b,c\in\C$. Let
\begin{equation}
  m_{ab}=\frac{a+b}{2},
  \qquad
  \rho_{ab}=\frac{a-b}{2}.
\end{equation}
For $t\in[0,1]$, set
\begin{equation}
  z_1(t)=m_{ab}+\rho_{ab}e^{i\pi t},
  \qquad
  z_2(t)=m_{ab}-\rho_{ab}e^{i\pi t},
  \qquad
  z_3(t)=c.
  \label{eq:cubic-half-twist-roots}
\end{equation}
The unordered configuration at $t=1$ is the same as at $t=0$, but the two points $a$ and $b$ have been exchanged by a half-turn around their midpoint. The corresponding monic cubic is
\begin{equation}
  P_t(z)=\prod_{j=1}^3 \bigl(z-z_j(t)\bigr)
  =z^3+\alpha(t)z^2+\beta(t)z+\gamma(t).
  \label{eq:cubic-half-twist-polynomial}
\end{equation}
The coefficients $(\alpha(t),\beta(t),\gamma(t))$ define a loop in the regular cubic coefficient space because the three roots remain distinct throughout the motion.

\begin{proposition}[Cubic realization of the elementary half-twist]
\label{prop:cubic-half-twist}
The loop of cubic finite-Fock states associated with \eqref{eq:cubic-half-twist-polynomial} realizes a local half-twist exchanging the two stellar zeros initially located at $a$ and $b$, while the third zero remains fixed. With a labeling and projection convention, this is a conjugate of one of the standard generators $\sigma_1$ or $\sigma_2$ of $B_3$.
\end{proposition}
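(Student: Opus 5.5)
I will first check that the construction \eqref{eq:cubic-half-twist-roots} is a genuine loop in the regular cubic stratum, and then identify its braid class through the machinery already in place (\cref{thm:stellar-configuration}, \cref{prop:path-lifting-strands} and \cref{prop:realization}).

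\emph{Regularity.} For every $t\in[0,1]$ we have $z_1(t)-z_2(t)=2\rho_{ab}e^{i\pi t}\neq0$, since $a\neq b$. The points $z_1(t),z_2(t)$ lie on the circle of radius $|\rho_{ab}|$ centred at $m_{ab}$. So $z_3=c$ stays distinct from both provided $c$ is off that circle.

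This is the one point where the statement needs care. Distinctness of $a,b,c$ does not by itself keep $c$ off the circle: $c$ could lie on the circle with diameter $[a,b]$. I would handle this in one of two ways. The first is to add the harmless hypothesis $|c-m_{ab}|\neq|\rho_{ab}|$, which is implicit in the phrase ``local half-twist''. The second is to note that otherwise one replaces $\rho_{ab}e^{i\pi t}$ by $\rho_{ab}e^{i\pi t}s(t)$, with $s$ a real positive scaling equal to $1$ at the endpoints that shrinks the circle near the crossing time. This changes nothing at $t=0,1$.

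Under this proviso, $\Disc(P_t)\neq0$ for all $t$, so $(\alpha,\beta,\gamma)(t)$ avoids $\Delta_3=0$ from \cref{prop:cubic-stellar-discriminant}. Via \eqref{eq:cubic-normalized-coefficients} the loop then corresponds to a loop of projective states in $\Hil_3$ with $c_3\neq0$. Closedness follows because at $t=1$ the ordered triple is $(b,a,c)$, and $P_t$ is symmetric in its roots, so $P_1=P_0$.

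\emph{Identifying the braid.} By \cref{prop:path-lifting-strands}, the strands are exactly the curves $t\mapsto(z_j(t),t)$ given in \eqref{eq:cubic-half-twist-roots}, because the lift is unique and these curves are already a continuous lift. The resulting geometric braid has the following shape. Strands $1$ and $2$ rotate by half a turn about $m_{ab}$ inside the disc $D$ of radius $|\rho_{ab}|$ centred there, while strand $3$ stays vertical at $c$. I would treat the case where $c$ lies outside $D$; the case where $c$ lies inside $D$ is handled by the rescaling just described, or equivalently by an isotopy moving $c$ out of $D$ in the complement of the strands.

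Next, choose a homeomorphism $h$ of $\C$ carrying $a,b,c$ to $1,2,3$ on the real axis and carrying $D$ to a disc around $[1,2]$ that avoids $3$. Conjugating by $h$ sends the motion to the standard exchange of the adjacent points $1,2$, which is $\sigma_1^{\pm1}$ depending on the direction of rotation and the projection convention. Changing the base configuration or the labeling conjugates the result by \cref{prop:well-defined-monodromy}. Hence the loop is a conjugate of $\sigma_1^{\pm1}$, and also of $\sigma_2^{\pm1}$, since $\sigma_1$ and $\sigma_2$ are conjugate in $B_3$.

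\emph{Main obstacle.} The substantive step is turning the geometric half-turn into a precise Artin generator. This requires an explicit identification of configurations based at $\{a,b,c\}$ with the standard base point. I would do this by the explicit homeomorphism $h$ above, or alternatively by deforming $(a,b,c)$ through $\Conf_3(\C)$ to collinear points ordered by real part. In the collinear picture, reading off the sign of the crossing from the projection onto real parts gives $\sigma_1$ for counterclockwise rotation under the convention of \cref{fig:quadratic-binomial-braid}.
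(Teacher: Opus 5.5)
The route is the paper's, but your disposal of the interior case is wrong, and the proposition needs a stronger hypothesis than the one you add.

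For $c$ outside the closed disc $\bar D$, your argument is the paper's own. The lift \eqref{eq:cubic-half-twist-roots} is a path in $\Conf_3(\C)$ ending at the transposed triple, and its Vieta image is the coefficient loop. An exchange inside a disc avoiding the third point is a half-twist; your homeomorphism $h$ only makes that last identification explicit. You are also right that ``three distinct points'' does not suffice, which the paper never checks.

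The genuine error is your claim that the case $|c-m_{ab}|<|\rho_{ab}|$ reduces to the exterior case. There the loop is perfectly regular, but it is \emph{not} a conjugate of $\sigma_i^{\pm1}$. Write $\Disc(P_t)=(z_1-z_2)^2\bigl((z_1-c)(z_2-c)\bigr)^2$, with $z_1-z_2=2\rho_{ab}e^{i\pi t}$ and $(z_1-c)(z_2-c)=(m_{ab}-c)^2-\rho_{ab}^2e^{2\pi i t}$.
\begin{itemize}
\item The first factor winds once around $0$.
\item The second factor traces a circle of radius $|\rho_{ab}|^2$ centred at $(m_{ab}-c)^2$. It winds once exactly when $c$ lies inside $D$, and not at all when $c$ lies outside.
\end{itemize}
So $\Disc(P_t)$ winds three times for interior $c$ and once for exterior $c$. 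The map $\Disc_*:\pi_1(\mathcal E_3^{\mathrm{reg}})\simeq B_3\to\mathbb Z$ is a homomorphism to an abelian group, hence conjugation invariant. By the exterior computation it equals $\pm1$ on every conjugate of $\sigma_i^{\pm1}$. Sliding $c$ to $m_{ab}$ inside $D$ causes no collision, and it shows the interior loop is the rigid rotation by $\pi$ of three collinear points. That is a conjugate of $(\sigma_1\sigma_2\sigma_1)^{\pm1}$, not of a single generator.

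Consequently there is no isotopy moving strand $3$ out of $D$ in the complement of the other two strands. Your rescaled loop with $s(t)$ is a different loop, not homotopic to the original one inside the regular stratum. Shrinking the radius below $|c-m_{ab}|$ drags a moving strand through $c$. For $c$ on the circle, the rescaling is a legitimate modification of the construction, but it changes the loop rather than proving the statement about it.

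The correct hypothesis is $|c-m_{ab}|>|\rho_{ab}|$, i.e.\ $c$ outside the closed disc with diameter $[a,b]$. This is exactly what the paper's proof tacitly uses when it speaks of ``a small disk disjoint from the third point.'' With that hypothesis your proof is complete. Without it, the proposition as printed is false for interior $c$.
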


\begin{proof}
The path \eqref{eq:cubic-half-twist-roots} is a path in $\Conf_3(\C)$ whose endpoint differs from its initial point by the transposition of the first two labeled points. After quotienting by $S_3$, it becomes a loop in $\Conf_3(\C)/S_3$. The path in coefficient space \eqref{eq:cubic-half-twist-polynomial} is precisely the image of this loop under the Viete map. Since the motion is the standard local exchange of two points in a small disk disjoint from the third point, its braid class is a half-twist. The identification with $\sigma_1$ or $\sigma_2$ depends only on how the three zeros are initially ordered in the chosen braid projection.
\end{proof}

\begin{corollary}[Generators in the cubic finite-Fock family]
\label{cor:cubic-generators}
The experimentally motivated cubic finite-Fock family contains loops realizing a generating set of $B_3$.
\end{corollary}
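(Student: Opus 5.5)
The plan is to apply \cref{prop:cubic-half-twist} twice, at a well-chosen base configuration, and then use the standard fact that $\sigma_1$ and $\sigma_2$ generate $B_3$.

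First, fix a base point in the regular cubic stratum. Take three collinear real points $a<b<c$, for instance $(a,b,c)=(-1,0,1)$, and label them from left to right. This labeling matches the real-part braid projection used in the figures. The base state is the cubic finite-Fock state whose monic Bargmann polynomial is $(z-a)(z-b)(z-c)$. By \cref{lem:projective-monic-chart} and the explicit normalization \eqref{eq:cubic-normalized-coefficients}, this corresponds to explicit amplitudes $(c_0,c_1,c_2,c_3)$ with $c_3\neq0$.

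Second, construct two loops based at this state. Loop $L_1$ is \eqref{eq:cubic-half-twist-roots} applied to the adjacent pair $(a,b)$, with $c$ held fixed. Loop $L_2$ is the same construction applied to the pair $(b,c)$, with $a$ held fixed. The key geometric check is that in $L_1$ the moving points stay on the circle of radius $|a-b|/2$ centred at $m_{ab}$. That circle does not contain $c$, because $c$ lies beyond $b$ on the real line at distance greater than $|a-b|/2$ from $m_{ab}$. So the three roots stay distinct for all $t$, and the same holds for $L_2$. Thus both loops lie in $\mathcal E_3^{\mathrm{reg}}$, and \cref{prop:cubic-half-twist} identifies them as half-twists of adjacent strands.

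Third, identify the braids in $B_3$. Since the exchanged points are adjacent in the chosen projection order and the disc swept by the exchange meets no other strand, the braid projection shows exactly one crossing between strands $i$ and $i+1$. No conjugation is needed. Hence $L_1\mapsto\sigma_1^{\pm1}$ and $L_2\mapsto\sigma_2^{\pm1}$, with the same sign for both because both rotations use the same orientation $e^{i\pi t}$. If needed, traversing a loop in reverse gives the other sign. Finally, $\{\sigma_1,\sigma_2\}$ generates $B_3$ by Artin's presentation, and so does the set of their inverses. By \cref{cor:braid-group}, the loops $L_1,L_2$ therefore realize a generating set of $\pi_1(\mathcal E_3^{\mathrm{reg}})\simeq B_3$.

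The main obstacle is the step that upgrades ``a conjugate of a generator'' in \cref{prop:cubic-half-twist} to the generator itself. An arbitrary set of conjugates of $\sigma_1$ need not generate $B_3$. For example, repeating the same conjugate twice only generates a cyclic subgroup. To handle this, I will verify the crossing pattern of the real-part projection directly. I will also note that the disc swept by the exchange of $a,b$ misses $c$, so no extra crossings arise, which is what makes the identification exact.
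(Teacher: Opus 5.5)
Your proposal is correct and follows the paper's proof. Both fix three real zeros $x_1<x_2<x_3$, exchange the adjacent pairs $(x_1,x_2)$ and $(x_2,x_3)$ by half-turns that avoid the remaining zero, identify the two loops with $\sigma_1^{\pm1}$ and $\sigma_2^{\pm1}$ in the real-part projection, and conclude because these generate $B_3$. Your explicit checks make precise what the paper only asserts: you verify that the circle of diameter $[a,b]$ (resp.\ $[b,c]$) misses the third zero, and that the projection shows a single crossing, so no conjugation is needed. The first check is genuinely needed, since for general distinct $a,b,c$ the loop \eqref{eq:cubic-half-twist-roots} can pass through $c$.
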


\begin{proof}
For instance, take a base configuration with three distinct real zeros $x_1<x_2<x_3$. A half-turn exchange of the pair $(x_1,x_2)$ inside a small disk disjoint from $x_3$ gives one elementary half-twist; a half-turn exchange of $(x_2,x_3)$ inside a small disk disjoint from $x_1$ gives the other. In the usual projection convention these are the standard generators $\sigma_1$ and $\sigma_2$, up to the orientation convention for positive crossings. Since $\sigma_1$ and $\sigma_2$ generate $B_3$, the cubic finite-Fock family contains loops realizing a generating set.
\end{proof}

The corresponding finite-Fock amplitudes are obtained from \eqref{eq:cubic-normalized-coefficients} by reversing the normalization:
\begin{equation}
  c_3 \neq 0,
  \qquad
  c_2=\frac{\alpha c_3}{\sqrt{3}},
  \qquad
  c_1=\frac{\beta c_3}{\sqrt{6}},
  \qquad
  c_0=\frac{\gamma c_3}{\sqrt{6}}.
  \label{eq:cubic-coefficients-back}
\end{equation}
Thus the loop of roots in \eqref{eq:cubic-half-twist-roots} is not merely a loop of abstract cubics: through \eqref{eq:cubic-coefficients-back}, it is a loop of projective states in $\Proj(\Hil_3)$.

Because the construction keeps the three roots distinct for every $t$, the coefficient loop remains in the complement of the cubic discriminant. In this sense the cubic finite-Fock family is the smallest experimentally motivated finite-dimensional setting in which the two elementary generators of a nonabelian braid group are both visible.

\section{Finite stellar divisors and Gaussian covariance}
\label{sec:gaussian-covariance}

The finite-Fock chart is the most elementary algebraic realization of the stellar construction, but it is only a slice of the finite stellar-rank hierarchy. Gaussian unitaries may transform a finite Fock superposition into a state whose Bargmann representative is no longer a polynomial. The invariant object is therefore not the polynomial chart itself, but the finite divisor of stellar zeros.

This section makes that statement precise. It is aligned with recent stellar-decomposition approaches, where finite Fock support and Gaussian transformations are separated in the Bargmann picture \cite{Motamedi2026StellarDecomposition}. We first specify the admissible Gaussian factors in the Bargmann-Fock representation, then show that the Gaussian-extended finite-rank stratum projects onto the same configuration space of stellar zeros. The additional Gaussian parameters form a contractible fiber, so the braid topology remains carried by the stellar divisor.

\subsection{Admissible Gaussian factors and finite stellar divisors}

In the standard Bargmann-Fock space, a pure single-mode Gaussian factor can be written, up to an overall nonzero scalar, as
\begin{equation}
  E_{\tau,\mu}(z)
  =
  \exp\left(\frac{\tau}{2}z^2+\mu z\right),
  \qquad
  |\tau|<1,
  \quad
  \mu\in\C.
  \label{eq:admissible-gaussian-factor}
\end{equation}
The condition $|\tau|<1$ is the Bargmann-Fock normalizability condition for the quadratic part. A constant term in the exponent is omitted because we work projectively: multiplying the Bargmann representative by a nonzero scalar does not change the physical ray nor the zero divisor.

\begin{lemma}[Bargmann-Fock admissibility of quadratic Gaussian factors]
\label{lem:bargmann-admissibility}
The Gaussian factor $E_{\tau,\mu}$ belongs to the Bargmann-Fock space if and only if $|\tau|<1$. Multiplication by a polynomial does not change this quadratic admissibility condition.
\end{lemma}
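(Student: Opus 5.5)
The plan is to compute the Bargmann-Fock norm of $E_{\tau,\mu}$ directly, using the Gaussian measure $\pi^{-1}e^{-|z|^2}\,d^2z$. In this convention $\|F\|^2=\sum_n |c_n|^2$ for $F=\sum_n c_n z^n/\sqrt{n!}$. Since $|E_{\tau,\mu}(z)|^2=\exp\bigl(\mathrm{Re}(\tau z^2)+2\,\mathrm{Re}(\mu z)\bigr)$, the linear part $\mu z$ contributes only a term of linear growth in the exponent. The quadratic part must therefore be compared with the weight $e^{-|z|^2}$.

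First I will reduce to real coordinates. A rotation $z\mapsto e^{i\theta}z$ preserves the measure and replaces $\tau$ by $\tau e^{2i\theta}$, so I can assume $\tau=|\tau|\ge 0$. Writing $z=x+iy$, the integrand becomes
\[
  \exp\bigl(-(1-|\tau|)x^2-(1+|\tau|)y^2+\text{linear terms}\bigr).
\]
This is a Gaussian integral over $\mathbb R^2$. It is finite if and only if both quadratic coefficients are positive, that is, if and only if $|\tau|<1$. When $|\tau|<1$, completing the square handles the linear terms for any $\mu$. When $|\tau|\ge1$, the integrand along the $x$-axis is at least $\exp(c\,x^2-C|x|)$ with $c\ge 0$, or it is bounded below by an exponential of a linear function. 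In either case the integral diverges. For the borderline case $|\tau|=1$ I need a little care: the $x$-integral of $e^{2\mathrm{Re}(\mu)x}$, or of the constant $1$ when $\mu=0$, diverges in every case. This borderline case is the only delicate step. A cross-check is the Fock expansion at $\mu=0$: the coefficients satisfy $|c_{2k}|^2=\binom{2k}{k}(|\tau|/2)^{2k}$, which behaves like $|\tau|^{2k}/\sqrt{\pi k}$. Their sum diverges exactly when $|\tau|\ge1$, which confirms the claim.

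For the second assertion, let $P$ be a polynomial. I will show that $P E_{\tau,\mu}$ is admissible if and only if $|\tau|<1$. If $|\tau|<1$, then $|P(z)|^2\le C(1+|z|^{2r})$. Polynomial factors remain integrable against the Gaussian $\exp\bigl(-(1-|\tau|)x^2-(1+|\tau|)y^2+\cdots\bigr)$, so the product lies in the Bargmann-Fock space. Conversely, suppose $|\tau|\ge1$ and $P\ne0$. Then $|P(z)|\ge c>0$ outside a compact set, because $|P|\to\infty$ for nonconstant $P$ and is a nonzero constant otherwise. The divergence found above comes from a region that is unbounded along the $x$-axis, so the integral for $PE_{\tau,\mu}$ still diverges. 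Hence multiplication by a polynomial leaves the admissibility condition unchanged.
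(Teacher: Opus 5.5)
Your proof is correct and follows essentially the same route as the paper's: both reduce admissibility to whether the quadratic form $-|z|^2+\operatorname{Re}(\tau z^2)$ in the weight $|E_{\tau,\mu}(z)|^2e^{-|z|^2}$ is negative definite, which happens exactly when $|\tau|<1$. Your rotation to $\tau\ge 0$ gives an explicit diagonal form, and this lets you handle two points that the paper only sketches: the borderline case $|\tau|=1$, and the converse that a nonzero polynomial factor cannot restore admissibility (using $|P|\ge c>0$ off a compact set).
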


\begin{proof}
In the Bargmann-Fock norm, the relevant weight is
\begin{equation}
  |E_{\tau,\mu}(z)|^2 e^{-|z|^2}
  =
  \exp\left(\operatorname{Re}(\tau z^2)+2\operatorname{Re}(\mu z)-|z|^2\right).
\end{equation}
If $|\tau|<1$, the quadratic form $-|z|^2+\operatorname{Re}(\tau z^2)$ is negative definite, so the linear term and any polynomial prefactor are dominated by a strictly decaying Gaussian. Conversely, if $|\tau|>1$, the exponent grows quadratically along directions where $\operatorname{Re}(\tau z^2)=|\tau||z|^2$. If $|\tau|=1$, the quadratic decay is lost along the corresponding asymptotic directions, and the Bargmann-Fock integral diverges. Thus the admissible parameter domain is the open unit disk.
\end{proof}

\begin{remark}[The boundary $|\tau|=1$]
\label{rem:tau-boundary}
The boundary $|\tau|=1$ is therefore not part of the finite-norm Bargmann-Fock stratum. It may be viewed as an ideal limiting boundary, for instance in an infinite-squeezing limit, but it is not included in the state space on which the braid monodromy is defined. In particular, the Gaussian parameter space used below is the open disk $\mathbb D$, not its closure.
\end{remark}

\begin{definition}[Admissible finite stellar-divisor stratum]
\label{def:gaussian-stellar-stratum}
Let $\mathcal G_r$ denote the set of projective pure single-mode states whose Bargmann representative admits a factorization
\begin{equation}
  F(z)=E_{\tau,\mu}(z)P(z),
  \qquad
  |\tau|<1,
  \quad
  \mu\in\C,
  \quad
  \deg P=r,
  \label{eq:admissible-gaussian-polynomial-factorization}
\end{equation}
where $P$ is a polynomial. Its regular part $\mathcal G_r^{\mathrm{reg}}$ consists of those states for which $P$ has $r$ simple zeros. The stellar-divisor map is
\begin{equation}
  \Div_\star:
  \mathcal G_r^{\mathrm{reg}}
  \longrightarrow
  \Conf_r(\C)/S_r,
  \qquad
  [F]\longmapsto Z(P).
  \label{eq:stellar-divisor-map-extended}
\end{equation}
\end{definition}

The exponential factor $E_{\tau,\mu}$ has no zeros, so the stellar divisor of $F$ is exactly the divisor of $P$, counted with multiplicity. The finite-Fock chart considered in the previous sections is the special algebraic slice $\tau=\mu=0$ with $P$ monic.

\begin{definition}[Stellar divisor with multiplicities]
\label{def:stellar-divisor-multiplicity}
If
\begin{equation}
  P(z)=\prod_{j=1}^m (z-z_j)^{m_j},
  \qquad
  \sum_{j=1}^m m_j=r,
\end{equation}
then the finite stellar divisor is
\begin{equation}
  D_\star(F)=\sum_{j=1}^m m_j[z_j].
\end{equation}
The regular stratum is the open part where $m_j=1$ for all $j$, so that $D_\star(F)$ is an unordered configuration of $r$ distinct points. The discriminant is precisely the locus where this divisor ceases to be reduced.
\end{definition}

\begin{lemma}[Uniqueness of the monic Gaussian-polynomial form]
\label{lem:unique-gaussian-polynomial}
Every projective class in $\mathcal G_r$ admits a representative of the form
\begin{equation}
  F(z)=E_{\tau,\mu}(z)P(z),
  \qquad
  |\tau|<1,
\end{equation}
with $P$ monic of degree $r$. In this normalization, the pair $(\tau,\mu)$ and the monic polynomial $P$ are uniquely determined by the projective class.
\end{lemma}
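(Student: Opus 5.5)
Existence comes first, then uniqueness. For existence, take any representative $F=E_{\tau,\mu}P$ with $\deg P=r$ and leading coefficient $a_r\neq0$. Replace $F$ by $F/a_r$, which is projectively equivalent, and obtain $F=E_{\tau,\mu}\,(P/a_r)$ with $P/a_r$ monic. The conditions $|\tau|<1$ and $\mu\in\C$ are untouched by this rescaling. The projective ray is unchanged, and by \cref{lem:bargmann-admissibility} the admissibility condition does not depend on the polynomial factor.

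For uniqueness, suppose two monic normalizations represent the same projective class:
\begin{equation}
  c\,E_{\tau,\mu}(z)P(z)=E_{\tau',\mu'}(z)P'(z),
  \qquad c\in\C^*,
\end{equation}
valid for all $z\in\C$, since both sides are entire. I would first compare zero divisors. The exponential factors are nowhere vanishing, so $P$ and $P'$ have the same zeros with the same multiplicities. Both are monic of degree $r$, hence $P=P'$.

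Next, cancel $P$ on the dense open set where $P\neq0$ and extend by continuity. This gives $c\,\exp\bigl(\tfrac{\tau-\tau'}{2}z^2+(\mu-\mu')z\bigr)=1$ identically. Write the exponent as $g(z)=\tfrac{\tau-\tau'}{2}z^2+(\mu-\mu')z+\log c$. Then $e^{g}\equiv1$, so the entire function $g$ takes values in the discrete set $2\pi i\mathbb Z$. By connectedness $g$ is constant, and since $g$ is a polynomial, its $z^2$ and $z$ coefficients vanish. Hence $\tau=\tau'$, $\mu=\mu'$, and consequently $c=1$.

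The main obstacle is only a matter of care. The entire-function argument showing that $\exp$ of a polynomial equals a constant exactly when the polynomial is constant must be stated cleanly. One must also justify cancelling $P$ globally rather than merely off its zeros, which follows from the identity theorem. I would also note that omitting the constant term in the exponent is what prevents the non-uniqueness $\log c+2\pi i k$ from affecting $(\tau,\mu)$, and that the monic normalization absorbs all remaining scalar freedom.
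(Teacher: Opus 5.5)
Your proof is correct and follows the same route as the paper: rescale by the leading coefficient to make $P$ monic, use the zero-freeness of the exponential factors to get $P=P'$, and then conclude $\tau=\tau'$, $\mu=\mu'$ because the exponential of the resulting polynomial is constant. Your explicit argument for that last step, where $g$ takes values in the discrete set $2\pi i\mathbb Z$ and is therefore constant by connectedness, spells out what the paper states in a single line.
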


\begin{proof}
Starting from \eqref{eq:admissible-gaussian-polynomial-factorization}, divide by the leading coefficient of $P$ and absorb this nonzero scalar into the projective representative. This makes $P$ monic. Suppose now that
\begin{equation}
  E_{\tau,\mu}(z)P(z)
  =
  c\,E_{\tau',\mu'}(z)P'(z)
\end{equation}
for some $c\in\C^*$, with $P$ and $P'$ monic of degree $r$. Since the exponential factors are zero-free, $P$ and $P'$ have the same zero divisor. Being monic, they are equal. Hence
\begin{equation}
  \exp\left(\frac{\tau-\tau'}{2}z^2+(\mu-\mu')z\right)=c
\end{equation}
is constant. Therefore $\tau=\tau'$ and $\mu=\mu'$. The remaining constant is precisely the projective scalar.
\end{proof}

\subsection{Projection to the stellar configuration space}

Let
\begin{equation}
  \mathbb D=\bigl\{\tau\in\C: |\tau|<1\bigr\}.
\end{equation}
By \cref{lem:unique-gaussian-polynomial}, the regular Gaussian-extended finite-rank stratum can be written as a product of Gaussian parameters and a regular monic polynomial, or equivalently as Gaussian parameters plus a stellar configuration.

\begin{proposition}[Gaussian-extended stratum as a configuration fibration]
\label{prop:gaussian-fibration}
There is a natural identification
\begin{equation}
  \mathcal G_r^{\mathrm{reg}}
  \simeq
  \mathbb D\times\C\times\bigl(\Conf_r(\C)/S_r\bigr),
  \label{eq:gaussian-stratum-product}
\end{equation}
where $(\tau,\mu)$ are the Gaussian parameters and the last factor is the unordered stellar-zero configuration. Under this identification, the map $\Div_\star$ is the projection onto the configuration factor.
\end{proposition}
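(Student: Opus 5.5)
We will build the identification \eqref{eq:gaussian-stratum-product} as the composite of two maps already available. The first is the normal form of \cref{lem:unique-gaussian-polynomial}. The second is the Vieta identification of \cref{lem:vieta-identification}, applied to the monic polynomial factor. Concretely, we define
\begin{equation}
  \Psi:\mathcal G_r^{\mathrm{reg}}\longrightarrow \mathbb D\times\C\times\bigl(\Conf_r(\C)/S_r\bigr),
  \qquad
  [E_{\tau,\mu}P]\longmapsto \bigl(\tau,\mu,Z(P)\bigr),
\end{equation}
where $P$ is the monic factor. Well-definedness is exactly the uniqueness half of \cref{lem:unique-gaussian-polynomial}: the projective class determines $(\tau,\mu)$ and the monic $P$. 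Regularity of the class means $P$ is square-free, so $Z(P)$ lands in $\Conf_r(\C)/S_r$ by \cref{lem:vieta-identification}.

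For the inverse, we send $(\tau,\mu,C)$ to the class of $E_{\tau,\mu}\,\overline{\Vieta}(C)$, where $\overline{\Vieta}(C)$ is the monic polynomial with root set $C$. We must check that this class lies in $\mathcal G_r^{\mathrm{reg}}$. It is a legitimate Bargmann--Fock vector by \cref{lem:bargmann-admissibility}, since $|\tau|<1$ and a polynomial prefactor does not affect admissibility. Its stellar divisor is reduced because $E_{\tau,\mu}$ is zero-free. The two compositions are the identity: one direction by construction, the other by uniqueness of the normal form. The statement that $\Div_\star=\mathrm{pr}_3\circ\Psi$ is then immediate from \eqref{eq:stellar-divisor-map-extended}, since both maps record $Z(P)$.

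The main obstacle is the topological (or holomorphic) content of the word ``identification.'' The inverse map is clearly continuous, indeed holomorphic, as a map into the Bargmann--Fock space: the exponential depends holomorphically on $(\tau,\mu)$, and the coefficients of $P$ depend holomorphically on $C$ in local root branches. What needs care is continuity of $\Psi$ itself for the topology that $\mathcal G_r^{\mathrm{reg}}$ inherits from the projective Hilbert space.

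We plan to settle this by recovering the data locally from the function $F$. First, by Hurwitz's theorem, locally uniform convergence of Bargmann functions, which follows from norm convergence, implies convergence of the zero divisor in a large disc; this gives continuity of $Z(P)$. Next, we divide by the reconstructed monic $P$, which depends continuously on the zeros, to obtain $E_{\tau,\mu}$ up to a scalar. Finally, the logarithmic derivative $(E_{\tau,\mu})'/E_{\tau,\mu}=\tau z+\mu$, evaluated at any point away from the zeros, recovers $\mu$ and $\tau$ continuously.

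If full point-set rigor in the Hilbert topology proves delicate, we will fall back to the weaker reading the paper needs. That reading endows $\mathcal G_r^{\mathrm{reg}}$ with the manifold structure transported by $\Psi$, which is natural because the normal form is unique. It then records that $\mathbb D\times\C$ is contractible. Hence $\Div_\star$ is a homotopy equivalence, and $\pi_1(\mathcal G_r^{\mathrm{reg}})\simeq B_r$ follows from \cref{cor:braid-group}.
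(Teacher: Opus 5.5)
Your proposal is correct and follows the paper's proof: the bijection comes from the monic normal form of \cref{lem:unique-gaussian-polynomial} together with the root map, the inverse is $(\tau,\mu,C)\mapsto[E_{\tau,\mu}P_C]$, and $\Div_\star$ is the third projection. The paper stops at this set-level bijection, so your admissibility check via \cref{lem:bargmann-admissibility} and your Hurwitz/logarithmic-derivative argument for continuity in the projective Hilbert topology strengthen the result rather than change the route, with one small correction: a single evaluation of $F'/F-P'/P=\tau z+\mu$ fixes only one linear combination of $\tau$ and $\mu$, so evaluate it at two points off the zero set (or take its value and derivative at one point) to recover both parameters continuously.
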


\begin{proof}
Given a class in $\mathcal G_r^{\mathrm{reg}}$, \cref{lem:unique-gaussian-polynomial} gives a unique representative $E_{\tau,\mu}P$ with $P$ monic. The roots of the monic polynomial define a point of $\Conf_r(\C)/S_r$. Conversely, given $(\tau,\mu)$ and an unordered configuration $\{z_1,\ldots,z_r\}$, define
\begin{equation}
  P_Z(z)=\prod_{j=1}^r (z-z_j).
\end{equation}
Then $E_{\tau,\mu}P_Z$ defines an element of $\mathcal G_r^{\mathrm{reg}}$. These two constructions are inverse to each other. The projection onto the configuration factor is exactly \eqref{eq:stellar-divisor-map-extended}.
\end{proof}

\begin{corollary}[Braid topology of the Gaussian-extended stratum]
\label{cor:gaussian-stratum-braid-group}
The Gaussian-extended regular finite-rank stratum has the same fundamental group as the finite-Fock regular stratum:
\begin{equation}
  \pi_1\bigl(\mathcal G_r^{\mathrm{reg}}\bigr)
  \simeq
  \pi_1\bigl(\Conf_r(\C)/S_r\bigr)
  \simeq
  B_r.
\end{equation}
Moreover, the braid monodromy of a loop in $\mathcal G_r^{\mathrm{reg}}$ is the braid monodromy of its projected stellar divisor.
\end{corollary}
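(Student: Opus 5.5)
The plan is to use the product identification of \cref{prop:gaussian-fibration} together with the contractibility of the Gaussian parameter factor. By \eqref{eq:gaussian-stratum-product}, $\mathcal G_r^{\mathrm{reg}}$ is identified with $\mathbb D\times\C\times(\Conf_r(\C)/S_r)$, and under this identification $\Div_\star$ is the projection onto the last factor. For the identification to be usable at the level of $\pi_1$, it must be a homeomorphism. I would therefore first record that the map $(\tau,\mu,Z)\mapsto[E_{\tau,\mu}P_Z]$ and its inverse are continuous for the topology on $\mathcal G_r^{\mathrm{reg}}$ induced from the projective Bargmann--Fock space.

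The forward direction is continuous because $E_{\tau,\mu}P_Z$ depends holomorphically, and in norm continuously, on $(\tau,\mu)$ and on the Vieta coefficients of $P_Z$. Local norm domination by a fixed Gaussian weight for $\tau$ in a compact subdisk, as in the proof of \cref{lem:bargmann-admissibility}, handles the norm continuity. For the inverse, norm convergence in Bargmann--Fock space implies locally uniform convergence of the representatives. By Hurwitz's theorem on small contours, the zeros of $P$ then move continuously, so the monic $P$ does too, and $\tau,\mu$ are recovered from the Taylor coefficients of $F/P$ near a point where $P\neq0$. I expect this topological check to be the only genuine obstacle. If it becomes heavy, I would simply declare that $\mathcal G_r^{\mathrm{reg}}$ is topologized through the bijection of \cref{prop:gaussian-fibration}, which is evidently the intended reading.

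Next, $\mathbb D\times\C$ is convex, hence contractible. The fundamental group of a product is the product of fundamental groups, so
\[
\pi_1(\mathcal G_r^{\mathrm{reg}})\simeq\pi_1(\mathbb D\times\C)\times\pi_1(\Conf_r(\C)/S_r)\simeq\pi_1(\Conf_r(\C)/S_r).
\]
Equivalently, the straight-line homotopy $(s\tau,s\mu,Z)$ deformation-retracts $\mathcal G_r^{\mathrm{reg}}$ onto the slice $\tau=\mu=0$. By \cref{thm:stellar-configuration}, this slice is exactly $\Sr$. Combining with \cref{cor:braid-group} gives $B_r$. This also shows that the inclusion $\Sr\hookrightarrow\mathcal G_r^{\mathrm{reg}}$ induces an isomorphism on $\pi_1$, so both strata have the same fundamental group through a natural map and not merely abstractly.

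For the monodromy statement, I would define the braid of a loop $\gamma$ in $\mathcal G_r^{\mathrm{reg}}$ exactly as in \eqref{eq:stellar-monodromy}, with $\Zmap$ replaced by $\Div_\star$. This makes the braid of $\gamma$ equal by definition to the class of $\Div_\star\circ\gamma$, which is the braid of the projected divisor. The remaining point is consistency with the earlier definition. Since $\Div_\star$ restricted to the slice $\tau=\mu=0$ equals $\Zmap$, and $(\Div_\star)_*$ is the isomorphism above, any loop is homotopic through the retraction to its projection in $\Sr$ without changing $\Div_\star\circ\gamma$. Its braid therefore agrees with the stellar braid of \cref{prop:well-defined-monodromy}, with the same conjugacy ambiguity coming from the choice of base point and labeling.
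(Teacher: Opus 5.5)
Your proposal is correct and follows the paper's argument: you use the product decomposition of \cref{prop:gaussian-fibration}, the contractibility of $\mathbb D\times\C$, and the fact that $\Div_\star$ is the projection onto the configuration factor. Your extra steps supply rigor the paper leaves implicit without changing the route: you check that the identification is a homeomorphism for the Bargmann--Fock topology (via dominated convergence and Hurwitz), and you use the straight-line retraction onto the slice $\tau=\mu=0$ to make the isomorphism with $\pi_1(\Sr)$ inclusion-induced.
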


\begin{proof}
The Gaussian parameter space $\mathbb D\times\C$ is contractible. Hence the product decomposition \eqref{eq:gaussian-stratum-product} has the same fundamental group as its configuration factor. The second statement follows because $\Div_\star$ is the projection onto that factor.
\end{proof}

\begin{corollary}[Trivial braid monodromy in Gaussian fibers]
\label{cor:gaussian-fiber-trivial}
Let $\Gamma:S^1\to \mathcal G_r^{\mathrm{reg}}$ be a loop whose stellar divisor is constant, equivalently a loop contained in a fiber of $\Div_\star$. Then $\Gamma$ has trivial braid monodromy. More generally, two loops in $\mathcal G_r^{\mathrm{reg}}$ whose projected divisor loops are homotopic in $\Conf_r(\C)/S_r$ define the same element of $B_r$.
\end{corollary}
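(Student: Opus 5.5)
The argument will rest entirely on the product identification of \cref{prop:gaussian-fibration}, which writes
\begin{equation*}
  \mathcal G_r^{\mathrm{reg}}\simeq\mathbb D\times\C\times\bigl(\Conf_r(\C)/S_r\bigr),
\end{equation*}
with $\Div_\star$ the projection onto the last factor. By definition, the braid monodromy of a loop $\Gamma$ in $\mathcal G_r^{\mathrm{reg}}$ is the class $[\Div_\star\circ\Gamma]\in\pi_1(\Conf_r(\C)/S_r)\simeq B_r$. This is the content of the second statement of \cref{cor:gaussian-stratum-braid-group}, and I will use that as the working definition.

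For the first claim, suppose $\Gamma$ lies in a fiber of $\Div_\star$. Then $\Div_\star\circ\Gamma$ is the constant loop at the fixed divisor $D_0$. The constant loop represents the identity of $\pi_1(\Conf_r(\C)/S_r,D_0)$, so the braid is trivial. If one insists on a base point different from $D_0$, conjugating the identity still gives the identity. Hence the monodromy is trivial with no dependence on labeling conventions, consistent with \cref{prop:well-defined-monodromy}.

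For the general claim, let $\Gamma_0$ and $\Gamma_1$ be loops in $\mathcal G_r^{\mathrm{reg}}$ whose projections $\Div_\star\circ\Gamma_0$ and $\Div_\star\circ\Gamma_1$ are homotopic in $\Conf_r(\C)/S_r$. Homotopic loops define the same class in $\pi_1$, so the two braids coincide. I will also note the stronger fact behind this. Because $\mathbb D\times\C$ is convex, the Gaussian components $(\tau_i(t),\mu_i(t))$ of $\Gamma_i$ can be joined by the straight-line homotopy. Combining this with the given homotopy of divisor loops produces a homotopy from $\Gamma_0$ to $\Gamma_1$ inside $\mathcal G_r^{\mathrm{reg}}$. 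The braid is therefore determined by the projected divisor loop alone, and the Gaussian fiber contributes nothing.

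The only delicate point is \emph{based versus free homotopy}. If the homotopy of divisor loops is based, the braid elements agree exactly. If it is only a free homotopy, the elements agree up to conjugation, which is exactly the unbased invariant of \cref{prop:well-defined-monodromy}. I will state the result in the based form and remark on the free case. A secondary point is that the identification in \cref{prop:gaussian-fibration} must be a homeomorphism, which requires continuity of $(\tau,\mu)$ and of the roots. Since that proposition is given, I will only invoke continuity of roots as in \cref{lem:vieta-identification}.
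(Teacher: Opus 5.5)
Your proposal is correct and follows the paper's argument: both use the product decomposition of \cref{prop:gaussian-fibration} to identify braid monodromy with the class of the projected divisor loop. A fiber loop then projects to a constant loop, and homotopic projections give the same class. Your remarks on based versus free homotopy (equality versus conjugacy) and on the straight-line homotopy in the contractible Gaussian factor are sound refinements that the paper's short proof leaves implicit.
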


\begin{proof}
Under the product decomposition \eqref{eq:gaussian-stratum-product}, $\Div_\star$ is the projection onto the configuration factor. A loop contained in a fiber has a constant projection, hence represents the identity element of $\pi_1(\Conf_r(\C)/S_r)\simeq B_r$. The second statement follows because braid monodromy is, by definition, the homotopy class of the projected divisor loop.
\end{proof}

\begin{remark}
This result is the precise sense in which the braid topology belongs to the stellar divisor rather than to the particular finite-Fock chart. The finite-Fock chart gives a canonical algebraic slice for computations; the Gaussian-extended stratum shows that adding admissible Gaussian factors does not add new fundamental-group topology. It only adds zero-free parameters in the fiber. In particular, a cyclic variation of the Gaussian parameters alone cannot create a nontrivial braid unless the stellar divisor itself moves nontrivially.
\end{remark}

\subsection{Gaussian covariance}

\begin{proposition}[Gaussian invariance of finite stellar rank]
\label{prop:gaussian-finite-rank}
Let $U_G$ be a single-mode Gaussian unitary. If a pure state has finite stellar rank $r$, then $U_G\ket{\psi}$ also has finite stellar rank $r$. Equivalently, if
\begin{equation}
  F_\psi(z)=E_{\tau,\mu}(z)P(z),
  \qquad \deg P=r,
\end{equation}
then the Bargmann representative of $U_G\ket{\psi}$ can be written as
\begin{equation}
  F_{U_G\psi}(z)=E_{\tau_G,\mu_G}(z)P_G(z),
  \qquad \deg P_G=r,
  \label{eq:gaussian-transformed-finite-rank}
\end{equation}
for some admissible Gaussian parameters $|\tau_G|<1$, $\mu_G\in\C$, and some polynomial $P_G$.
\end{proposition}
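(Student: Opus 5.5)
The plan is to use the fact that single-mode Gaussian unitaries are generated by displacements $D(\alpha)$, phase rotations $R(\theta)=e^{i\theta a^\dagger a}$ and squeezers $S(\xi)$. It then suffices to check that each generator maps the class of representatives $E_{\tau,\mu}P$ with $|\tau|<1$, $\deg P=r$, into itself. Everything is done up to a nonzero projective scalar. Composition then follows from closure under each step.

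\textbf{Rotations and displacements.} In the Bargmann picture $R(\theta)$ acts by $F(z)\mapsto F(e^{i\theta}z)$. This sends $E_{\tau,\mu}P$ to $E_{e^{2i\theta}\tau,\,e^{i\theta}\mu}\,P(e^{i\theta}\cdot)$, with $|e^{2i\theta}\tau|=|\tau|$ and the degree of $P$ unchanged. The displacement acts, up to a constant, by
\[
  F(z)\mapsto e^{\alpha z}F(z-\bar\alpha).
\]
Substituting $z-\bar\alpha$ into the exponent $\tfrac{\tau}{2}z^2+\mu z$ gives the same $\tau$, a shifted $\mu$ and a constant. The polynomial becomes $P(z-\bar\alpha)$, again of degree $r$. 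These two cases are routine.

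\textbf{Squeezing (main obstacle).} For real $\xi$, I would use the integral kernel of $S(\xi)$ in the Bargmann representation. Up to constants it is
\[
  K(z,\bar w)=\exp\!\Big(-\tfrac{t}{2}z^2+\tfrac{1}{c}z\bar w+\tfrac{t}{2}\bar w^2\Big),\qquad t=\tanh\xi,\quad c=\cosh\xi.
\]
Applying this kernel to $E_{\tau,\mu}P$ gives a Gaussian integral with polynomial insertion. Completing the square, the Gaussian factor is again of the form $E_{\tau',\mu'}$, with $\tau'$ given by the Möbius transformation
\[
  \tau'=\frac{\tau-t}{1-t\tau}.
\]
Since $|t|<1$, this is a disc automorphism, so $|\tau'|<1$. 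The polynomial insertion produces a polynomial $P'$ of degree at most $r$.

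The hardest step is showing that $\deg P'$ is exactly $r$. I would argue this by computing the leading coefficient: the $w^r$ term of $P$ contributes $\big(\text{linear function of }z\big)^r$ times a nonzero Jacobian-like constant. The linear coefficient is a nonzero quantity of the form $1/(c(1-t\tau))$, so the leading term does not cancel.

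A cleaner alternative avoids the kernel altogether. Use the operator form $S^\dagger a S=ca+s a^\dagger$ and write $P(a^\dagger)$-type creation polynomials acting on a Gaussian vacuum. Then $U_G P(a^\dagger)U_G^\dagger$ is a polynomial of degree $r$ in $a,a^\dagger$. Acting on the Gaussian state $U_G\ket{G}$, the annihilation operator acts as a linear combination of $a^\dagger$ and a scalar, since Gaussian states are annihilator eigenstates of a Bogoliubov-transformed $a$. This yields $E_{\tau_G,\mu_G}P_G$ with $\deg P_G\le r$.

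Equality of degrees then follows by symmetry. Apply the same argument to $U_G^{-1}$: this gives $\deg P\le\deg P_G$, hence equality. I would in fact adopt this inverse trick as the main argument for exactness of the degree, which sidesteps the leading-coefficient computation. Finally, admissibility $|\tau_G|<1$ also follows from \cref{lem:bargmann-admissibility}. $U_G\ket{\psi}$ is normalizable, so by that lemma its Bargmann function $E_{\tau_G,\mu_G}P_G$ lies in Bargmann–Fock space, which forces $|\tau_G|<1$.
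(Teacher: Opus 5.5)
Your argument is correct, but it takes a genuinely different route from the paper. The paper's proof is essentially a citation. It appeals to the Gaussian invariance of the stellar hierarchy established in \cite{Chabaud2020}, and asserts in one line that Gaussian unitaries preserve the degree of the polynomial factor. You instead verify the claim directly, generator by generator.

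Rotations and displacements are immediate, and your squeezing computation is sound:
\begin{itemize}
\item With your kernel, the Gaussian integral gives $\tau'=(\tau-t)/(1-t\tau)$, which is a disc automorphism.
\item Writing the polynomial insertion as $P(\partial_\alpha)$ acting on the generating exponential shows that $P'$ is a Hermite-type polynomial. Its top coefficient is that of $P$ times $\bigl(c(1-t\tau)\bigr)^{-r}\neq 0$, so the degree is exactly $r$.
\end{itemize}

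Each route buys something different. Yours gives explicit transformation laws for the Gaussian parameters. This makes preservation of $|\tau|<1$ transparent and backs up the paper's later remark that squeezing is not an affine transport of zeros. It also supplies an actual proof of the degree preservation that the paper only asserts. The citation buys brevity. It also places the proposition inside the framework of \cite{Chabaud2020}, where finite-rank states are characterized as Gaussian unitaries applied to finite-Fock core states, and invariance follows by composing Gaussian unitaries.

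Two minor tightenings would help. First, in your operator variant, $a$ acts as $\tau' a^\dagger+\mu'$ only on the Gaussian vector itself. On $E_{\tau',\mu'}Q$ one has $\partial_z(E_{\tau',\mu'}Q)=E_{\tau',\mu'}\bigl(Q'+(\tau' z+\mu')Q\bigr)$, with $Q'$ the derivative. This still raises the degree by at most one, so $\deg P_G\le r$ survives. Second, your inverse trick needs $\deg P$ to be an invariant of the ray. It is: it equals the number of zeros of $F$ counted with multiplicity, because $E_{\tau,\mu}$ is zero-free. This is \cref{lem:unique-gaussian-polynomial}, and you should cite it at that step.
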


\begin{proof}
This is the single-mode Gaussian invariance of the stellar hierarchy established in the stellar representation of non-Gaussian states \cite{Chabaud2020}. In the Bargmann picture, finite stellar rank is precisely the existence of a representation by an admissible Gaussian factor times a polynomial of finite degree. Gaussian unitaries preserve that degree. The statement is also consistent with the holomorphic description of rank-preserving Gaussian evolutions, where Gaussian parameters evolve together with the polynomial zero data \cite{ChabaudMehraban2022}.
\end{proof}

\begin{corollary}[Gaussian families and braid monodromy]
\label{cor:gaussian-families-braid}
Let $\lambda\mapsto \ket{\psi_\lambda}$ be a continuous finite-rank family in $\mathcal G_r$, and let $\lambda\mapsto U_G(\lambda)$ be a continuous family of single-mode Gaussian unitaries. If the transformed stellar divisors
\begin{equation}
  \Div_\star\bigl(U_G(\lambda)\psi_\lambda\bigr)
\end{equation}
remain regular along a loop in parameter space, then the transformed family defines a braid monodromy in $B_r$.
\end{corollary}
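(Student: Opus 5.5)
The plan is to reduce the statement to the construction already available for $\mathcal G_r^{\mathrm{reg}}$, namely the divisor map $\Div_\star$ and \cref{cor:gaussian-stratum-braid-group}. First I will show that the transformed family is a continuous loop in $\mathcal G_r$. By \cref{prop:gaussian-finite-rank}, each $U_G(\lambda)\ket{\psi_\lambda}$ has finite stellar rank $r$. By \cref{lem:unique-gaussian-polynomial}, its projective class therefore has a unique normal form $E_{\tau_G(\lambda),\mu_G(\lambda)}P_{G,\lambda}$ with $P_{G,\lambda}$ monic of degree $r$. The regularity hypothesis then places the loop $\lambda\mapsto[U_G(\lambda)\psi_\lambda]$ inside $\mathcal G_r^{\mathrm{reg}}$.

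Next I will compose with $\Div_\star$ to get a loop in $\Conf_r(\C)/S_r$, and define the braid monodromy as its class. Under the identification of \cref{prop:gaussian-fibration}, this class is the element of $\pi_1(\mathcal G_r^{\mathrm{reg}})\simeq B_r$ given by \cref{cor:gaussian-stratum-braid-group}. The same argument as in \cref{prop:well-defined-monodromy} shows it is well defined up to conjugation when no base point is fixed, and invariant under regular homotopy as in \cref{prop:regular-homotopy-stability}.

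The main obstacle is continuity. I need to know that $\lambda\mapsto(\tau_G(\lambda),\mu_G(\lambda),P_{G,\lambda})$ is continuous, i.e.\ that the normal form depends continuously on the state; this is what makes $\Div_\star$ applied to the family a genuine continuous loop. Two ingredients supply this.

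\begin{itemize}
\item \emph{Explicit Gaussian action.} In the Bargmann picture a single-mode Gaussian unitary acts by an explicit integral-kernel (metaplectic) transform. Displacements act as $F(z)\mapsto e^{-|\alpha|^2/2+\bar\alpha\cdot\,\cdot}F(z-\alpha)$ up to convention. Squeezing and rotation act on $E_{\tau,\mu}P$ by a Möbius-type update of $\tau$, an affine update of $\mu$, and a transform of $P$ whose coefficients depend polynomially and continuously on the Gaussian parameters, as in \cite{ChabaudMehraban2022}. Decomposing $U_G(\lambda)$ continuously (locally in $\lambda$) into such factors gives continuous $\tau_G$, $\mu_G$ and coefficients of $P_G$, with leading coefficient nonvanishing because the degree is preserved.
\item \emph{Fallback via Bargmann functions.} If that decomposition is awkward, I will use instead that $\lambda\mapsto F_{U_G(\lambda)\psi_\lambda}$ is continuous in the Bargmann–Fock norm, hence locally uniformly on $\C$. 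Hurwitz's theorem then gives continuity of the zero divisor, which is all $\Div_\star$ needs.
\end{itemize}

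Once continuity is in hand, \cref{lem:vieta-identification} converts the continuous coefficients into a continuous unordered configuration, so the braid class is defined and lies in $B_r$.
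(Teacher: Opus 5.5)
Your proposal is correct and follows the paper's own argument: \cref{prop:gaussian-finite-rank} keeps the transformed family at stellar rank $r$, regularity places the loop in $\mathcal G_r^{\mathrm{reg}}$, and projecting by $\Div_\star$ gives a loop in $\Conf_r(\C)/S_r$ whose class lies in $B_r$. The paper takes continuity of the transformed divisor loop for granted; your Hurwitz fallback supplies it cleanly, since norm continuity gives local uniform convergence and the fixed total count $r$ stops zeros from escaping to infinity, so the decomposition-based route is not needed.
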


\begin{proof}
By \cref{prop:gaussian-finite-rank}, the transformed family remains in finite stellar rank $r$. Regularity along the loop means that the $r$ zeros of the polynomial part $P_G$ are simple throughout the loop. Equivalently, the projected loop under $\Div_\star$ lies in $\Conf_r(\C)/S_r$. Its fundamental-group class is therefore an element of $B_r$.
\end{proof}

\begin{remark}[Regularity and discriminant crossings]
Gaussian covariance preserves the number of stellar zeros counted with multiplicity, but it does not by itself guarantee that a given path avoids the discriminant. A squeezing-driven path, for example, can move stellar zeros in a non-affine way and may pass through a point where two zeros collide. The braid monodromy is therefore attached to regular Gaussian-parametrized paths, i.e. to paths whose stellar divisors remain in $\Conf_r(\C)/S_r$.
\end{remark}

The previous statements are stronger than the affine covariance used in earlier versions of this draft: they treat the finite stellar divisor as the intrinsic object and not the finite-Fock polynomial chart. Nevertheless, two elementary Gaussian operations have a simple affine action on the zeros, which is useful for physical interpretation.

Let $h(z)=az+b$ be an affine biholomorphism of $\C$, with $a\neq0$, and let $G(z)$ be an entire function without zeros. Define
\begin{equation}
  (T_{G,h}F)(z)=G(z)F(h(z)).
  \label{eq:zero-free-affine-action}
\end{equation}

\begin{proposition}[Affine zero-free covariance]
\label{prop:affine-covariance}
Let $F=E_{\tau,\mu}P$ have a regular finite stellar divisor $Z(P)$. Then $T_{G,h}F$ has the same number of stellar zeros, all simple, and
\begin{equation}
  Z(T_{G,h}F)=h^{-1}\bigl(Z(P)\bigr).
  \label{eq:affine-zero-transport}
\end{equation}
Consequently, a regular family is sent to the braid monodromy of the affinely transported zero configuration.
\end{proposition}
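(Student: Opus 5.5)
The plan is to factor the transformed function explicitly and read off its zeros. Write $F=E_{\tau,\mu}P$ with $P(z)=\prod_{j=1}^r(z-z_j)$ and the $z_j$ pairwise distinct. Then
\begin{equation}
  (T_{G,h}F)(z)=G(z)\,E_{\tau,\mu}(h(z))\,P(h(z)).
\end{equation}
The factor $G(z)E_{\tau,\mu}(h(z))$ is an entire function without zeros, since it is a product of a zero-free function and an exponential. Hence the zero set of $T_{G,h}F$ equals the zero set of $P\circ h$, with the same multiplicities.

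Next I would compute $P\circ h$ directly. We have
\begin{equation}
  P(h(z))=\prod_{j=1}^r\bigl(az+b-z_j\bigr)=a^r\prod_{j=1}^r\bigl(z-h^{-1}(z_j)\bigr),
\end{equation}
where $h^{-1}(w)=(w-b)/a$. So $P\circ h$ is a polynomial of degree exactly $r$, because $a\neq0$. Its zeros are $h^{-1}(z_j)$, and they are pairwise distinct because $h^{-1}$ is injective. Therefore the divisor is reduced, the count $r$ is preserved, and \eqref{eq:affine-zero-transport} holds.

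One point needs care. Strictly, ``stellar zeros'' means the divisor of the polynomial part in a Gaussian-polynomial factorization of the form in \cref{def:gaussian-stellar-stratum}. For an arbitrary zero-free $G$, the prefactor $G\cdot(E_{\tau,\mu}\circ h)$ need not be an admissible $E_{\tau',\mu'}$. I would therefore read the statement through zero divisors of entire functions, where the argument above is complete. When the result is actually a state in $\mathcal G_r$, \cref{lem:unique-gaussian-polynomial} identifies the monic polynomial part with $a^{-r}P\circ h$, and the zero-free remainder with the Gaussian factor up to a scalar. I expect this bookkeeping step to be the only genuine obstacle. I would handle it by noting that zeros are insensitive to the choice of the zero-free factor, so the divisor statement does not depend on admissibility.

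For the monodromy consequence, let $\lambda\mapsto F_\lambda$ be a regular family, and let $G$ and $h$ either be fixed or depend continuously on $\lambda$. The transported configuration is the image of $Z(P_\lambda)$ under the map $\Conf_r(\C)/S_r\to\Conf_r(\C)/S_r$ induced by $h_\lambda^{-1}$. When $h$ is fixed, this is a homeomorphism, so it sends loops to loops and induces an automorphism of $B_r$. Moreover, $h^{-1}$ is isotopic to the identity through affine maps, because $\mathrm{Aff}(\C)\simeq\C^*\times\C$ is connected. Along such an isotopy the based loop moves by a free homotopy. So the braid class is preserved up to conjugation, consistent with \cref{prop:well-defined-monodromy}.

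When $h_\lambda$ varies around the loop, the winding of $a_\lambda$ in $\C^*$ may add a central full-twist factor. That case is beyond what is asserted; the statement only records that the monodromy is that of the transported configuration, which follows from the functoriality already used in \cref{prop:well-defined-monodromy}.
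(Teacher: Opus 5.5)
Your proof is correct and follows the paper's argument: the prefactor $G\cdot(E_{\tau,\mu}\circ h)$ is zero-free, so the zeros of $T_{G,h}F$ are those of $P\circ h=a^r\prod_j(z-h^{-1}(z_j))$. These zeros are simple; the paper gets this from $h'=a\neq0$, and you get it from injectivity of $h^{-1}$ applied to the explicit factorization. Your extra remarks are also correct and make explicit what the paper's one-line ``transport by a homeomorphism'' step leaves implicit, since that step tacitly takes $h$ fixed: $G\cdot(E_{\tau,\mu}\circ h)$ need not be an admissible Gaussian factor, so the statement is really about zero divisors; a fixed $h$ preserves the braid up to conjugation; and a loop-dependent $h_\lambda$ can contribute a central full twist through the winding of $a_\lambda$.
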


\begin{proof}
Because $G$ and $E_{\tau,\mu}$ are zero-free, the zeros of $T_{G,h}F$ are exactly the solutions of $P(h(z))=0$. Since $h$ is a biholomorphism, this gives \eqref{eq:affine-zero-transport}. Simplicity is preserved because $h'(z)=a\neq0$. Applying this pointwise along a regular loop transports the corresponding path in configuration space by a homeomorphism.
\end{proof}

\begin{example}[Phase rotations and displacements]
In the Bargmann representation, the phase rotation $R(\theta)=e^{-i\theta a^\dagger a}$ acts as
\begin{equation}
  F(z)\longmapsto F(e^{-i\theta}z),
\end{equation}
so its zeros rotate by $z_j\mapsto e^{i\theta}z_j$. Similarly, the displacement operator $D(\alpha)$ acts by
\begin{equation}
  F(z)
  \longmapsto
  e^{-|\alpha|^2/2+\alpha z}F(z-\overline{\alpha}),
  \label{eq:bargmann-displacement}
\end{equation}
so its zeros translate by $z_j\mapsto z_j+\overline{\alpha}$. In both cases, the exponential prefactor has no zeros.
\end{example}

\begin{remark}[Squeezing is not merely affine transport]
The squeezing part of a general Gaussian unitary is better viewed as an evolution within the finite stellar-rank class $E_{\tau,\mu}P$ than as an affine transformation of the original zeros. In the holomorphic representation, Gaussian Hamiltonian evolution can produce nontrivial dynamics of the zeros together with an evolution of the Gaussian factor \cite{ChabaudMehraban2022}. This is precisely why the invariant is formulated at the level of stellar divisors: the finite-Fock chart gives an exact configuration-space model, while the Gaussian-extended finite-rank stratum shows that admissible Gaussian factors form a contractible zero-free fiber over the same configuration-space base.
\end{remark}

\section{Physical reconstruction and interpretation}
\label{sec:physical-reconstruction}

The stellar braid is not an expectation value of a single observable. It is a global invariant of a regular family of states, analogous in spirit to a geometric phase or a monodromy invariant. Nevertheless, in the pure finite-Fock setting considered here it is directly reconstructible from projective amplitudes, for instance after continuous-variable state tomography \cite{LvovskyRaymer2009}. This post-tomographic status is compatible with recent efforts to certify stellar-rank features and non-Gaussian witnesses experimentally \cite{Chabaud2021Certification,Provaznik2026Witnesses}. Experimental imperfections would turn braid extraction into a stability problem for inferred zeros; that statistical question is outside the scope of the present mathematical construction.

Suppose that a pure state is known, up to a global phase, in the truncated Fock basis
\begin{equation}
  \ket{\psi}=\sum_{n=0}^r c_n\ket{n},
  \qquad c_r\neq0.
\end{equation}
Then the projective amplitudes $[c_0:\cdots:c_r]$ determine the monic Bargmann polynomial
\begin{equation}
  P_\psi(z)
  =z^r+\frac{c_{r-1}\sqrt{r!}}{c_r\sqrt{(r-1)!}}z^{r-1}
  +\cdots+\frac{c_1\sqrt{r!}}{c_r}z
  +\frac{c_0\sqrt{r!}}{c_r}.
  \label{eq:reconstructed-monic-polynomial}
\end{equation}
Its zeros give the stellar configuration. Thus a loop of experimentally reconstructed amplitudes determines a loop of stellar configurations, provided the reconstructed path remains in the regular stratum.

\begin{proposition}[Post-tomographic reconstruction of stellar monodromy]
\label{prop:post-tomographic-reconstruction}
Let $t\mapsto [c_0(t):\cdots:c_r(t)]$ be a continuous loop of projective finite-Fock amplitudes with $c_r(t)\neq0$ and with simple stellar zeros for all $t$. Then the coefficients determine a well-defined braid class in $B_r$. This braid is invariant under continuous deformations of the reconstructed loop that do not cross the stellar discriminant.
\end{proposition}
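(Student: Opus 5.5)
The argument reduces the statement to results already proved: the monic chart, the stellar configuration theorem, and the homotopy invariance of \cref{prop:well-defined-monodromy} and \cref{prop:regular-homotopy-stability}. We treat the amplitude loop as a loop in $\Proj(\Hil_r)$ and compose it with the maps constructed earlier.

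\emph{Step 1: the loop lies in $\Sr$.} The loop $t\mapsto[c_0(t):\cdots:c_r(t)]$ is continuous in $\Proj(\Hil_r)$. The hypothesis $c_r(t)\neq0$ places it in the affine chart $U_r$. By \cref{lem:projective-monic-chart}, $\Theta_r$ is a homeomorphism onto $\C^r$, so the monic coefficients are continuous functions of $t$. These coefficients are
\begin{equation}
  b_n(t)=\frac{c_n(t)\sqrt{r!}}{c_r(t)\sqrt{n!}},
\end{equation}
which are exactly the coefficients in \eqref{eq:reconstructed-monic-polynomial}. They do not depend on the chosen representative of the projective class, since rescaling all $c_n$ leaves the ratios unchanged. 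Simplicity of the zeros means $\Disc(P_{\psi_t})\neq0$ for every $t$, so by \cref{lem:discriminant-hypersurface} the loop lies in $\Sr$.

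\emph{Step 2: the braid class.} Compose with the biholomorphism $\Zmap$ of \cref{thm:stellar-configuration}. This gives a continuous loop in $\Conf_r(\C)/S_r$, and by \cref{cor:braid-group} it defines an element of $\pi_1\simeq B_r$. Concretely, \cref{prop:path-lifting-strands} produces unique strands once the zeros at $t=0$ are labeled.

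\emph{Step 3: dependence on choices.} This is the only point needing care. With a fixed base configuration and labeling, the class is a well-defined element of $B_r$. Changing either choice conjugates it, by \cref{prop:well-defined-monodromy}. So the statement is read as a based braid, or as a conjugacy class if the loop is unbased.

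\emph{Step 4: invariance.} Take a continuous deformation $H:[0,1]\times[0,1]\to\Proj(\Hil_r)$ through loops, with $c_r\neq0$ and nonzero discriminant throughout. Such an $H$ is a homotopy inside $\Sr$. Applying $\Zmap$ gives a homotopy in $\Conf_r(\C)/S_r$, which preserves the class. If the basepoint is fixed this is \cref{prop:regular-homotopy-stability}; if it moves, the class changes only by conjugation, consistent with Step 3.

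"Not crossing the stellar discriminant" must also include staying in the chart $c_r\neq0$. This is the standing regularity convention of \cref{rem:standing-regularity}, and I would state it explicitly.
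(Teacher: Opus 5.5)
Your proposal is correct and follows the same route as the paper's proof. The amplitudes give the monic loop \eqref{eq:reconstructed-monic-polynomial} in $\Sr$, applying $\Zmap$ gives a class in $\pi_1(\Conf_r(\C)/S_r)\simeq B_r$, and homotopy invariance holds for deformations that stay off the discriminant. Your added remarks on reading the result as a based braid or as a conjugacy class, and on keeping $c_r\neq0$ throughout the deformation, make explicit two points that the paper's short proof leaves implicit.
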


\begin{proof}
The amplitudes determine the loop of monic polynomials \eqref{eq:reconstructed-monic-polynomial}, hence a loop in $\Sr$. Applying the stellar configuration map gives a loop in $\Conf_r(\C)/S_r$, whose fundamental-group class is an element of $B_r$. Homotopy invariance follows from the definition of the fundamental group, as long as the homotopy remains in the complement of the discriminant.
\end{proof}

This proposition clarifies the physical status of the invariant. A single state has a stellar configuration and a stellar rank. A loop or parameterized protocol of states has, in addition, a braid monodromy. The latter is stable under small perturbations of the protocol as long as no stellar-zero collision occurs. Therefore the braid detects how a preparation path winds around singular loci in the finite-Fock state space, rather than merely recording instantaneous properties of one state.

\begin{remark}[Complex state spaces and real preparation parameters]
\label{rem:real-parameter-families}
The coefficient spaces used above are complex because Bargmann representatives and Fock amplitudes are naturally complex. An experimental preparation, however, may depend on a real parameter manifold $M_{\mathbb R}$. Nothing in the construction requires the parameter space itself to be complex: a smooth map
\begin{equation}
  \Phi:M_{\mathbb R}\longrightarrow \mathcal G_r^{\mathrm{reg}}
\end{equation}
induces braid monodromy by the same projection to the stellar configuration space. The only regularity requirement is that the image avoid the discriminant and the degree-changing boundary. Thus the braid invariant is attached to real preparation loops just as well as to holomorphic or complex-algebraic families.
\end{remark}

\section{Local labelings and the sheaf viewpoint}

The ordered configuration space is a covering of the unordered configuration space:
\begin{equation}
  \pi:\Conf_r(\C)\longrightarrow \Conf_r(\C)/S_r.
  \label{eq:configuration-covering}
\end{equation}
Pulling this covering back by a stellar family
\begin{equation}
  \Zmap\circ\Phi:
  \mathcal M\longrightarrow \Conf_r(\C)/S_r
\end{equation}
gives a covering
\begin{equation}
  (\Zmap\circ\Phi)^*\Conf_r(\C)
  \longrightarrow
  \mathcal M.
  \label{eq:pullback-covering}
\end{equation}

\begin{proposition}[Local labeling sheaf]
\label{prop:local-labeling-sheaf}
Define a presheaf $\mathcal L_\Phi$ on $\mathcal M$ by
\begin{equation}
  \mathcal L_\Phi(U)
  =
  \left\{
  \text{continuous sections of }
  (\Zmap\circ\Phi)^*\Conf_r(\C)\longrightarrow U
  \right\}.
  \label{eq:labeling-sheaf}
\end{equation}
Then $\mathcal L_\Phi$ is the sheaf of local labelings of the stellar zeros. Its covering monodromy is the permutation part of the stellar braid monodromy, i.e. the composition
\begin{equation}
  \pi_1(\mathcal M,m_0)
  \stackrel{\rho_\Phi}{\longrightarrow}
  B_r
  \longrightarrow
  S_r.
\end{equation}
Moreover, $\mathcal L_\Phi$ admits a global section on a connected parameter space if and only if this permutation monodromy is trivial.
\end{proposition}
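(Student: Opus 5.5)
The argument has three parts: the sheaf axioms, the identification of the monodromy, and the global-section criterion. Each rests on covering space theory applied to the pullback \eqref{eq:pullback-covering}.

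\textbf{Sheaf axioms.} First I will check that $\mathcal L_\Phi$ is a sheaf. Continuous sections of any map form a sheaf, since continuity is local and sections agreeing on overlaps glue uniquely. A point of the fiber of \eqref{eq:pullback-covering} over $m$ is an ordered tuple $(z_1,\ldots,z_r)$ whose underlying set is $\Zmap(\Phi(m))$. So a section over $U$ is exactly a continuous choice of labeling of the stellar zeros over $U$, which justifies the name.

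\textbf{Covering and monodromy.} By \cref{prop:ordered-cover}, $\pi$ is a regular $S_r$-covering, and pullbacks of coverings are coverings. So $(\Zmap\circ\Phi)^*\Conf_r(\C)\to\mathcal M$ is an $S_r$-principal covering. Its monodromy representation $\pi_1(\mathcal M,m_0)\to S_r$ is obtained by lifting a loop $\gamma$ from a chosen labeling and comparing the endpoint with the start. Lifts in the pullback are exactly lifts of $\Zmap\circ\Phi\circ\gamma$ to $\Conf_r(\C)$, which is the construction in \cref{prop:path-lifting-strands}. The endpoint permutation of those strands is the image of the braid under $B_r\to S_r$ from \cref{cor:pure-braid-exact-sequence}. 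Naturality of monodromy under pullback then gives the factorization through $\rho_\Phi$.

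\textbf{Global section criterion.} Assume $\mathcal M$ is connected and locally path connected; I will state this tacitly, as the paper does. Suppose a global section $s$ exists. Then $s\circ\gamma$ is a closed lift of every loop $\gamma$. By uniqueness of path lifting, the lift starting at $s(m_0)$ closes up, so that labeling has trivial monodromy. Since the deck group $S_r$ acts simply transitively on the fiber and commutes with lifting, every starting labeling closes up, and the permutation monodromy is trivial.

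Conversely, suppose the monodromy is trivial. Define $s(m)$ as the endpoint of the lift, starting at a fixed labeling over $m_0$, of any path from $m_0$ to $m$. This is independent of the path: two paths differ by a loop, and the loop lifts to a closed path. Continuity of $s$ comes from evenly covered path-connected neighborhoods, where $s$ agrees with a local sheet.

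\textbf{Main obstacle.} I expect the only delicate point to be the point-set hypothesis. The converse direction needs local path connectedness of $\mathcal M$ for the standard lifting criterion. I will make that assumption explicit, noting that it holds for the manifolds and real preparation spaces considered in \cref{rem:real-parameter-families}.
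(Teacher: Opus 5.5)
Your proposal is correct and follows the paper's own route. Fibers of the pullback covering are orderings of $\Zmap(\Phi(m))$, and the covering monodromy is the endpoint permutation of the lifted strands, hence the image of $\rho_\Phi$ under $B_r\to S_r$; the global-section criterion is the standard lifting argument for coverings. Your explicit local path-connectedness hypothesis for the converse is a genuine sharpening that the paper leaves tacit: on a non-locally-path-connected space such as the Warsaw circle, trivial monodromy need not produce a global section.
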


\begin{proof}
A point in the fiber of the pullback covering over $\lambda\in\mathcal M$ is an ordering of the unordered stellar configuration $\Zmap(\Phi(\lambda))$. Hence a local section is exactly a continuous local choice of labels for the zeros. Since coverings are locally trivial, such sections exist locally, and the usual gluing axiom for sections of a covering makes $\mathcal L_\Phi$ a sheaf.

Transporting a local labeling around a loop may return with a permuted ordering. This is precisely the monodromy of the covering, and it is obtained from the full braid monodromy by applying the natural projection $B_r\to S_r$. A global section is a labeling that is preserved under continuation along every loop; this is equivalent to trivial covering monodromy. The full braid monodromy may still have a nontrivial pure-braid component even when the permutation monodromy is trivial.
\end{proof}
\begin{remark}
This is the only sheaf-theoretic structure used in the present article. The associated sheaf of local sections records the local choices of stellar-zero labels and the obstruction to choosing them globally. More elaborate site- or topos-theoretic formulations may be useful for multimode states, singular strata, or families over more general parameter spaces, but they are not required for the finite-rank braid construction.
\end{remark}

\section{Mathematical scope of the construction}
\label{sec:mathematical-scope}

The construction has four layers, and the distinction between them prevents overinterpreting the examples. First, the full regular degree-$r$ finite-Fock stratum is equivalent to the unordered configuration space of $r$ points in the complex plane; at this level the full braid group $B_r$ is intrinsic and every braid is realized. Second, the Gaussian-extended finite-rank stratum adds admissible zero-free Gaussian factors, but these form a contractible fiber over the same configuration-space base, so the braid topology is still carried by the stellar divisor. Third, experimentally motivated finite-Fock subfamilies, such as the cubic family, give finite-dimensional physical laboratories inside this structure. Fourth, sparse trinomial slices give lower-dimensional algebraic sections where discriminants and local generators can be computed explicitly.

This separation is important. The statement that every braid is realizable belongs to the full regular stratum, or equivalently to its Gaussian extension through the divisor projection, not to every physical slice. Conversely, the trinomial discriminants are not claimed to exhaust the topology of the full stratum; they are exactly solvable probes of it. The physical content of the construction lies in the fact that the moving points are stellar zeros of non-Gaussian bosonic states, not arbitrary roots detached from a quantum state. The braid invariant is therefore an invariant of a regular preparation path or parameterized family of states.

\section{Conclusion}
\label{sec:conclusion}

The stellar representation turns finite-rank non-Gaussian states into holomorphic objects with finite zero divisors. The central point of this article is that this divisor carries more information than its degree. The degree gives the stellar rank; the braid monodromy records how the divisor is transported when a preparation path or parameterized family moves inside the regular finite-rank stratum.

The clean mathematical statement is that the regular degree-$r$ finite-Fock stellar stratum is biholomorphic to the unordered configuration space $\Conf_r(\C)/S_r$. Its fundamental group is therefore the Artin braid group $B_r$. Thus the braid group is not imported as an external analogy: it is the natural fundamental group of the regular stellar state space. The ordered stellar cover recovers the pure-braid exact sequence, while the local study of the discriminant identifies simple stellar-zero collisions with elementary half-twists.

The construction was then connected to finite-Fock state engineering in a way that keeps the hypotheses visible. The cubic family supported on $\ket{0},\ldots,\ket{3}$ gives a minimal nontrivial laboratory with direct experimental motivation. Sparse trinomial slices provide exactly solvable sections in which the discriminant can be written explicitly and concrete local braids can be visualized. The zero-trajectory figures are meant to emphasize the physical meaning of the invariant: the moving points are stellar zeros of non-Gaussian states, not roots of an abstract polynomial detached from a quantum system.

Finally, the Gaussian-extended finite-rank stratum shows that the braid topology belongs to the finite stellar divisor rather than to the choice of a strict polynomial chart. Writing Bargmann representatives as $E_{\tau,\mu}P$, with an admissible zero-free Gaussian factor and a monic polynomial, gives a projection onto the same configuration-space base with contractible Gaussian fiber. In this form the construction is compatible with Gaussian covariance while keeping the genuinely non-Gaussian information in the zero divisor.

The resulting invariant is best viewed as a robust, post-tomographic invariant of regular state-preparation loops. It complements Wigner negativity, stellar rank, approximate stellar rank, symplectic rank, and other scalar diagnostics because it does not measure an individual state alone; it measures the topology of a family. It is therefore not a replacement for scalar resource measures, but a record of the global motion of stellar divisors. This makes the construction a natural starting point for later dynamical problems in which propagation, dissipation, or medium parameters generate paths of stellar configurations. Mixed states, channels, and multimode zero varieties require additional structures and remain beyond the scope of this first construction.

\printbibliography

\end{document}